\documentclass[reprint,nofootinbib,amsmath,amssymb,aps,prd,floatfix,longbibliography]{revtex4-2}
\usepackage{mathtools}
\usepackage[english]{babel}
\usepackage{amsthm}
\usepackage{graphicx}
\usepackage{dcolumn}
\usepackage{bm}
\usepackage{silence}
\WarningFilter{nameref}{The definition of \label has changed}
\WarningFilter{caption}{Unknown document class}
\usepackage{hyperref}
\usepackage{microtype}

\pdfsuppresswarningpagegroup=1
\hbadness=10000
\hfuzz=5pt

\makeatletter
\let\REVTeX@makecaption\@makecaption
\usepackage{subcaption}
\let\@makecaption\REVTeX@makecaption
\makeatother

\newtheorem{definition}{Definition}
\newtheorem{theorem}{Theorem}[section]
\newtheorem{proposition}{Proposition}

\newcommand\kink{\operatorname{kink}}
\newcommand\M{\mathrm{M}}
\newcommand\const{\text{const}}

\renewcommand\d{\mathrm{d}}
\renewcommand\i{\operatorname{i}}

\begin{document}

\preprint{}

\title{Wormhole Nucleation via Topological Surgery in Lorentzian Geometry}

\author{Alessandro Pisana}
\email{apisana@brocku.ca}
\affiliation{Department of Physics\\\mbox{Brock University, 1812 Sir Isaac Brock Way, St.~Catharines, Ontario, L2S 3A1, Canada}}

\author{Barak Shoshany}
\email{bshoshany@brocku.ca}
\homepage{https://baraksh.com/}
\thanks{corresponding author}
\affiliation{Department of Physics\\\mbox{Brock University, 1812 Sir Isaac Brock Way, St.~Catharines, Ontario, L2S 3A1, Canada}}

\author{Stathis Antoniou}
\email{stathis.antoniou@gmail.com}
\affiliation{School of Applied Mathematical and Physical Sciences\\National Technical University of Athens, Zografou Campus, GR-157 80, Athens, Greece}

\author{Louis H. Kauffman}
\email{loukau@gmail.com}
\affiliation{Department of Mathematics, Statistics and Computer Science\\University of Illinois at Chicago, 851 South Morgan Street, Chicago, IL, 60607-7045}
\affiliation{}
\affiliation{International Institute for Sustainability with Knotted Chiral Meta Matter (WPI-SKCM2)\\Hiroshima University, 1-3-1 Kagamiyama, Higashi-Hiroshima, Hiroshima 739-8526, Japan}

\author{Sofia Lambropoulou}
\email{sofia@math.ntua.gr}
\affiliation{School of Applied Mathematical and Physical Sciences\\National Technical University of Athens, Zografou Campus, GR-157 80, Athens, Greece }

\begin{abstract}
We construct a model for the nucleation of a wormhole within a Lorentzian spacetime by employing techniques from topological surgery and Morse theory. In our framework, a 0-surgery process describes the neighborhood of the nucleation point inside a compact region of spacetime, yielding a singular Lorentzian cobordism that connects two spacelike regions with different topologies. To avoid the singularity at the critical point of the Morse function, we employ the Misner trick of taking a connected sum with a closed 4-manifold---namely $\mathbb{CP}^{2}$---to obtain an everywhere nondegenerate Lorentzian metric. This connected sum replaces the naked singularity with a region containing closed timelike curves. The obtained spacetime is nonsingular, but violates all the standard energy conditions. Our construction, thus, shows that a wormhole can be ``created'' without singularities in classical general relativity.
\end{abstract}

\maketitle

\section{Introduction}

\subsection{Motivation}

The roots of the idea that non-linearities in Einstein's field equations could give rise to topologically nontrivial configurations of spacetime trace back to Wheeler and Misner \cite{wheeler, MISNER1957525}, who introduced the \emph{geon}, an isolated gravitational-electromagnetic entity such that classical mass and unquantized electric charge can be recovered from its non-trivial topological features---obtaining \emph{``mass without mass''} and \emph{``charge without charge''}.

Among the simplest examples of geons are wormholes: handles attached to spacetime. These have been extensively investigated as theoretical solutions of Einstein field equations \cite{VisserBook, LoboBook, ShoshanyFTLTT}. However, all known classical Lorentzian wormhole solutions are eternal, since their nucleation or annihilation is usually relegated to quantum gravity effects \cite{HAWKING1978349}. Our work seeks to push the boundaries of the classical theory and explore whether these topologically non-trivial configurations can arise from trivial ones in purely classical settings.

Important constraints on this investigation are imposed by two theorems due to Geroch \cite{gerochtopology}. The first theorem states that it is always possible to interpolate between two closed $3$-manifolds via a Lorentzian cobordism. However, if the two $3$-manifolds are non-homeomorphic, Geroch's second theorem requires that the spacetime contains either singularities or closed timelike curves (CTCs).

Two fundamental perspectives emerge in addressing classical topology changes. The first prioritizes causality and leads to scenarios where the interpolating spacetimes include regions with metric degeneracies \cite{horowitz1991topology, Horowitz_1991}, Morse singularities \cite{Louko_1997, Dowker_2000, dowker2002topology}, or instantons \cite{10.1143/PTP.65.1443, HOSOYA1989117, REY1990146, PhysRevD.50.2662, PhysRevD.90.124015, battara}. The second emphasizes the equivalence principle and permits topology changes at the expense of causality \cite{BordeImpossible, borde1994topology}, leading to conceptual issues such as causality paradoxes \cite{Shoshany_Hauser_2020, Shoshany_Wogan_2021, ShoshanyStober}.

In this work we adopt the latter perspective, maintaining an agnostic viewpoint on whatever chronology and energy conditions should be imposed on the spacetime---especially since CTCs may well be an inevitable byproduct of the wormholes themselves \cite{wormholestimemachinesWEC}, as well as other forms of superluminal travel \cite{ShoshanySnodgrass}.

Specifically, we implement topological surgery techniques \cite{antoniou2018topological, REINHART1963173} and results from Morse theory \cite{kostantinov, MYuKonstantinov_1986, HFDowker_1998} for constructing an explicit Lorentzian cobordism, obtaining an everywhere non-degenerate Lorentzian metric in the neighborhood of the wormhole nucleation point, addressing the challenges of regularizing the gradients of Morse functions \cite{Yodzis1972}. The conclusions we present, apart from the investigations of the energy conditions on these spacetimes, are purely kinematic and thus applicable not only to general relativity but also to its various modifications.

Our approach follows the more general framework of Hawking and Borde \cite{chronologyprotection, borde1994topology, BordeImpossible}, which relaxes the closure condition for the initial and final $3$-manifolds. In this setup, topological transitions are physically constrained to occur within a compact timelike cylinder, where the transition region has a boundary that can vary from timelike to spacelike.

The number of times this boundary changes its causal character constitutes additional boundary data that can be encoded as a topological invariant called the \emph{kink number} \cite{FINKELSTEIN1959230, kinkextensions}. This invariant is crucial in determining whether a spin structure exists on the spacetime, and, in turn, gives rise to selection rules governing the possible topology changing manifolds \cite{selectionrules}. Notably, these rules do not appear to prohibit the creation of wormholes, as long as they are created in pairs \cite{Gibbons_1993, gibbons2011topology}.

We do not address the difficulties that compactly generated Cauchy horizons pose for quantum field theories---most notably the divergent renormalized stress‑energy tensor that signals explosive particle creation \cite{Anderson1986, QFTWald, PhysRevD.58.023501}. Such issues, though important, lie beyond the scope of the present classical analysis.

\subsection{Outline}

In Section~\ref{sec:topologicalsurgery}, we summarize the key techniques and results of topological surgery via Morse functions. These techniques allow us to select the manifold on which the Lorentzian metric describing the nucleating wormhole is built.

In Section~\ref{sec:topologychange} we review topology changes in classical general relativity, focusing on Geroch's theorems and their extensions to our framework where the transitions occur within a compact spacetime region. Within this setup we analyze the topological obstructions to the existence of a Lorentzian structure arising from the kink number and the Poincaré–Hopf theorem. We then show the necessity of choosing $\mathbb{CP}^{2}$ as the closed $4$‑manifold that desingularizes our otherwise singular Lorentzian cobordism.

In Section~\ref{sec:wormholenucleation} we present the central result of this work: the construction of an explicit, regular Lorentzian metric in the neighborhood of the transition point. The construction requires singular Lorentzian metrics on both $\mathbb{CP}^{2}$ and the spacetime undergoing the topology change, and we discuss the resulting violations of the energy and chronology conditions in these two singular spacetimes.

In Section~\ref{sec:gluing} we employ the \emph{rigging} formalism to perform the connected sum of the two singular Lorentzian manifolds, concentrating on the smooth gluing along hypersurfaces whose causal character varies pointwise. We show that this procedure does not introduce thin shells.

Finally, Section~\ref{sec:conclusions} outlines potential applications of CTCs to the desingularization of singular spacetimes and suggests directions for future work.

Throughout this paper, we utilize Wolfram Mathematica with the OGRe package \cite{Shoshany2021_OGRe}, Python with the OGRePy package \cite{Shoshany2025_OGRePy}, and Maple for both symbolic and numerical calculations.

\section{Topological surgery}
\label{sec:topologicalsurgery}

Topological surgery is the natural framework to study topology changes of $m$-dimensional manifolds. Here and in the rest of this work we take $m = 3$. Using topological surgery allows us to identify the $4$-dimensional manifold on which we aim to define the Lorentzian metric describing the nucleation of a wormhole, as well as any topological obstructions to the existence of this metric.

Given that our analysis is restricted to $4$-dimensional spacetimes, we are considering only $3$-dimensional $n$-surgeries. For clarity, we adopt the notation and definitions presented in Ref.~\cite{antoniou2018topological}. Further examples, additional details and proofs of the statements in this section can be found in Refs.~\cite{Topological_Surgery_Small_Large, antoniou2018blackholestopologicalsurgery, detectingvisualizing3dimensionalsurgery, Milnor1965, MilnorStiefel, MorseMilnor}.

Given two closed (compact without boundaries) $3$-manifolds $\Sigma_i$ and $\Sigma_f$, we define the topological $3$-dimensional $n$-surgery as follows:

\begin{definition}
    A $3$-dimensional $n$-surgery, where $n \in \{0,1,2\}$,  is the topological process of creating a new $3$-manifold $\Sigma_f$ out of a given $3$-manifold $\Sigma_i$ by removing a framed $n$-embedding $h: S^{n} \times D^{3-n}\hookrightarrow \Sigma_i$, and replacing it with $D^{n+1} \times S^{3-n-1}$, using the gluing homeomorphism $h$ along the common boundary $S^{n} \times S^{2-n}$:
    \[
        \Sigma_f = \overline{\Sigma_i \setminus h(S^n \times D^{3-n})} \cup_{h|_{S^n \times S^{2-n}}} (D^{n+1} \times S^{2-n}).
    \]
    \label{def:surgery}
\end{definition}

Here and in the rest of this paper, $S^n$ represent topological spheres and $D^n$ represent closed disks.

It is easily verified that the surgery process does not alter the Euler characteristic $\chi$ of $\Sigma_i$, so that $\chi(\Sigma_i) = \chi(\Sigma_f) = 0$ as it should be. This is true even if $\Sigma_i$ and $\Sigma_f$ have boundaries; in this case, which is the one of interest, it is sufficient to assume that the surgery takes place in the interior of $\Sigma_i$ and that therefore $\partial \Sigma_i \cong \partial \Sigma_f$.

Following Definition~\ref{def:surgery}, there are three kinds of $3$-dimensional surgeries---or, adopting the notation of Ref.~\cite{Yodzis1972} with $n = \lambda -1$, three spherical modifications of type $\lambda -1$:

\begin{itemize}
    \item \textbf{$3$-dimensional $0$-surgery}: An embedding of $S^0 \times D^3$ is replaced by $D^1 \times S^2$. Both embeddings have the same boundary, and the surgery consists of pasting one in place of the other along the common boundary. This surgery corresponds to the creation of a wormhole, which is the focus of this work. In the rest of this section we will thus specialize in this type of surgery, although appropriate generalizations for the other types of surgeries can be found in Refs.~\cite{antoniou2018topological, Topological_Surgery_Small_Large, antoniou2018blackholestopologicalsurgery, Antoniou2017-mt}.
    \item \textbf{$3$-dimensional $1$-surgery}: An embedding of a solid torus $S^1 \times D^2$ is replaced by another solid torus $D^2 \times S^1$. Although this surgery does not seem to have a direct physical interpretation, it plays a central role in Theorem~\ref{th:knotsurgery} below.
    \item \textbf{$3$-dimensional $2$-surgery}: The reverse process of the $0$-surgery, physically representing the annihilation of a wormhole.
\end{itemize}

The relevance of $3$-dimensional surgery is that any $3$-manifold can be obtained from any other by a finite number of elementary operations. This result can be stated as follows:

\begin{theorem}
    Every closed, connected, orientable $3$-manifold can be obtained by surgery on a knot or a link on $S^3$.
    \label{th:knotsurgery}
\end{theorem}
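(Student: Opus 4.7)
The plan is to argue via $4$-dimensional cobordism and Morse theory, an approach naturally aligned with the tools featured elsewhere in the paper. Let $M$ be a closed, connected, orientable $3$-manifold. The first step is to produce a compact, orientable $4$-manifold $W$ with $\partial W = M$. Existence follows from the vanishing of the oriented cobordism group $\Omega_3^{SO}$ (Rokhlin--Thom); alternatively, one may build such a $W$ directly by thickening a triangulation of $M$, so the argument does not depend essentially on the full cobordism computation.

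Next, I would equip $W$ with a Morse function $f : W \to [0,1]$ such that $f^{-1}(1) = M$ and with no critical points on the boundary, yielding a handle decomposition of $W$ with handles of indices $0$ through $4$. By adjoining $2$-handles along loops generating $\pi_{1}(W)$---an operation that does not change $\partial W$---I may assume $W$ is simply connected. I would then apply the standard arsenal of handle moves: reordering critical points by index, cancelling complementary pairs, dualizing the Morse function to remove $3$- and $4$-handles in favor of handles added on the side of $M$, and, crucially, trading the $1$-handles for $2$-handles (the move available in simply connected $4$-manifolds, where a $1$-handle is absorbed by introducing a canceling $2$-handle and then exchanging the roles). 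The end result is a handle presentation of $W$ consisting of a single $0$-handle $B^{4}$ together with a collection of $2$-handles attached along a framed link $L \subset \partial B^{4} = S^{3}$.

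Finally, I would translate the handle picture back into the language of Definition~\ref{def:surgery}. Attaching a $2$-handle $D^{2} \times D^{2}$ to $B^{4}$ along a framed knot $K \subset S^{3}$ modifies the boundary by excising the tubular neighborhood $S^{1} \times D^{2}$ of $K$ and gluing in $D^{2} \times S^{1}$, with the framing dictating the gluing homeomorphism; this is exactly a $3$-dimensional $1$-surgery on $S^{3}$ along $K$. Iterating over the components of $L$ exhibits $M = \partial W$ as the result of surgery on the link $L \subset S^{3}$, which proves the theorem. The main obstacle in the argument is the trading of $1$-handles for $2$-handles in four dimensions; it is here that the simple-connectivity arrangement in the second step is essential, and the justification relies on a careful use of handle-slide and cancellation lemmas in Kirby calculus. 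Once this reduction is established, the rest of the proof is essentially a repackaging of the handle decomposition as a surgery diagram.
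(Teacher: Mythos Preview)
The paper does not prove this theorem; it merely states the result, attributes it to Wallace and Lickorish, and refers the reader to the cited literature. There is therefore no in-paper argument to compare your proposal against.

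Your outline is essentially Wallace's proof via $4$-manifold cobordism and handle theory, and the overall strategy is sound. One step deserves correction, however. The move you single out as the ``main obstacle''---trading $1$-handles for $2$-handles after arranging $\pi_{1}(W)=0$---is phrased as Smale-style handle trading inside a fixed $W$; that manoeuvre relies on the Whitney trick and is not available in dimension~$4$, so as written the justification does not go through. Fortunately the detour through simple connectivity is unnecessary. Since you only need to preserve $\partial W$, not $W$ itself, it suffices to observe that attaching a $4$-dimensional $1$-handle has the same effect on the boundary as attaching a $2$-handle along a $0$-framed unknot (both give a connected sum with $S^{1}\times S^{2}$). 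Replacing each $1$-handle by such a $2$-handle changes $W$ but leaves $\partial W=M$ intact, and requires no hypothesis on $\pi_{1}(W)$. The $3$-handles are then disposed of by turning the decomposition upside down and applying the same replacement from the $M$ side; the surviving dual $4$-handle caps off an $S^{3}$, so $S^{3}$ arises from $M$ by Dehn surgery and hence $M$ from $S^{3}$ by the reverse surgery. With this amendment your argument is complete. For contrast, Lickorish's independent proof proceeds via Heegaard splittings and the fact that the mapping class group of a closed surface is generated by Dehn twists, avoiding $4$-manifolds altogether.
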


In this theorem, by Wallace \cite{Wallace_1960} and Lickorish \cite{lickorish}, a \emph{knot} is an embedding of $S^1$ in $\mathbb{R}^3$ or $S^3$, and a \emph{link} is a collection of knots that do not intersect but can be linked together.

Two 3-dimensional manifolds obtained by surgery from each other are \emph{cobordant}, meaning there exist a \emph{cobordism} between them, defined as follows:

\begin{definition}
    A $4$-dimensional \textbf{cobordism} $(W; \Sigma_i, \Sigma_f)$ is a $4$-dimensional manifold $W$ whose boundary is the disjoint union of the closed $3$-manifolds $\Sigma_i, \Sigma_f: \partial W = \Sigma_i \sqcup \Sigma_f$. A $4$-dimensional cobordism $(W; \Sigma_i, \Sigma_f)$ is an \textbf{h-cobordism} if the inclusion maps $\Sigma_i \hookrightarrow W$ and $\Sigma_f \hookrightarrow W$ are homotopy equivalences.
\end{definition}

Several definitions of cobordisms appear in the literature; we choose this one for convenience and consistency of notation. It is also possible to add extra information, as in the case of Ref.~\cite{REINHART1963173}, where a vector field with specific boundary conditions on the two $3$-manifolds is considered.

Given an initial $3$-manifold $\Sigma_i$, the cobordism between $\Sigma_i$ and $\Sigma_f$, obtained from $\Sigma_i$ by a $0$-surgery, is the trace of the $0$-surgery:

\begin{definition}
    The \textbf{trace} of the $0$-surgery removing $S^{0} \times D^{3} \subset \Sigma_i$ is the cobordism $(W; \Sigma_i, \Sigma_f)$ obtained by attaching the $4$-dimensional $1$-handle $D^{1} \times D^{3}$ to $\Sigma_i\times I$ at $S^{0} \times D^{3} \times \{ 1 \} \subset \Sigma \times \{ 1 \} $. The $4$-dimensional $1$-handle is the disk $D^4$ considered in its homeomorphic form $D^1 \times D^3$ which has boundary $\partial (D^{1} \times D^{3}) = (S^{0} \times D^{3}) \sqcup (D^{1} \times S^{2})$.
\end{definition}

Here, $I=[0,1]$ is the unit interval. On the one hand, this surgical process can be viewed as the continuous transition, within the handle $D^{1} \times D^{3}$, from the boundary component $S^0 \times D^3$ to its complement $D^1 \times S^2$. On the other hand, from a local perspective, the surgery can be described using \emph{Morse functions}. Let us first define the notion of regular and critical points:

\begin{definition}
    Let $f: M \rightarrow N$ be a differentiable map between two manifolds $M$ and $N$ of dimensions $m$ and $n$ respectively.
    \begin{enumerate}
        \item A \textbf{regular point} of $f$ is a point $x \in M$ where the differential $\d f(x) : \mathbb{R}^m \rightarrow \mathbb{R}^n$ is a linear map of maximal rank, that is, $\operatorname{rank}(\d f(x))=$ $\min (n, m)$.
        \item A \textbf{critical point} of $f$ is a point $x \in M$ which is not regular.
        \item A \textbf{regular value} of $f$ is a point $y \in N$ such that every $x$ in $f^{-1}(\{y\}) \subseteq$ $M$ (the \textbf{level set} of $f$ at $y$) is regular, including the empty case $f^{-1}(\{y\})=\emptyset$.
        \item A \textbf{critical value} of $f$ is a point $y \in N$ which is not regular.
    \end{enumerate}
\end{definition}

Now we can define Morse functions:

\begin{definition}
    Let $f: M \rightarrow \mathbb{R}$ be a differentiable function on an $m$-dimensional manifold.
    \begin{enumerate}
        \item A critical point $x \in M$ of $f$ is \textbf{non-degenerate} if the Hessian matrix $H(x)=\left(\frac{\partial^2 f}{\partial x_i \partial x_j}\right)$ is invertible.
        \item The index $\operatorname{Ind}(x)$ of a non-degenerate critical point $x \in M$ is the number of negative eigenvalues in $H(x)$, so that with respect to appropriate local coordinates the quadratic term Q in the Taylor series of $f$ near $x$ is given by
        \[
            Q = -\sum_{i=1}^{\operatorname{Ind}(x)}\left(h_i\right)^2+\sum_{i=\operatorname{Ind}(x)+1}^m\left(h_i\right)^2.
        \]
        \item The function $f$ is \textbf{Morse} if it has only non-degenerate critical points.
    \end{enumerate}
\end{definition}

A distinctive feature of Morse functions is that topological transitions are described by means of their level sets. Two level sets of a Morse function are diffeomorphic if there are no critical level sets between them. Consequently, for a transition to occur, the level sets must intersect a critical point.

\begin{figure}
    \centering
    \includegraphics[width=1\linewidth]{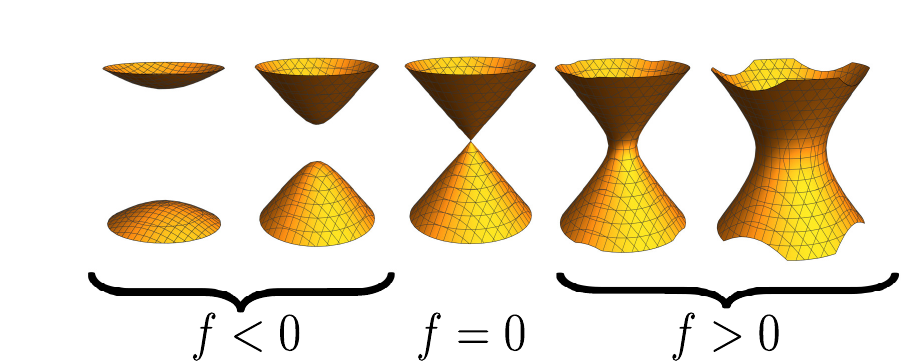}
    \caption{The $3$-dimensional $0$-surgery removes an embedding of $S^0 \times D^3$ and replaces it with $D^1 \times S^2$. This process, around the critical point, can be represented by a Morse function such as $f = -x^2 + y^2 + z^2$ (suppressing one dimension), whose level sets are shown above.}
    \label{fig:figure1}
\end{figure}

In Fig.~\ref{fig:figure1}, the level sets  $f<0$ are diffeomorphic to each other, as are the level sets $f>0$. However, between them there is the critical level set: the double cone $f = 0$.

The bridge between the two descriptions---namely, that via Morse functions and that via cobordisms---is provided by the following proposition:

\begin{proposition}
    Let $f: W \rightarrow I$, where $I$ is the unit interval, be a Morse function on a $4$-dimensional cobordism ($W; \Sigma_i, \Sigma_f$) between manifolds $\Sigma_i$ and $\Sigma_f$ with
    \[
        f^{-1}(\{0\})=\Sigma_i, \quad f^{-1}(\{1\})=\Sigma_f
    \]
    and such that all critical points of $f$ are in the interior of $W$.
    \begin{enumerate}
        \item If $f$ has no critical points then $\left(W; \Sigma_i, \Sigma_f\right)$ is a trivial $h$-cobordism, with a diffeomorphism
        \[
            \left(W; \Sigma_i, \Sigma_f\right) \cong \Sigma_i \times(I; \{0\}, \{1\})
        \]
        which is the identity on $\Sigma_i$.
        \item If $f$ has a single critical point of index $i$ then $W$ is obtained from $\Sigma_i \times I$ by attaching a $4$-dimensional $i$-handle using an embedding $S^{i-1} \times D^{4-i} \hookrightarrow \Sigma_i \times\{1\}$, and $\left(W; \Sigma_i, \Sigma_f\right)$ is an elementary cobordism of index $i$ with a diffeomorphism
        \[
            (W; \Sigma_i, \Sigma_f) \cong\left((\Sigma_i \times I) \cup (D^i \times D^{4-i}); \Sigma_i \times\{0\}, \Sigma_f\right).
        \]
    \end{enumerate}
\end{proposition}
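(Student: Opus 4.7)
The plan is to prove each part via gradient-flow arguments, treating part (1) as a warm-up for the technique and reducing part (2) to the local normal form provided by the Morse lemma.

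For part (1), I would first equip $W$ with any auxiliary Riemannian metric $g$ and form the gradient vector field $\nabla f$. Since $f$ has no critical points in $W$ (and the hypotheses pin the boundary level sets to $\S_i$ and $\S_f$), $\nabla f$ is nowhere zero, so I can rescale to obtain the vector field
\[
X := \frac{\nabla f}{g(\nabla f,\nabla f)},
\]
which satisfies $X(f) = \d f(X) = 1$. Let $\varphi_t$ denote its flow. Along any flow line through $p \in \S_i$, $f$ increases linearly with rate $1$, so the trajectory hits $\S_f$ at time $t = 1$; compactness of $W$ ensures completeness of the flow on $[0,1]$. The map
\[
\Phi : \S_i \times I \longrightarrow W, \qquad (p,t) \longmapsto \varphi_t(p),
\]
is then a smooth bijection with smooth inverse $q \mapsto (\varphi_{-f(q)}(q), f(q))$, hence a diffeomorphism restricting to the identity on $\S_i$. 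This simultaneously witnesses that the inclusions $\S_i,\S_f \hookrightarrow W$ are homotopy equivalences.

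For part (2), let $p$ be the unique critical point, of index $i$, and let $c = f(p) \in (0,1)$. The Morse lemma supplies local coordinates $(h_1,\dots,h_4)$ centered at $p$ in which
\[
f = c - \sum_{k=1}^{i} h_k^2 + \sum_{k=i+1}^{4} h_k^2.
\]
Pick $\epsilon > 0$ small enough that the coordinate ball lies inside $f^{-1}([c-\epsilon, c+\epsilon])$ and contains no other critical point. Inside this ball I would construct a gradient-like vector field $X$ for $f$ that, in the Morse coordinates, is the Euclidean gradient of the normal form, and then extend it smoothly to a gradient-like vector field on all of $W$ using a partition of unity. Applying part (1) to the cobordism $f^{-1}([0, c-\epsilon])$ gives a diffeomorphism with $\S_i \times [0, c-\epsilon]$, and similarly the pieces above $c+\epsilon$ are cobordism-trivial. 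So the entire topological content is concentrated in the slab $f^{-1}([c-\epsilon, c+\epsilon])$.

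Inside the slab, the standard handle picture emerges directly from the normal form: the descending disk $\{h_{i+1} = \cdots = h_4 = 0,\ \sum_{k\le i} h_k^2 \le \epsilon\}$ is a copy of $D^i$ whose boundary $S^{i-1}$ sits inside $f^{-1}(c-\epsilon)$, while the ascending disk is a copy of $D^{4-i}$. A neighborhood of the descending disk is diffeomorphic to $D^i \times D^{4-i}$ attached along $S^{i-1} \times D^{4-i}$, and flowing this neighborhood down to $f^{-1}(c-\epsilon)$ via $-X$ realizes the attaching map as an embedding $S^{i-1}\times D^{4-i}\hookrightarrow \S_i\times\{1\}$ after identifying $f^{-1}([0,c-\epsilon])$ with $\S_i\times I$. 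Piecing these identifications together yields the stated diffeomorphism.

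The main obstacle is the middle step: showing that flowing the standard coordinate handle downward gives a handle decomposition on the nose. This needs a careful choice of $X$ that agrees with the Euclidean gradient near $p$ yet extends smoothly and compatibly with the trivializations on the upper and lower slabs, together with a verification that the flow fills out exactly the difference $f^{-1}([0,c+\epsilon])\setminus f^{-1}([0,c-\epsilon])$ up to the attached handle. Once the gradient-like field is arranged so that outside a compact neighborhood of $p$ its flow lines connect $f^{-1}(c-\epsilon)$ to $f^{-1}(c+\epsilon)$ in unit time, the diffeomorphism claimed in the proposition follows by combining the three trivializations via the flow.
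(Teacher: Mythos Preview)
Your proposal is correct and follows the classical gradient-flow approach. The paper itself does not give a proof of this proposition; it states the result and defers all proofs in that section to standard references, notably Milnor's \emph{Morse Theory} and related texts. Your argument---normalizing the gradient so that $X(f)=1$, using compactness to integrate the flow across level sets for part~(1), and for part~(2) combining the Morse-lemma normal form near the unique critical point with the flow trivialization on the sub- and super-level slabs---is precisely the textbook proof found in those references, so there is no meaningful divergence to discuss.
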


A fundamental property of any Morse function is that, in a small enough neighborhood of a critical point $p$, there always exists a coordinate system $x^i$ such that the Morse function for a $0$-surgery takes the form:
\begin{equation}
    f(x) = f(p) - (x^0)^2 + (x^1)^2 + (x^2)^2 + (x^3)^2,
    \label{eq:localmorse}
\end{equation}
where, without loss of generality, we can set $f(p) = 0$.

We conclude this section with the following theorem:

\begin{theorem}
    Every $4$-dimensional manifold $W$ admits a Morse function $f: W \rightarrow \mathbb{R}$.
    \label{th:existmorse}
\end{theorem}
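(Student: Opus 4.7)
The plan is to reduce existence of a Morse function on an arbitrary $4$-manifold $W$ to the genericity of such functions on Euclidean space, via a smooth embedding of $W$ into some $\mathbb{R}^N$ combined with Sard's theorem; no special feature of dimension four plays a role in the argument.

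First I would invoke Whitney's embedding theorem to realize $W$ as a smooth, properly embedded submanifold of some $\mathbb{R}^N$; the precise value of $N$ is inessential. For each $p\in\mathbb{R}^N$ define the \emph{distance-squared function}
\[
L_p:W\longrightarrow\mathbb{R},\qquad L_p(x)=\lVert x-p\rVert^{2}.
\]
A short computation using the ambient Euclidean inner product shows that $x\in W$ is a critical point of $L_p$ exactly when $x-p$ is orthogonal to the tangent space $T_xW\subset\mathbb{R}^N$, and that the Hessian of $L_p$ at such a critical point is non-degenerate unless $p$ is a \emph{focal point} of $W$ along the normal line through $x$. In this way the question of whether $L_p$ is Morse is reduced to a question about the structure of the focal set.

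The key step is then to show that this focal set has Lebesgue measure zero in $\mathbb{R}^N$. For this I would consider the endpoint map $E:NW\to\mathbb{R}^N$, $E(x,v)=x+v$, on the normal bundle $NW$ of $W\subset\mathbb{R}^N$; this map is smooth between manifolds of the same dimension $N$, and its critical values are precisely the focal points. Sard's theorem then immediately gives that the focal set has measure zero, so almost every $p\in\mathbb{R}^N$ yields a distance-squared function $L_p$ all of whose critical points are non-degenerate, i.e.\ a Morse function on $W$.

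The main obstacle will be the non-compact case: one must exhaust $W$ by compacts, use that a countable union of measure-zero sets is still measure zero, and rely on properness of the Whitney embedding to keep degenerate critical points from accumulating at infinity. For the purposes of this paper, however, only a Morse function on a neighborhood of the transition point is required, and the local quadratic form in Eq.~(\ref{eq:localmorse}) already supplies one by inspection, so the global argument above is needed only if one wishes to produce a single Morse function on all of $W$ at once.
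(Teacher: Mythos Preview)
Your argument is correct and is precisely the standard proof: embed $W$ smoothly in $\mathbb{R}^N$ via Whitney, show that the critical values of the normal-bundle endpoint map $E:NW\to\mathbb{R}^N$ are the focal points of the embedding, apply Sard to conclude that almost every $p\in\mathbb{R}^N$ gives a distance-squared function $L_p$ with only non-degenerate critical points. The handling of the non-compact case via a countable exhaustion and properness is also fine.

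As for comparison with the paper: the paper does not actually supply a proof of this theorem. It is stated as a background result, with proofs deferred to the cited references (in particular Milnor's \emph{Morse Theory}, which is exactly where your distance-squared/Sard argument comes from). So your proposal is not merely consistent with the paper's approach---it \emph{is} the approach the paper is implicitly invoking through its citations. Your closing remark is also apt: for the purposes of the construction in Section~\ref{sec:wormholenucleation} only the local normal form~\eqref{eq:localmorse} near the single critical point is used, so the global existence statement is quoted more for completeness than out of technical necessity.
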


This last theorem guarantees that, in a neighborhood of the topological transition point, it is always possible to construct a singular Lorentzian metric. In fact, on the $W$ cobordism describing the formation of a wormhole, since there always exists a Morse function, it is always possible to find a coordinate system such that in the neighborhood of the critical point, the Morse function takes the form (\ref{eq:localmorse}). From the gradient of this function, together with an appropriately chosen Riemannian metric, it  is then possible to obtain the singular Lorentzian metric, as shown in Section~\ref{sec:wormholenucleation}. This gradient, normalized with respect to the Riemannian metric, can then be considered as a map from $S^3 \rightarrow S^3$, whose degree must be $\pm 1$.

\section{Topology changes in classical general relativity}
\label{sec:topologychange}

\subsection{Preliminaries}

As explained in the previous section, our focus is on changes in the spatial topology of spacelike submanifolds of a $4$-dimensional spacetime within the framework of classical general relativity. Specifically, we aim to investigate the formation of a wormhole in a spacetime that admits an everywhere non-degenerate Lorentzian metric, even if this entails the presence of CTCs---which may well be harmless if confined to microscopic regions of spacetime \cite{topologychangeandmonopole}.

Where not otherwise specified, we define a spacetime as follows:

\begin{definition}
    A spacetime $(W, g^L_{\mu\nu})$ is a $4$-dimensional manifold $W$, possibly compact, together with an everywhere non-degenerate time-orientable Lorentzian metric $g^L_{\mu\nu}$.
\end{definition}

The signature convention we adopt is $(-,+,+,+)$. If the spacetime is not time-orientable, it is always possible to consider a time-orientable double covering, with minor adjustments to the results presented in this work \cite{Sorkin1986, gerochtopology}. Notably, it is possible to neglect time-orientability and achieve topology changes, as in \cite{Friedman}.

Our study of topological transitions begins with the following theorem, due to Geroch and attributed to Misner \cite{gerochtopology}:

\begin{theorem}
    Let $\Sigma_i$ and $\Sigma_f$ be two closed $3$-manifolds. Then there exists a compact spacetime $W$ whose boundary is the disjoint union of $\Sigma_i$ and $\Sigma_f$, and in which $\Sigma_i$ and $\Sigma_f$ are both spacelike.
\end{theorem}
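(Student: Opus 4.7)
The plan is to split the argument into a topological stage, producing a compact $4$-manifold $W$ with $\partial W = \S_i \sqcup \S_f$, and a geometric stage, equipping $W$ with a time-orientable Lorentzian metric that makes the two boundary components spacelike.

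For the topological stage I would appeal directly to Theorem~\ref{th:knotsurgery}. Assuming $\S_i$ and $\S_f$ are connected and orientable (the general case reducing to this by passing to disjoint unions and orientation double covers), each can be obtained from $S^3$ by surgery along a knot or link. The trace of each such surgery is a compact $4$-dimensional cobordism---call them $W_i$ from $S^3$ to $\S_i$ and $W_f$ from $S^3$ to $\S_f$. Gluing $W_i$ to $W_f$ along their common $S^3$ boundary component produces a compact $4$-manifold $W$ with the required $\partial W = \S_i \sqcup \S_f$.

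For the geometric stage I would employ the Misner trick. Fix a Riemannian metric $g^R_{\mu\nu}$ on $W$, chosen on a collar of $\partial W$ to be a product so that the boundary is totally geodesic. Given a nowhere-vanishing vector field $V^\mu$ on $W$, define
\begin{equation*}
    g^L_{\mu\nu} = g^R_{\mu\nu} - 2\, \frac{V_\mu V_\nu}{g^R_{\alpha\beta} V^\alpha V^\beta}.
\end{equation*}
A direct check shows that $g^L$ has signature $(-,+,+,+)$, with $V^\mu$ as a globally timelike field that supplies the time orientation: $g^L(V,V) = -g^R(V,V)$, while $g^L(T,T) = g^R(T,T)$ on any $T$ with $g^R(T,V) = 0$. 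If in addition $V^\mu$ is $g^R$-orthogonal to $\partial W$ along the boundary, then $g^L|_{\partial W} = g^R|_{\partial W}$ is positive definite, so $\S_i$ and $\S_f$ are both spacelike.

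The hard part is producing the nowhere-vanishing vector field $V^\mu$ with the prescribed transverse boundary behavior. By the Poincar\'e--Hopf theorem for manifolds with boundary, a nowhere-vanishing field pointing inward on $\S_i$ and outward on $\S_f$ exists if and only if $\chi(W) = \chi(\S_i) = 0$. The $W$ produced in the topological stage need not satisfy $\chi(W)=0$, but modifications interior to $W$ leave $\partial W$ untouched; accordingly I would perform internal connected sums with copies of $S^1 \times S^3$ (each lowering $\chi$ by $2$) and of $\mathbb{CP}^2$ (each raising $\chi$ by $1$) until $\chi(W) = 0$ is achieved. Once this is arranged, the required $V^\mu$ exists---its normal boundary profile can be prescribed inside a collar and extended inward without zeros---and the Misner formula above supplies the desired Lorentzian metric.
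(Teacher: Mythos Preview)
Your geometric stage---building $g^L$ from a Riemannian metric and a nowhere-vanishing transverse vector field via the Misner formula, identifying the Poincar\'e--Hopf obstruction $\chi(W)=0$, and removing it by interior connected sums---is exactly the argument the paper sketches. The difference is in the topological stage. The paper obtains the bare cobordism by invoking Rohklin's result (equivalently, Thom's theorem that closed manifolds are cobordant iff their Stiefel--Whitney numbers agree, together with the fact that these vanish for all closed $3$-manifolds). You instead use the Wallace--Lickorish theorem and take traces of the surgeries, which is more constructive and meshes nicely with the surgical framework of Section~\ref{sec:topologicalsurgery}. For closed orientable $\S_i,\S_f$ both routes work; your approach has the virtue of producing $W$ explicitly as a handlebody, while the characteristic-number argument is shorter and treats all closed $3$-manifolds at once.

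There is, however, a genuine gap in your reduction to the orientable case. Passing to disjoint unions is fine, but ``passing to orientation double covers'' is not a reduction: a cobordism for the double cover $\tilde\Sigma$ does not descend to one for $\Sigma$ unless the cobordism itself carries a compatible free $\mathbb{Z}_2$-action extending the deck involution, and nothing in your construction supplies that. Since the theorem as stated imposes no orientability hypothesis, you need a separate argument for non-orientable $\S_i$ or $\S_f$; the cleanest fix is precisely the Stiefel--Whitney route the paper cites, which shows $\mathcal{N}_3=0$ directly.
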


As reported by Geroch, two closed 3-manifolds are always cobordant \cite{gerochtopology}. In fact, two closed manifolds are cobordant if and only if their Stiefel-Whitney numbers are the same \cite{MilnorStiefel}. If, in addition, the manifolds are oriented, then their Pontryagin numbers must also be equal. However, all closed $3$-manifolds have the same Pontryagin and Euler numbers (in particular, Pontryagin numbers for odd-dimensional manifolds are always zero). For more details on the subject, please see Refs.~\cite{MilnorStiefel, Thom1954, Stong}.

The main challenge is then the requirement of an everywhere non-degenerate, time-orientable Lorentzian metric on $W$. Such a metric exists on the cobordism $W$ if and only if an everywhere non-singular vector field exists on $W$. A formal proof is provided by Steenrod \cite{Steenrod}, but the underlying idea is simple.

A Lorentzian metric $g^{L}_{\mu\nu}$ can always be obtained from a Riemannian one $g^{R}_{\mu\nu}$---which only requires paracompactness---together with an everywhere non-singular vector field $V^\mu$ \cite{Hawking_Ellis_1973, topologicaltecniques}:

\begin{equation}
   g^{L}_{\mu\nu} = g^{R}_{\mu\nu} - 2 \, \frac{g^{R}_{\mu\alpha}V^{\alpha}g^{R}_{\nu\beta}V^{\beta}}{g^{R}_{\rho\sigma}V^{\rho}V^{\sigma}}.
    \label{eq:lorentzianfromriemannian}
\end{equation}

The index on $V^\mu$ is lowered (and its norm is computed) with $g^{R}_{\mu\nu}$. The converse is also true \cite{topologicaltecniques}: given a Lorentzian metric $g^L_{\mu\nu}$, one can diagonalize it with respect to a chosen Riemannian metric $g^R_{\mu\nu}$. The vector field $V^\mu$ then corresponds to the unique future-pointing unit eigenvector associated with the negative eigenvalue $\lambda$, satisfying $(g^L_{\mu\nu} - \lambda g^R_{\mu\nu}) V^{\mu} =0$ and $g^R_{\mu\nu} V^\mu V^\nu = 1$.

The vector field $V^\mu$, by construction, is timelike with respect to the Lorentzian metric~\eqref{eq:lorentzianfromriemannian}. The existence of this non-singular $V^\mu$ on $W$ depends on the boundary conditions imposed on $\Sigma_i$ and $\Sigma_f$: if $V^\mu$ is everywhere incoming or outgoing normal on $\Sigma_i$ and $\Sigma_f$, so that both are spacelike, then---by the Poincaré-Hopf theorem---the Euler characteristic of $W$ must vanish.\footnote{The Poincaré-Hopf theorem states that $\sum i_v = \chi(W)$, where $i_v$ is the index of an isolated singularity of the vector field. Although $\chi(W) = 0$ implies that the sum of the indices is zero and thus, \emph{a priori}, there may be several singularities of opposite sign, an important result of Milnor \cite{Milnor1965} guarantees that if $\chi(W) = 0$ there exists a vector field that is everywhere non-singular on $W$.}. The same conclusion is reached if $V^\mu$ is normal entering on $\Sigma_i$ and normal exiting on $\Sigma_f$ \cite{REINHART1963173}.

We adopt the convention that the initial manifold $\Sigma_i$ is the one where the vector field enters, while $\Sigma_f$ is the one where it exits; we denote the transition by $\Sigma_i \rightarrow \Sigma_f$. When the vector field is everywhere entering (exiting) on $\partial W$, we then write $\Sigma_i \sqcup \Sigma_f \rightarrow \emptyset$ ($\emptyset \rightarrow \Sigma_i \sqcup \Sigma_f $).

The requirement $\chi(W)=0$, however, is not obstructive: if $\chi(W)\ne0$, a new cobordism $W'$ between $\Sigma_i$ and $\Sigma_f$ can be constructed via the Misner trick of taking the connected sum with a compact $4$-manifold $N$, $W' = W \# N$, chosen so that
\[
    \chi (W \# N) = \chi (W) + \chi(N) -2 = 0.
\]
For example, $N = S^{2} \times S^{2}$ increases $\chi$ by $2$, $N = S^1 \times S^3$ decreases $\chi$ by $2$, $N = \mathbb{CP}^{2}$ increases $\chi$ by $1$, and $N = \mathbb{RP}^{4}$ decreases $\chi$ by $1$.

Thus a Lorentzian cobordism between two closed spacelike $3$-manifolds always exists, provided the Euler characteristic is adjusted by an appropriate connected sum. This cobordism, as imposed by the second Geroch theorem \cite{gerochtopology}, must, however contain either singularities or CTCs if $\Sigma_i$ and $\Sigma_f$ are not diffeomorphic:

\begin{theorem}
    Let $W$ be a compact spacetime whose boundary is the disjoint union of two compact spacelike $3$-manifolds, $\Sigma_i$ and $\Sigma_f$. If $W$ has no CTCs, then $\Sigma_i$ and $\Sigma_f$ are diffeomorphic, and, furthermore, $W$ is topologically equivalent to $\Sigma_i \times [0,1]$.
    \label{secondgerochtheorem}
\end{theorem}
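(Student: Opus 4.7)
The strategy is to exploit the flow of the timelike vector field $V^\mu$ introduced in the construction of $g^L_{\mu\nu}$ above. Because $W$ is compact with spacelike boundary components on which $V^\mu$ is everywhere transverse (entering on $\S_i$, exiting on $\S_f$), the Poincaré--Hopf theorem and the result of Milnor already invoked earlier guarantee that $V^\mu$ can be chosen everywhere non-singular on $W$. Let $\phi_t$ denote its maximal flow, so that every orbit is a future-directed timelike curve of the Lorentzian metric.

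The core step is to show that every orbit starting at $p \in \S_i$ meets $\S_f$ after a finite parameter interval. The only alternatives are (i) an orbit closing on itself and (ii) an orbit remaining confined to $\mathrm{int}(W)$ indefinitely. Case (i) is immediately excluded by the no-CTC hypothesis. For case (ii), compactness of $W$ forces the orbit to possess an $\omega$-limit point $q$; if $q \in \mathrm{int}(W)$, a flow-box neighborhood about $q$ together with recurrence of the orbit there produces a closed timelike curve by concatenating a segment of the orbit with a short timelike segment inside the flow box, again contradicting the hypothesis. The cases $q \in \S_i$ and $q \in \S_f$ are handled by transversality of $V^\mu$ at $\partial W$: no future-directed orbit can accumulate at $\S_i$, and accumulation at $\S_f$ already forces the orbit to cross it in finite time.

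Once every orbit is known to reach $\S_f$ in finite parameter time, I would define $\tau: \S_i \rightarrow (0,\infty)$ as the transit time along the orbit from $p$ to $\S_f$. Transversality of $V^\mu$ to $\S_f$ together with the implicit function theorem makes $\tau$ smooth, and compactness of $\S_i$ makes it bounded. The map
\begin{equation}
    \Phi: \S_i \times [0,1] \rightarrow W, \qquad \Phi(p,s) = \phi_{s \tau(p)}(p),
\end{equation}
is then a smooth bijection whose inverse is obtained by following each orbit backwards from its terminal point on $\S_f$. It is therefore a diffeomorphism; its restriction to $s=1$ furnishes $\S_i \cong \S_f$, while $\Phi$ itself exhibits the product structure $W \cong \S_i \times [0,1]$, from which the topological equivalence stated in the theorem follows.

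The principal obstacle is the recurrence-to-CTC argument in case (ii): upgrading a bare accumulation point $q \in \mathrm{int}(W)$ to an honest closed timelike curve rather than a merely almost-closed one. I would formalize this by working in a convex normal neighborhood of $q$ in which chronological relations are controlled by the quadratic form of $g^L_{\mu\nu}$ at $q$, so that any sufficiently late return of the orbit into this neighborhood lies in the chronological future of an earlier point of the same orbit; the concatenation of the intervening orbit segment with a short timelike arc inside the neighborhood is then a genuine closed timelike curve. With this step secured, the remaining ingredients---the flow-box theorem, transversality at the boundary, and compactness---are routine.
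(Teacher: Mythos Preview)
Your argument is correct and follows precisely the route the paper indicates: the paper does not give a detailed proof but cites Geroch and offers the one-line gloss that, absent CTCs, the integral curves of the non-singular timelike vector field furnish a homeomorphism between $\S_i$ and $\S_f$---which is exactly the flow argument you have spelled out. One small correction: you need not invoke Poincar\'e--Hopf or Milnor to obtain the non-singular $V^\mu$, since by hypothesis $W$ already carries an everywhere non-degenerate, time-orientable Lorentzian metric, and the eigenvector construction in the paper then yields $V^\mu$ directly.
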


Given that on $W$ there exists an everywhere non-singular vector field, the interpretation of Theorem~\ref{secondgerochtheorem} is clear: if there are no singularities or CTCs, the integral curves of the vector field form a homeomorphism between $\Sigma_i$ and $\Sigma_f$, implying that they have the same topology. A change of topology can therefore occur only by admitting singularities (so that $W$ is non-compact) or by allowing CTCs that remain trapped in the interior of $W$ without reaching $\Sigma_f$.

In this work we consider a more realistic setup---studied by Hawking and Borde---in which the topological transition occurs between two non-compact spacelike hypersurfaces $S_i$ and $S_f$ and is confined within a timelike tube $W$ \cite{chronologyprotection, borde1994topology, BordeImpossible}. This tube intersects $S_i$ in a compact region $\Sigma_i$ and $S_f$ in a compact region $\Sigma_f$, as shown in Fig.~\ref{fig:figure2}.

\begin{figure}
    \centering
    \includegraphics[width=0.7\linewidth]{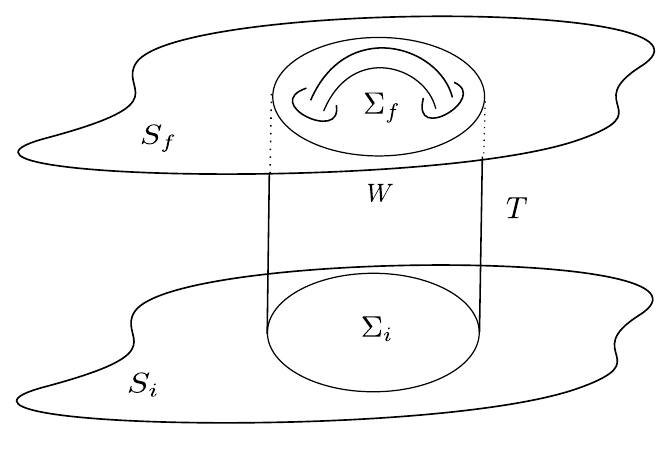}
    \caption{The transition occurs within a compact region $W$ that intersects the spacelike hypersurfaces $S_i$ and $S_f$ in $\Sigma_i$ and $\Sigma_f$, respectively. The cobordism therefore possesses an additional boundary component $T$.}
    \label{fig:figure2}
\end{figure}

Unlike the previous case, in which $\partial\Sigma_i$ and $\partial\Sigma_f$ are empty, we now have $\partial W = \Sigma_i \cup T \cup \Sigma_f$, where $T$ is an additional boundary component that intersects $S_i$ and $S_f$ in $\partial \Sigma_i$ and $\partial \Sigma_f$, respectively, so that $\partial \Sigma_i \cong \partial \Sigma_f$ and $T$ is topologically $\partial \Sigma_i \times [0,1]$. The specification of the vector field on the boundary can always be arranged so that $\Sigma_i$ and $\Sigma_f$ are spacelike and $T$ is timelike \cite{borde1994topology}.

In this setup, the Poincaré-Hopf theorem is no longer valid and a generalization is required to take into account the changing causal character of $\partial W$. This generalization introduces the \emph{kink number}, the topic of the next subsection. As in the closed case of Theorem~\ref{secondgerochtheorem}, we emphasize that $W$ must contain singularities or CTCs for a change of topology to occur, as proved by the following theorem due to Hawking \cite{chronologyprotection}:

\begin{theorem}
    If there is a timelike tube $W$ connecting surfaces $S_i$ and $S_f$ of different topology, then $W$ contains CTCs.
    \label{th:hawking}
\end{theorem}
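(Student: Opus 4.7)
The plan is to prove the contrapositive: assuming $W$ contains no CTCs, I will show that $\S_i$ and $\S_f$ must be diffeomorphic, contradicting the hypothesis that $S_i$ and $S_f$ have different topology (a difference which, because the transition is confined to $W$, is encoded in the compact cross-sections $\S_i$ and $\S_f$). The argument parallels the proof of Geroch's second theorem (Theorem~\ref{secondgerochtheorem}), with additional care taken for the timelike boundary $T$.

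Since $(W,g^L_{\mu\nu})$ is time-orientable and admits an everywhere non-degenerate Lorentzian metric, there exists a nowhere-vanishing future-directed timelike vector field $V^\mu$ on $W$, obtainable via the Riemannian construction~\eqref{eq:lorentzianfromriemannian}. Because $T$ is timelike---its normal is spacelike---every tangent space to $T$ contains timelike directions, so by a partition-of-unity modification in a collar neighborhood of $T$ one can take $V^\mu$ to be tangent to $T$ on $T$ while retaining the inward (outward) normal orientation on $\S_i$ ($\S_f$). Concretely, decomposing $V^\mu|_T=an^\mu+W^\mu$ along a spacelike unit normal $n^\mu$ and a tangential part $W^\mu$, the timelike condition $V^\mu V_\mu<0$ forces $W^\mu W_\mu<-a^2\leq 0$, so $W^\mu$ itself is timelike. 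The flow $\phi_\tau$ generated by the modified $V^\mu$ therefore preserves $W$, with integral curves exiting only across $\S_i$ in the past or $\S_f$ in the future.

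The core step is to show, under the chronology assumption, that every integral curve starting on $\S_i$ reaches $\S_f$ in finite parameter. If some integral curve $\gamma$ failed to do so, compactness of $W$ together with the non-vanishing of $V^\mu$ would force $\gamma$ to be future-inextendible inside $W$, hence to accumulate at some interior point $q$. Picking a convex normal neighborhood $N$ of $q$, infinitely many points $\gamma(\tau_n)$ lie in $N$; since the chronological past $I^-(\gamma(\tau_1);N)$ is open and reaches arbitrarily close to $q$, one can choose $\tau_2>\tau_1$ with $\gamma(\tau_2)\in I^-(\gamma(\tau_1);N)$. Concatenating the future-directed timelike segment of $\gamma$ from $\gamma(\tau_1)$ to $\gamma(\tau_2)$ with a future-directed timelike curve in $N$ from $\gamma(\tau_2)$ back to $\gamma(\tau_1)$ yields a CTC in $W$, contradicting the chronology hypothesis. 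Running the same argument in reverse shows every curve arriving at $\S_f$ came from $\S_i$, so $\phi$ defines a smooth bijection $\S_i\to\S_f$ which, by smoothness of $V^\mu$ and transversality to both slices, is a diffeomorphism---contradicting the assumed difference in topology.

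The main obstacle is the accumulation-to-CTC argument: chronology is strictly weaker than strong causality, so one must check that the repeated near-returns of $\gamma$ to $q$ genuinely produce a closed timelike curve and not merely an arbitrarily close approximation. This step works because convex normal neighborhoods are locally Minkowskian, so chronological pasts and futures are open cones and the closing-off can be performed with a genuine timelike segment. A secondary technicality is the tangential modification of $V^\mu$ near $T$ without introducing zeros; this is routine because, as shown above, the timelike directions tangent to $T$ form an open convex cone at every boundary point, so a standard partition-of-unity interpolation between the original $V^\mu$ in the interior and its tangential projection on $T$ preserves timelikeness throughout.
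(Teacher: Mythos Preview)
The paper does not supply its own proof of this theorem; it is quoted with attribution to Hawking and a citation to \cite{chronologyprotection}, and the text immediately moves on. The only relevant remark in the paper is the informal gloss given just after Theorem~\ref{secondgerochtheorem}: absent singularities or CTCs, the integral curves of the timelike field furnish a homeomorphism between the two slices. Your argument is precisely the formalization of that gloss, adapted to the presence of the lateral timelike boundary~$T$, so in spirit it matches what the paper invokes.

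On its own merits your proof is essentially correct and is the standard Geroch-type argument. The one place where the write-up is loose is the accumulation-to-CTC step: the claim that ``$I^{-}(\gamma(\tau_1);N)$ \dots\ reaches arbitrarily close to $q$'' does not hold for an arbitrary $\gamma(\tau_1)$ near $q$ (if $\gamma(\tau_1)$ happens to lie in $I^{-}(q;N)$, its past cone moves away from $q$). The clean fix is to first flow forward along $\gamma$ from a point near $q$ for a short parameter interval so that the resulting point $\gamma(\tau_1)$ lies in $I^{+}(q;N)$---this is possible by continuous dependence on initial data, since the integral curve through $q$ itself enters $I^{+}(q;N)$. Then $q\in I^{-}(\gamma(\tau_1);N)$, which is open, and any later $\gamma(\tau_2)$ sufficiently close to $q$ lies in that past cone, closing the loop. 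With that adjustment the argument is complete.
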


Before proceeding with the definition and calculation of the kink number, we point out some consequences that this compact setup has on the energy conditions.

Throughout this work, we shall refer to the following as the \emph{standard energy conditions} of classical general relativity (with no cosmological constant) \cite{ShoshanyFTLTT, Curiel2017}, where $T_{\mu\nu}$ is the stress-energy tensor:
\begin{itemize}
    \item \textbf{Strong energy condition (SEC)}
    \[
        \left( T_{\mu\nu} -\tfrac{1}{2}Tg^L_{\mu\nu} \right)t^\mu t^\nu \ge 0, \qquad \forall\,t^\mu: g^{L}_{\mu\nu}t^\mu t^\nu < 0.
    \]
    \item \textbf{Null energy condition (NEC)}
    \[
        T_{\mu\nu} k^\mu k^\nu \ge 0, \qquad \forall\,k^\mu: g^{L}_{\mu\nu}k^\mu k^\nu = 0.
    \]
    \item \textbf{Weak energy condition (WEC)}
    \[
        T_{\mu\nu} t^\mu t^\nu \ge 0, \qquad \forall\,t^\mu: g^{L}_{\mu\nu}t^\mu t^\nu < 0.
    \]
    \item \textbf{Dominant energy condition (DEC)}
    \[
        \begin{aligned}
            &T_{\mu\nu} t^\mu t^\nu \ge 0, \\[1mm] &
            g^L_{\mu\nu}T^\mu_{\alpha}T^\nu_{\beta}t^\alpha t^\beta \le 0,
        \end{aligned}
        \qquad \forall\,t^\mu: g^{L}_{\mu\nu}t^\mu t^\nu < 0.
    \]
\end{itemize}

None of these conditions is independent of the others, due to a hierarchy among them. If the NEC is violated, then all the other energy conditions are violated as well. If the WEC is violated, then the DEC is also violated. Given this hierarchy, in this work we concentrate only on violations of the NEC and WEC.

As discussed in Ref.~\cite{Hawking_Ellis_1973}, the presence of CTCs and the consequent violations of causality may be physically more troubling than the mere lack of predictability stemming from singularities. Indeed, a theorem due to Hawking demonstrates that the formation of singularities is driven by energy conditions rather than by chronological assumptions, which are usually invoked in other singularity theorems\footnote{See Theorem~4, p.~272 in Ref.~\cite{Hawking_Ellis_1973}}. This result is strengthened by a series of theorems due to Tipler \cite{TIPLER19771,Tipler1978} and Lee \cite{Lee}.

In particular, the formation of singularities in topology-changing spacetimes---even in our compact-transition setup---appears unavoidable if the energy conditions hold\footnote{See Theorem~4, p.~20 in Ref.~\cite{TIPLER19771}}. Moreover, even the mere existence of a compact, non-simply-connected region in spacetime forces violations of the energy conditions, as demonstrated by a theorem due to Gannon \cite{Gannon}. Because classical matter generally satisfies the energy conditions, it therefore seems unlikely that a wormhole can nucleate under an evolution which is nearly classical \cite{Topologicalcensorship}.

We show in Section~\ref{sec:wormholenucleation} that our results agree with these theorems: we indeed find regions where all the standard energy conditions are violated.

\subsection{The kink number}
\label{subsec:kinknumber}

The kink number is a topological invariant first introduced by Finkelstein and Misner during the heyday of geometrodynamics, under the name ``M-geon'' \cite{FINKELSTEIN1959230, CHAMBLIN1994357, kinksandtopologychange, Low}. Roughly speaking, it is an integer that measures the number of times the light-cone tips around a chosen submanifold $\Sigma$ of a spacetime $(W, g^{L}_{\mu\nu})$ and classifies Lorentzian metrics up to homotopy \cite{CHAMBLIN1994357, chamblin1996topology}. Several definitions exist in the literature, but they all lead to equivalent results; here we adopt the one in Ref.~\cite{kinksandtopologychange}.

Specifically, let $(W, g^{L}_{\mu\nu})$ be a spacetime and $\Sigma \subset W$ a connected, orientable $3$-submanifold. If $W$ is compact, $\Sigma$ may be $\partial W$. Consider the unit-sphere bundle over $W$, $S(W) \rightarrow W$, whose fiber $S_p(W) \hookrightarrow S(W)$ is the $3$-sphere of directions at $p \in W$. When restricted to $\Sigma$, we may consistently choose a unit normal $N^{\mu}$ of $\Sigma$, which is a global section $N^{\mu} \in \Gamma(W, S(\Sigma))$ of the $6$-dimensional bundle $S(\Sigma) \rightarrow \Sigma$.

Given the Lorentzian metric $g^{L}_{\mu\nu}$, we choose an auxiliary Riemannian metric $ g^{R}_{\mu\nu}$ and solve the eigenvalue problem $g^{L}_{\mu\nu}V^{\nu} = \lambda g^{R}_{\mu\nu}V^{\nu}$ such that the relation~\eqref{eq:lorentzianfromriemannian} is satisfied. There is a unique eigenvector with negative eigenvalue, normalized so that $g^{R}_{\mu\nu}V^{\mu}V^{\nu} = 1$. Its restriction to $\Sigma$ yields another global section $V^{\mu} \in \Gamma(W, S(\Sigma))$.

The kink number of $V^{\mu}$ with respect to $\Sigma$ is the signed intersection number of the sections $V^\mu$ and $N^\mu$:
\[
    \kink(\Sigma; V^{\mu}) \equiv \sum_{i}^{m} \mathrm{sgn}(\chi_i),
\]
where $\chi_i$ are the intersection points and $\mathrm{sgn}(\chi_i) = +1$ iff the orientation of $S(\Sigma)$ at $\chi_i$ equals the product of the orientations of $V^{\mu}$ and $N^{\mu}$, and $-1$ otherwise.

Because $\Sigma$ is orientable and $3$-dimensional, one can choose a global frame $\{ (e^{\mu}_{0} = N^{\mu}, e^{\mu}_{i})~|~i = 1,2,3 \}$ on $\Sigma$, where $e^{\mu}_{0}$ is the inward unit normal and $e_{i}^\mu$ are tangent. Extending this frame to a collared neighborhood $C \cong \Sigma \times \left[0,1 \right] $, the decomposition
\[
    V^{\mu} = f^{0}e^{\mu}_{0} + f^{i}e^{\mu}_{i}, \qquad  (f^{0})^{2} + \sum_{i}(f^{i})^{2} = 1,
\]
defines a map $f^{a}: \Sigma \rightarrow S^{3}$, and
\[
    \kink(\Sigma; V^{\mu}) \equiv \deg(f^a).
\]

Since the space of Riemannian metrics on $W$ is contractible \cite{kinkstimemachines}, the kink number depends only on the Lorentzian metric:
\[
    \kink(\Sigma; V^{\mu}) \equiv \kink(\Sigma; g^{L}_{ab}).
\]
It does, however, change sign if the normal $N^\mu$ is reversed.

To compute this degree explicitly, following Refs.~\cite{harriott2006, harriott2005}, we introduce a flat metric on the collar $C$:
\begin{equation}
    \d s_{k}^{2} \equiv k_{\mu\nu}\d x^{\mu}\d x^{\nu} = \d w^{2} + \d s_{\Sigma}^{2},
    \label{eq:flatmetric}
\end{equation}
where $w\in\left[0,1 \right]$ and $\d s_{\Sigma}^{2}$ is a flat metric on $\Sigma$. Then
\begin{equation}
    \kink(\Sigma; g^{L}_{\mu\nu}) = \frac{1}{2\pi^2}\int \, J \, \d u^{1} \wedge \d u^2 \wedge \d u^3,
    \label{eq:kinkformula}
\end{equation}
in which $\{ u^{1}, u^2, u^3 \}$ are intrinsic coordinates on $\Sigma$ and
\begin{equation}
    J \equiv \det\begin{pmatrix}
        f^{0}      & f^{1}      & f^{2}      & f^{3}      \\
        D_{1}f^{0} & D_{1}f^{1} & D_{1}f^{2} & D_{1}f^{3} \\
        D_{2}f^{0} & D_{2}f^{1} & D_{2}f^{2} & D_{2}f^{3} \\
        D_{3}f^{0} & D_{3}f^{1}   & D_{3}f^{2}     & D_{3}f^{3}
    \end{pmatrix},
    \label{eq:jacobian}
\end{equation}
where $D_{\mu}f^{a} \equiv \partial_{\mu}f^{a} + \omega_{\mu b}^{a} f^{b}$ and $\omega_{\mu b}^{a}$ are the connection coefficients of $k_{\mu\nu}$.

Assuming a compact manifold $W$ with non-empty boundary such that $\partial W = \Sigma$, taking the inward normal as positive, the generalized Poincaré-Hopf theorem then reads \cite{kinksandtopologychange, building, Low, chamblin1996topology, CHAMBLIN1994357}
\begin{equation}
    \sum {i_{v}} = \chi(W)-\kink(\partial W;V^{\mu}),
    \label{eq:generalizedtheorem}
\end{equation}
where the sum is over the indices of isolated singularities of $V^\mu$. In the more general case of $\partial W = \Sigma_0 \sqcup \Sigma_1 \sqcup \cdots \sqcup \Sigma_m$,
\[
    \kink(\partial W;V^{\mu}) = \sum_{j = 0}^{m}~\kink(\Sigma_j;V^{\mu}).
\]

An everywhere non-singular Lorentzian metric on $W$ exists iff
\[
    \chi(W) = \kink(\partial W;V^{\mu}).
\]

Finally, if the vector field is everywhere timelike or spacelike on $\partial W$ the kink number vanishes \cite{building}, reducing to the case studied by Geroch and Reinhart \cite{gerochtopology, REINHART1963173}.

\subsection{Trivial 2+1 cobordisms}
\label{subsec:trivialcobordism}

As a starting point, we consider a trivial cobordism in which no change in topology occurs. On this cobordism we give a detailed calculation of the kink number and its application to the generalized Poincaré-Hopf theorem. This calculation will show that the dominant contribution to the kink number comes from the regions where the vector field changes from timelike to spacelike and vice versa---a result that we later exploit in the $4$-dimensional case.

In accordance with the setup presented in the previous section, and because the cobordism is trivial, we model the $3$-dimensional spacetime as a solid cylinder $W = D^1 \times D^2$, representing the transition $D^2 \rightarrow D^2$. The boundary $\partial W$ has three components: the timelike component $D^1 \times S^1 \cong [0,1] \times S^1$, the initial spacelike configuration $\{0\} \times D^2$, and the final spacelike configuration $\{1\} \times D^2$.

\begin{figure}
     \centering
     \begin{subfigure}{0.35\linewidth}
         \centering
         \includegraphics[width=\linewidth]{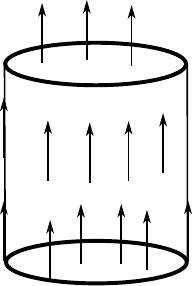}
         \caption{}
         \label{fig:figure3a}
     \end{subfigure}
     \hfill
     \begin{subfigure}{0.5\linewidth}
         \centering
         \includegraphics[width=\linewidth]{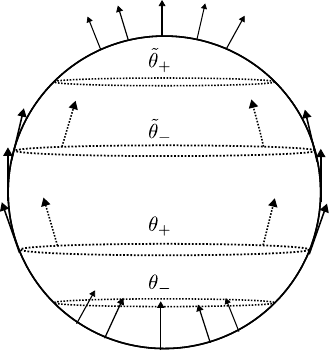}
         \caption{}
         \label{fig:figure3b}
     \end{subfigure}
    \caption{(a) The trivial $3$-dimensional cobordism $W$ is represented by a tube in $3$D Minkowski space. (b) To smooth out the corners, the solid cylinder $D^1 \times D^2$ is deformed to $D^3$, and the vector field is transformed accordingly.}
    \label{fig:figure3}
\end{figure}

For simplicity, we choose a constant vector field tangent to $[0,1] \times S^1 $, normal-entering on $\{ 0 \} \times D^2$ and normal-exiting on $\{ 1 \} \times  D^2$, as illustrated in Fig.~\ref{fig:figure3a}. This vector field extends trivially from the boundary to the interior of $W$ without zeros. In effect, we are selecting the timelike tube in $3$-dimensional Minkowski spacetime
\begin{equation}
    \d s^2 = -\d t^2 + \d r^2+r^2 \d \phi^2,
    \label{eq:Minkowskimetric}
\end{equation}
bounded by $0 \le t \le 1$ and $r \le 1$. On this tube the vector field $V^{\mu} = \partial_t$ arises from the diagonalization of~\eqref{eq:Minkowskimetric} with respect to the auxiliary Riemannian metric $\delta_{\mu\nu} = \mathrm{diag}(1,1,r^2)$.

Then, according to the generalized Poincaré-Hopf theorem,
\[
    \chi (W) = \kink\left(\partial W, V^{\mu} \right) = 1.
\]
Because the kink number vanishes on purely spacelike and timelike boundary components, the only contributions to the integral~\eqref{eq:kinkformula} come from the two circles $\{0\} \times S^1$ and $\{1\} \times S^1$, where $\partial W$ changes from timelike to spacelike and vice versa.

To compute explicitly, we smooth out the corners of the cylinder, deforming $D^{1} \times D^{2}$ to $D^{3}$ and the vector field accordingly---as in Fig.~\ref{fig:figure3b}, where we show a cross-section of the deformed tube together with the vector field. The final result is unaffected by this transformation, since the kink number is a topological invariant.

This smoothing process is done in such a way that, in spherical coordinates $0 \le \theta \le \pi$, $0 \le \phi \le 2\pi$---with $\theta$ measured from bottom to top \cite{kinkstimemachines}---the vector field is inward-normal in $0 \le \theta \le \theta_{-}$, outward-normal in $\tilde{\theta}_{+} \le \theta \le \pi$, and tangent in $\theta_{+} \le \theta \le \tilde{\theta}_{-}$. The two strips $[\theta_{-}, \theta_{+}]$ and $[\tilde{\theta}_{-}, \tilde{\theta}_{+}]$ are the regions where the field changes, respectively, from inward-normal to tangent, and from tangent to outward-normal. With $f^{\phi}=0$ everywhere on $\partial W$, we have
\[
    \begin{cases}
      \theta\in[0,\theta_{-}]\cup[\tilde{\theta}_{+},\pi]: & f^{t}\neq0,\ f^{\theta}=0,\\[2pt]
      \theta\in[\theta_{+},\tilde{\theta}_{-}]: & f^{t}=0,\ f^{\theta}\neq0,\\[2pt]
      \theta\in[\theta_{-},\theta_{+}]\cup[\tilde{\theta}_{-},\tilde{\theta}_{+}]: &
          f^{t}\neq0,\ f^{\theta}\neq0.
    \end{cases}
\]

The last step before computing the degree requires choosing the flat metric~\eqref{eq:flatmetric}. For $\partial W \cong S^2$, examples from Refs.~\cite{harriott2005,harriott2006} are
\begin{align*}
    \d s^{2}_{k,1} &= \d t^{2} + \cos^{2} \theta \, \d \theta^{2} + \sin^{2} \theta \, \d \phi^{2}, \\
    \d s^{2}_{k,2} &= \d t^{2} + \frac{a^{2}}{4}\sec^{4}\left(\frac{\theta}{2} \right)\, \d \theta^{2} + a^{2}\tan^{2} \left(\frac{\theta}{2} \right) \, \d \phi^{2},
\end{align*}
both yielding the same non-vanishing connection coefficients $\omega^{\theta}_{\phi\phi} = - \omega^{\phi}_{\phi \theta} = -1$. Hence
\[
    \begin{aligned}
      D_{\theta}f^{t} &=\partial_{\theta}f^{t}, & \qquad D_{\phi}f^{t} &=0,\\
      D_{\theta}f^{\theta} &=\partial_{\theta}f^{\theta}, & \qquad D_{\theta}f^{\phi} &=0,\\
      D_{\phi}f^{\theta} &=0, & \qquad D_{\phi}f^{\phi} &=f^{\theta},
    \end{aligned}
\]
which can now be inserted into~\eqref{eq:jacobian}. The determinant of the Jacobian is
\[
    J = f^{\theta} \Bigl( f^{t}\, \partial_{\theta}f^{\theta} - f^{\theta}\, \partial_{\theta}f^{t} \Bigr),
\]
from which the kink number---as defined in~\eqref{eq:kinkformula}, specialized to $3$-dimensions---is
\[
    \kink\left(\partial W, V^{\mu}\right) = \frac{1}{4\pi} \int_{S^{2}} J \, \d \theta \wedge \d \phi.
\]

Because the Jacobian vanishes outside the two transition strips, the only non-zero contribution is
\begin{equation}
    \kink\left(\partial W, V^{\mu}\right) =\frac{1}{2} \left( \int_{\theta_{-}}^{\theta_{+}} J \, \d \theta + \int_{\tilde{\theta}_{-}}^{\tilde{\theta}_{+}} J \, \d \theta \right).
    \label{eq:relevantintegral}
\end{equation}

We evaluate the integrals for the two regions separately. In the bottom strip the vector field changes from inward-normal to tangent. A possible choice for its components is
\[
    (f^{t}, f^{\theta}) = \frac{\left( 1-g(\theta), \, g(\theta) \right)}{\sqrt{1 - 2g(\theta) + 2g^2(\theta)}}.
\]
where $g(\theta)$ varies smoothly and monotonically from $0$ to $1$ as $\theta$ goes from $\theta_{-}$ to $\theta_{+}$, with $g'(\theta_{\pm}) = 0$. The square-root factor in the denominator ensures that the vector field has unit norm. Both the functional form of $g(\theta)$ and the interval $\Delta \theta \equiv \theta_{+} - \theta_{-}$ are arbitrary; the integral is independent of these choices. We then find
\begin{equation}
     \int_{\theta_{-}}^{\theta_{+}} J \, \d \theta
    = \int_{\theta_{-}}^{\theta_{+}} \frac{g(\theta)~g'(\theta)}{\left(1 - 2g(\theta) + 2g^2(\theta)\right)^{\frac{3}{2}}} \, \d \theta =1.
    \label{eq:bottomvectorfield}
\end{equation}

For the top strip, the calculation is analogous; here the vector field changes from tangent to outward-normal\footnote{Recall that, by convention, the outgoing normal is negative, so $f^{t}$ is negative in this region.}:
\[
    (f^{t}, f^{\theta}) = \frac{\left( -g(\theta) ,\, 1-g(\theta) \right)}{\sqrt{1 - 2g(\theta) + 2g^2(\theta)}},
\]
with $g(\theta)$ increasing from $0$ to $1$ on $[\tilde{\theta}_{-},\tilde{\theta}_{+}]$ and $g'(\tilde{\theta}_{\pm})=0$. Hence
\begin{equation}
    \int_{\tilde{\theta}_{-}}^{\tilde{\theta}_{+}} J \, \d \theta
    = \int_{\theta_{-}}^{\theta_{+}} J \, \d \theta = 1.
    \label{eq:topvectorfield}
\end{equation}

Substituting \eqref{eq:bottomvectorfield} and \eqref{eq:topvectorfield} into \eqref{eq:relevantintegral} gives
\[
    \kink\left(\partial W, V^{\mu}\right) = 1.
\]

This confirms that the vector field can be extended inside $W$ without singularities for the transition $D^2 \rightarrow D^2$. The same is not true, for example, for the transition $D^2 \sqcup D^2 \rightarrow \emptyset$. In the top region the field would have to change from tangent to inward-normal, and the integral~\eqref{eq:topvectorfield} would equal $-1$. Hence $\kink\left(\partial W, V^{\mu}\right) = 0$, so a singularity of index $+1$ must occur in $W$.

\subsection{The wormhole nucleation cobordism}
\label{subsec:wormholenucleation}

\begin{figure}
     \centering
     \begin{subfigure}{0.35\linewidth}
         \centering
         \includegraphics[width=\linewidth]{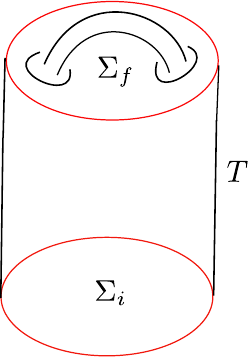}
         \caption{}
         \label{fig:figure4a}
     \end{subfigure}
     \hfill
     \begin{subfigure}{0.6\linewidth}
         \centering
         \includegraphics[width=\linewidth]{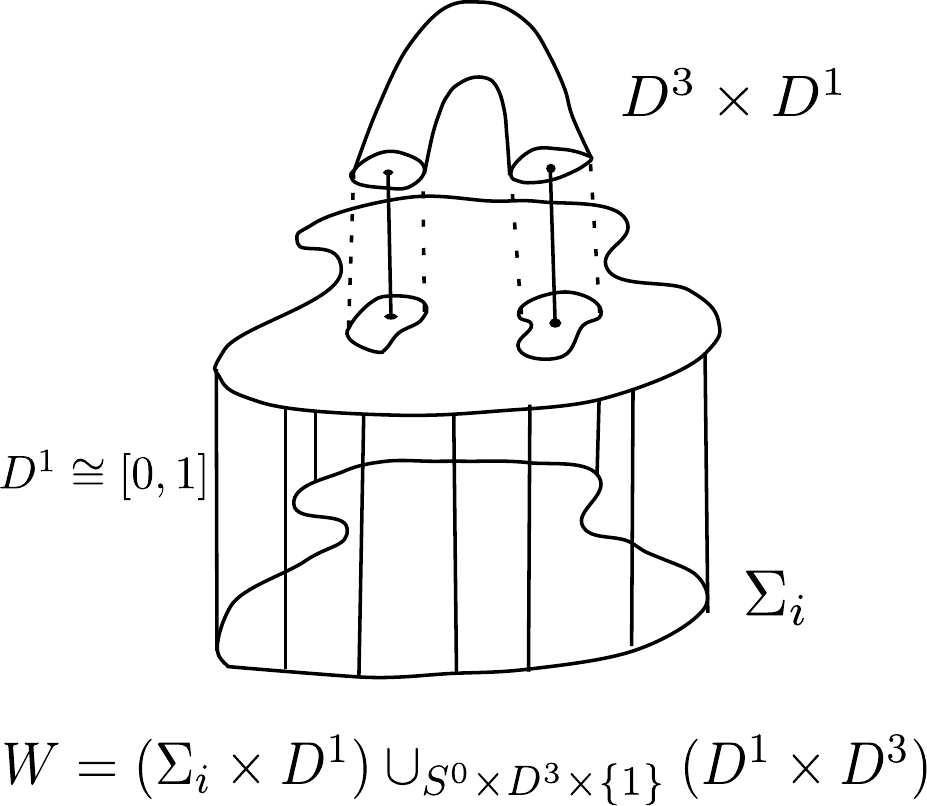}
         \caption{}
         \label{fig:figure4b}
     \end{subfigure}
    \caption{(a): The boundary $\partial W$ consists of three components---$\Sigma_i$, $\Sigma_f$, and $T$---whose interiors are spacelike, spacelike, and timelike, respectively. On each of these interiors the kink number is zero, independently of whether $\Sigma_i$ or $\Sigma_f$ is multiply connected. (b): The simplest cobordism for the $3$-dimensional $0$-surgery is obtained by attaching a ($D^1 \times D^3$)-handle to $\Sigma_i \times D^1$ along their common boundary $S^0 \times D^3$ at the point $\{1\} \in D^1$.}
    \label{fig:figure4}
\end{figure}

The result of the previous calculation can be generalized to any dimension: the kink number associated with the sphere
\[
    x_1^2 + x_2^2 + \cdots + x_n^2 = r_0^2
\]
in an $n$-dimensional Minkowski spacetime is equal to one. Moreover, as anticipated, the contribution to the kink number comes only from the regions of $\partial W$ where the vector field changes from timelike to spacelike and vice versa; it is therefore independent of the interior topology of $\Sigma_i$ and $\Sigma_f$.

It follows that, in the scenario we are investigating, where all the ``topological charge'' is confined within a compact region \cite{kinksandsingularities}, the spacetime $M$, which contains the compact cobordism $W$, may be assumed to be asymptotically flat. In terms of the boundary data on $\partial W$, represented schematically in Fig.~\ref{fig:figure4a}, this requirement translates as the statement that the vector field has the same number of kinks as a sphere in Minkowski space:
\[
    \kink(\partial W; g^L) = 1.
\]

From this assumed kink number we can check whether there are obstructions to the existence of a Lorentzian metric on $W$ arising from the generalized Poincaré-Hopf theorem~\eqref{eq:generalizedtheorem}. Topological surgery theory (see Section~\ref{sec:topologicalsurgery}) shows that a 3-dimensional $0$-surgery corresponds to attaching the handle $D^1 \times D^3$ to $\Sigma_i \times [0,1]$, as depicted in lower dimensions in Fig.~\ref{fig:figure4b}:
\[
    W = \Bigl( \Sigma_i \times [0,1] \Bigr) \cup_{S^0 \times D^3 \times \{1\}} \Bigl( D^1 \times D^3 \Bigr).
\]

Applying~\eqref{eq:generalizedtheorem} we obtain
\[
    \chi(W) = \chi(\Sigma_i) - 1 = \kink\left(\partial W, v \right) = 1.
\]

Because $\Sigma_i$ is a $3$-manifold with boundary, its Euler characteristic equals half that of its boundary $\partial \Sigma_{i}$. This boundary, in turn, is a closed oriented surface and thus its genus~$g$ satisfies $\chi(\partial \Sigma_{i}) = 2 - 2g$. Therefore:
\[
    \chi(\Sigma_i) = \frac{1}{2}\chi(\partial \Sigma_i) = 1 - g.
\]
Hence from $\chi(W) = 1$ we must have $g = -1$, which is impossible. Thus, the condition for an everywhere non-singular Lorentzian metric cannot be met for this simple cobordism; a singularity of index $-1$ is required\footnote{Since the initial manifold is assumed to be simply connected, we take $g = 0$. Thus, from the Poincaré-Hopf theorem, the index is $-1 - g = -1$. If the genus was positive, the index would be lower, but since the gradient of a Morse function can only have singularities of index $\pm 1$, that would correspond to multiple singularities of index $-1$ and therefore multiple successive surgeries.}---consistent with the index of the Morse gradient for a $0$-surgery, cf.~\eqref{eq:localmorse}.

Nevertheless, an everywhere non-degenerate Lorentzian metric can still be achieved via the Misner trick, by taking the connected sum with the complex projective plane $\mathbb{CP}^{2}$ to adjust the Euler characteristic. With the initial manifold $\Sigma_{i}=D^{3}$ the new cobordism is
\[
    W =\underbrace{\Big( \left( D^3 \times [0,1] \right) \cup_{S^0 \times D^3 \times \{1\}} \left( D^1 \times D^3 \right) \Big)}_{\mathbb{M}}\,\#\,\mathbb{CP}^{2}.
\]

The construction of this cobordism is the subject of Section~\ref{sec:wormholenucleation}. Before giving the explicit details, however, we discuss the selection rules for topology-changing spacetimes that arise from demanding a spin structure on $W$.

\subsection{Spin structure}
\label{subsec:spinstructure}

Among the problems that can plague Lorentzian cobordisms of topology-changing spacetimes is the possible obstruction to the existence of an $SL(2,\mathbb{C})$ spin structure on $W$ and the consequent difficulty of defining half-integer spin fields consistently. Such an obstacle, topological in nature, was pointed out by Gibbons and Hawking and has been used as a selection rule for topology-changing spacetimes \cite{selectionrules}.

Indeed, if $W$ has as boundary the disjoint union of $\Sigma_i$ and $\Sigma_f$, it is possible to introduce a topological invariant $u$ such that $u=0$ when $\partial W$ bounds a connected compact spacetime that admits an $SL(2,\mathbb{C})$ spin structure, and is spacelike with respect to the Lorentzian metric on $W$, and $u =1$ otherwise.

This invariant can be expressed as the mod-2 Kervaire semi-characteristic \cite{CHAMBLIN1994357}
\[
    u(\partial W) = \dim_{\mathbb{Z}_{2}}(H_{0}(\partial W;\mathbb{Z}_{2}) \oplus H_{1}(\partial W; \mathbb{Z}_{2})) \bmod 2,
\]
where $H_{n}(\partial W;\mathbb{Z}_{2})$ are the homology groups; $H_{0}(\partial W;\mathbb{Z}_{2})$ counts the connected components of $\partial W$ and $H_{1}(\partial W; \mathbb{Z}_{2})$ counts its ``one-dimensional holes''. The invariant is additive modulo 2 under disjoint unions,
\[
    u(\Sigma_{1} \sqcup \Sigma_{2}) = u(\Sigma_{1}) + u(\Sigma_{2}) \quad \bmod 2,
\]
and under connected sums it satisfies
\[
    u(\Sigma_{1} \# \Sigma_{2}) = u(\Sigma_{1}) + u(\Sigma_{2}) + 1 \quad \bmod 2.
\]

The $u$-invariant is linked to the usual obstruction to the existence of a spin structure---namely $w_2 = 0$, the vanishing of the second Stiefel–Whitney class---via a theorem of Milnor and Kervaire \cite{Kervaire1963}. This theorem states that the intersection pairing
\[
    h: H_2\left(W ; \mathbb{Z}_2\right) \times H_2\left(W ; \mathbb{Z}_2\right) \longrightarrow \mathbb{Z}_2
\]
satisfies
\[
    \operatorname{rank}(h) = \bigl(u(\partial W) + \chi(W)\bigr) \bmod 2.
\]

When $\partial W$ is everywhere spacelike and $W$ admits a Lorentzian metric, so that $\chi(W) = 0$, we have
\[
    \operatorname{rank}(h) = u(\partial W) \bmod 2,
\]
and, according to \cite{LAWSON1989},
\[
    \operatorname{rank}(h) \bmod 2 = 0 \quad \Longleftrightarrow \quad w_2(W) = 0.
\]

If instead $\partial W$ has changing causal character and so a non-zero kink number, the existence of a Lorentzian metric requires \eqref{eq:generalizedtheorem}; hence $W$ admits a spin structure iff \cite{kinksandtopologychange, CHAMBLIN1994357}
\[
    u(\partial W)=\operatorname{kink}\left(\partial W, g_L\right) \bmod 2.
\]

For the Lorentzian cobordism $W$ describing the wormhole nucleation, the smoothed boundary is $\partial W\cong S^{1}\times S^{2}$, so that $u(\partial W)=0$, whereas $\kink(\partial W,g^{L})=1$; therefore $W$ does not admit a spin structure---as expected, since the $\mathbb{CP}^{2}$ used in the connected sum does not admit one.

Nevertheless, charged spinors can be defined consistently by adopting the generalized spin structure \cite{selectionrules}
\[
    \text{Spin}^{c}(1,3) \cong SL(2,\mathbb{C}) \times_{\mathbb{Z}_2} U(1).
\]

A $\text{Spin}^{c}$ structure requires only that $w_{2}=c$, where $c$ is the mod-2 reduction of an integral class in $H^{2}(W;\mathbb Z)$; all spinorial fields are then $U(1)$-charged and the connection holonomy resolves the ambiguity in defining fermions \cite{HAWKING1978349}. This generalized spin structure always exists on compact $4$-manifolds \cite{Killingback, Whiston1975}.

\section{Wormhole nucleation}
\label{sec:wormholenucleation}

\subsection{Introduction}

In Section~\ref{sec:topologychange}, we established that when a wormhole nucleates within a compact region of spacetime, the enforcement of an everywhere non-degenerate Lorentzian metric requires forming the connected sum between the Morse spacetime $\mathbb{M}$---that is, the cobordism carrying a singular Lorentzian metric derived from the Morse gradient $v^\mu$---and the complex projective space $\mathbb{CP}^{2}$ endowed with its own singular Lorentzian metric. This section elaborates on the geometric and topological aspects of this construction.

Taking the connected sum $W = \mathbb{M} \# \mathbb{CP}^{2}$ requires the removal of a 4-ball $B_1^4$ around the critical point $p \in \mathbb{M}$ of the Morse function, leaving a boundary $3$-sphere $V_1$. Similarly, we remove a $4$-ball $B_2^4$ from $\mathbb{CP}^{2}$ around the unique isolated singular point $q$ of a chosen vector field $w^\mu$, yielding another boundary $3$-sphere $V_2$. The identification of these boundaries is achieved via an orientation-reversing gluing diffeomorphism $h: V_1 \to V_2$, ensuring that the resulting space $W$ inherits a topological-manifold structure through the quotient topology \cite{Milnor1965}.

Following the non-singular construction in Ref.~\cite{Yodzis1972}, the neighborhoods of $V_1$ and $V_2$ in $\mathbb{M}$ and $\mathbb{CP}^{2}$, respectively, are modeled as product neighborhoods $U_1 \cong V_1 \times (-1,0]$ and $U_2 \cong V_2 \times [0,1)$, with local parametrization given by diffeomorphisms $g_1(x_1,t_1)$ for $x_1 \in V_1, \; t_1 \in (-1,0]$, and $g_2(x_2,t_2)$ for $x_2 \in V_2, \; t_2 \in [0,1)$, subject to the condition $g_2(x_2, t_2) = g_2(h(x_1), t_2)$.

The vector field $v^\mu$ restricted to $V_1 \times (-1,0]$ is then transported to $V_2 \times \{0\}$ by successive application of the pushforwards of $g_1$ and $g_2^{-1}$, yielding the transformed field
\[
    v^\mu_{V_2} = \d g_{V_2}^{-1} \circ \d g_{V_1} \circ v_{V_1} \circ g_{V_1} \circ g_{V_2}^{-1},
\]
where the subscripts indicate the restrictions to $V_1$ and $V_2$.

On the other hand, within $\mathbb{CP}^{2}$, the neighborhood $U_2$ can be decomposed into two sub-regions: $U''$, corresponding to the portion mapped by $g^{-1}_2$ for $t_2 \in [1/2,1)$, and $U'_2$, covering $t_2 \in [0,1/2]$. If $w'^\mu$ denotes the restriction of $w^\mu$ to $U''$, the smooth gluing condition requires that $w'^\mu$ can be continuously deformed within $V_2 \times [0,1/2]$ to match $v_{V_2}^\mu$. This deformation is feasible provided that the vector fields $v_{V_2}^\mu$ and $w'^\mu$, normalized with respect to the respective Riemannian metrics and interpreted as maps $S^3 \rightarrow S^3$, possess the same degree \cite{Topologie}.

This is shown to be the case in Section~\ref{sec:gluing}. In this section, we focus on the construction of the two singular Lorentzian metrics, characterizing their geometries and examining the violation of the standard energy conditions.

\subsection{The Morse spacetime}
\label{subsec:morsespacetime}

We begin by constructing the singular Lorentzian metric on the cobordism $\mathbb{M}$. From Theorem~\ref{th:existmorse}, there always exists a Morse function on $\mathbb{M}$; for a nucleating wormhole, this corresponds to a $0$-surgery. In a neighborhood of the critical point $p$ of the Morse function,
there exists a Cartesian coordinate system $x^{\mu} = \{ x^1, x^2, x^3, x^4 \}$ such that the Morse function can be written in the form~\eqref{eq:localmorse}
\[
    f(x^{\mu}) = \frac{1}{2} \left(-(x^1)^2 + (x^2)^2 + (x^3)^2 + (x^4)^2 \right),
\]
where, without loss of generality, we assumed $f(p) = 0$ and rescaled the Morse function to produce the factor of $1/2$. The construction of the singular spacetime $\mathbb{M}$ starting from the Morse gradient field is well known and follows Refs.~\cite{Yodzis1972, Yodzis1973, kostantinov, dowker2002topology}, in which the Lorentzian metric is obtained from Eq.~\eqref{eq:lorentzianfromriemannian}.

The choice of the Riemannian metric is arbitrary. Here we consider the level sets of the Morse function as embedded in $\mathbb{R}^5$, so that a natural Riemannian metric is the one induced on the graph of $f$ from the Euclidean metric in $\mathbb{R}^5$:
\begin{equation}
    g^{R}_{\mu\nu} = \delta_{\mu\nu} + \partial_{\mu}f \partial_{\nu}f,
    \label{eq:riemannianstart}
\end{equation}
where $\delta_{\mu\nu}$ denotes the Euclidean metric on $\mathbb{R}^4$. The singular Lorentzian metric is then constructed from the gradient of the Morse function as
\begin{equation}
    \begin{aligned}
        g^{L}_{\mu\nu} &= g^{R}_{\mu\nu} - 2 \frac{\partial_{\mu}f \partial_{\nu}f}{g_{R}^{\alpha\beta}\partial_{\alpha}f \partial_{\beta}f} = \delta_{\mu\nu} - \left(\frac{n+2}{n}\right) \partial_{\mu}f \partial_{\nu}f,
    \end{aligned}
    \label{eq:firstmorselorentzian}
\end{equation}
where $n = \delta^{\alpha\beta} \partial_{\alpha}f \partial_{\beta}f$ should not be confused with $g_{R}^{\alpha\beta}\partial_{\alpha}f \partial_{\beta}f$.

It is worth noting that the inverse metric is
\[
  g_{L}^{\mu\nu} = g_{R}^{\mu\alpha}g_{R}^{\nu\beta}g^{L}_{\alpha\beta},
\]
and, by construction,
\[
    g^{L}_{\mu\nu} \partial^{\mu}f\partial^{\nu}f = -(g_{R}^{\alpha\beta}\partial_{\alpha}f \partial_{\beta}f) < 0.
\]
Hence, the vector field $\partial^{\mu}f$ is everywhere timelike, and defines the light-cone directions in the spacetime.

Given the spherical symmetry of the Morse function, it is useful to express the Lorentzian metric in spherical coordinates:
\[
    \begin{aligned}
        x^1 &= z, \\
        x^2 &= r \sin(\theta)\sin(\phi), \\
        x^3 &= r \cos(\theta), \\
        x^4 &= r \sin(\theta) \cos(\phi),
    \end{aligned} \quad \quad
    \begin{aligned}
        &z \in \mathbb{R},\\
        &0 < r < \infty, \\
        &0 \le \theta \le \pi, \\
        &0 \le \phi \le 2\pi.
    \end{aligned}
\]

In these coordinates the metric becomes
\begin{equation}
    \d s^2_{\M} = \frac{-U~(z\,\d z - r\,\d r)^2 + (r\,\d z + z\,\d r)^2}{z^2 + r^2} + r^2 \d \Omega_{2}^{2}, \label{eq:morsemetricZR}
\end{equation}
with $\d \Omega_{2}^{2}$ the line element on $S^2$ and $U \equiv U(z,r) = r^2+z^2 +1$. Note that the metric has a curvature singularity at the critical point $z=r=0$ of the Morse function, as expected from the asymptotic behavior of the curvature tensor near the critical point \cite{kostantinov,MYuKonstantinov_1986}:
 \[
     R^\alpha_{\,\, \beta \gamma \delta}, R_{\alpha \beta}, R \simeq \mathcal{O}(1) + \mathcal{O}\left(\frac{1}{g_R^{\alpha\beta}\partial_\alpha f \partial_\beta f}\right).
 \]
However, with the aim of desingularizing the metric via the connected sum, we consider only the points outside an arbitrarily chosen $4$-disk centered at $z=r=0$.

A further change of coordinates, exploiting the simplified form of the Morse function in the $(z,r)$ coordinates, diagonalizes the metric. Define
\[
    t = \frac{1}{2}(r^2 - z^2), \quad \rho = zr.
\]
In these new coordinates Eq.~\eqref{eq:morsemetricZR} becomes
\begin{equation}
    \d s^2_{\M} = \frac{-A\, \d t^2 + \d \rho^2}{2 \sqrt{t^2+\rho^2}} + R\, \d \Omega^2_2,
    \label{eq:diagonalmorsemetric}
\end{equation}
where
\[
    \begin{aligned}
        A \equiv A(t,\rho) &= 1+2 \sqrt{t^2+\rho^2}, \\
        R \equiv R(t,\rho) &= t +\sqrt{t^2+\rho^2}.
    \end{aligned}
\]

These coordinates make it explicit that the hypersurfaces of constant $t$ are the level sets of the Morse function and that the metric~\eqref{eq:diagonalmorsemetric} describes the opening of the wormhole, with a naked singularity at $t=\rho=0$. Indeed, for $t<0$ there is no wormhole and the throat is closed because $R(t,0)=0$; for $t>0$ the wormhole opens with $R(t,0)=2t>0$. Some plots of $R(t,\rho)$ for different values of $t$ are given in Fig.~\ref{fig:figure5}, where for $t<0$ the minimum of $R$ is zero.

\begin{figure}
    \centering
    \includegraphics[width=0.9\linewidth]{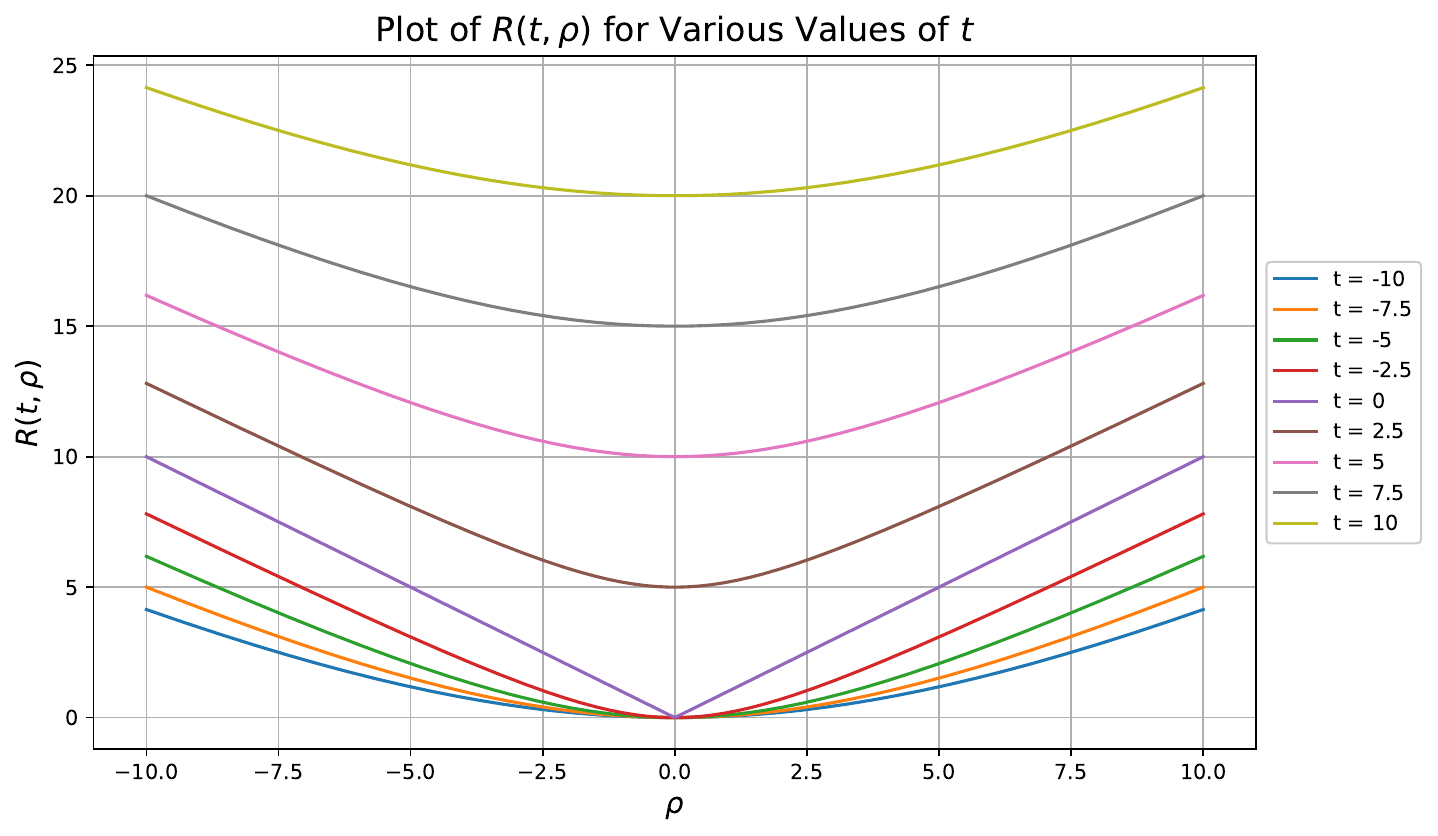}
    \caption{For $t<0$ the wormhole is closed and the minimum of $R(t,\rho)$ occurs at $\rho = 0$ with $R(t,\rho) = 0$; for $t>0$ the wormhole opens and $R(t,0) > 0$.}
    \label{fig:figure5}
\end{figure}

In particular, one can construct an embedding diagram for the induced metric on the hypersurfaces $t=t_c =\text{const}$ and $\theta = \pi /2$:
\[
    \d s^2_{\text{E}} = \frac{\d \rho^2}{2\sqrt{t_c^2+\rho^2}} + R(t_c,\rho)\, \d \phi^2.
\]
Changing variables to $r^2=R(t_c,\rho)$ yields
\[
    \d s^2_{E} = \frac{2(r^2-t_c)}{r^2-2t_c} \d r^2 + r^2\, \d \phi^2.
\]
Comparing with the three-dimensional Euclidean metric in cylindrical coordinates,
\[
    \d s^2 = \d z^2 +\d r^2 +r^2~\d \phi^2,
\]
the embedding surface satisfies
\[
    \left[1+ \left(\frac{\d z}{\d r} \right)^2  \right] = \frac{2(r^2-t_c)}{r^2-2t_c},
\]
whose solution, with integration constant $c_{1}$, is
\[
    z(r) = \pm \sqrt{r^2-2t_c}+c_1.
\]

For different values of $t_c$ we obtain embedding diagrams shown in Fig.~\ref{fig:figure6}, where we set $c_1 = 0$. For $t_c <0$ the wormhole remains closed until the critical level $t_c = 0$ and, thereafter, opens for $t_c >0$.

\begin{figure}
    \begin{subfigure}[b]{0.3\linewidth}
    \includegraphics[width=\linewidth]{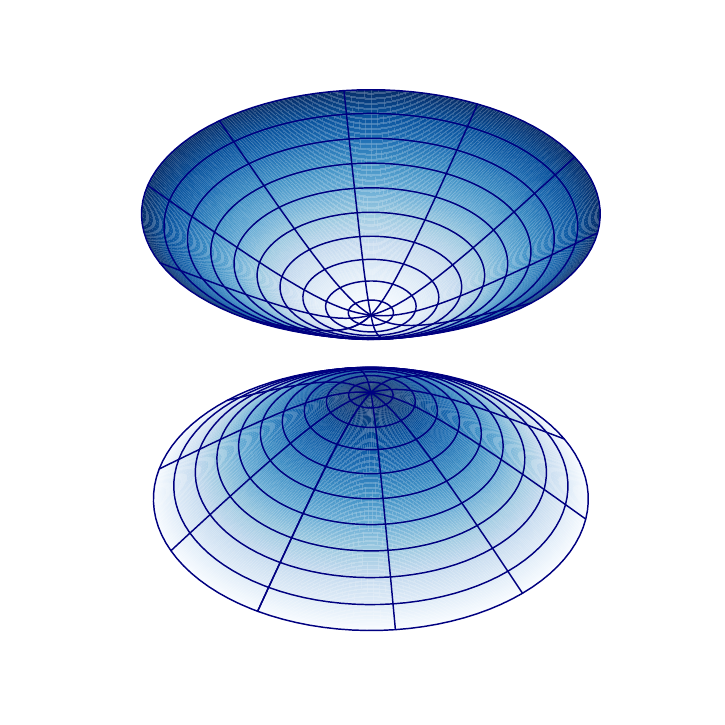}
    \caption{$t_c = -4$}
 \end{subfigure}
 \hfill
 \begin{subfigure}{0.3\linewidth}
 \includegraphics[width=\linewidth]{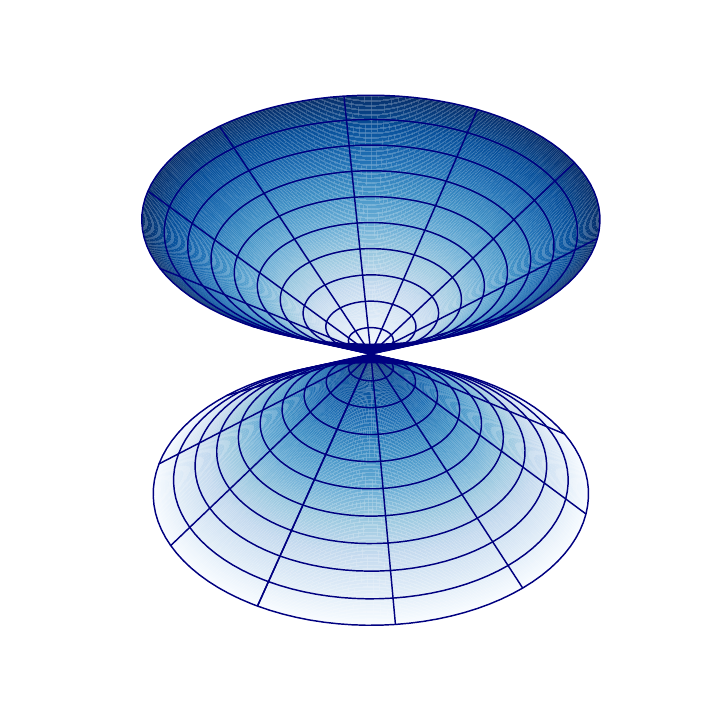}
    \caption{$t_c = 0$}
 \end{subfigure}
 \hfill
 \begin{subfigure}[b]{0.3\linewidth}
 \includegraphics[width=\linewidth]{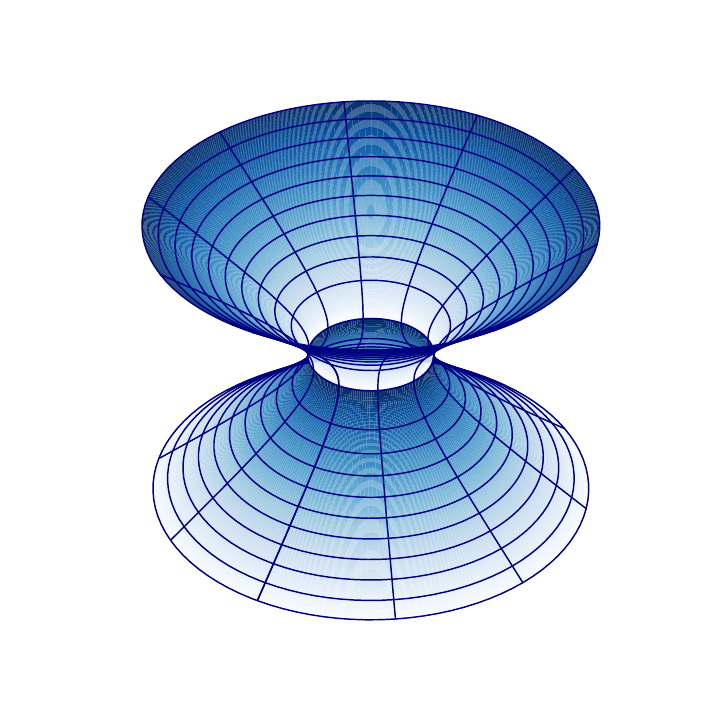}
    \caption{$t_c = 4$}
 \end{subfigure}
\caption{Embedding diagrams for the wormhole metric~\eqref{eq:diagonalmorsemetric}, obtained by rotating the profile $z(r)$ about the $z$–axis. For $t_c<0$ there is no wormhole and the two spatial regions are disconnected. At the critical level $t_c=0$ the throat closes, whereas for $t_c>0$ the wormhole opens with a dynamic throat.}
\label{fig:figure6}
\end{figure}

To analyze the behavior of geodesics in this spacetime we note that it possesses three Killing vectors, arising from the spherical symmetry; they are preserved by the Lie bracket between $\partial^\mu f$ and the Killing vectors of $g^{R}_{\mu\nu}$ (see Theorem~\ref{th:killingvectors}):
\[
    \begin{aligned}
        \xi_{1}^{\mu} &= \sin(\phi) \partial_\theta + \cos(\phi)\cot(\theta)\partial_\phi, \\
        \xi_{2}^{\mu} &= \cos(\phi) \partial_\theta - \sin(\phi)\cot(\theta)\partial_\phi, \\
        \xi_{3}^{\mu} &= \partial_\phi.
    \end{aligned}
\]

The geodesics satisfy the Euler-Lagrange equations derived from the Lagrangian
\[
\mathcal{L} = \frac{1}{2 \sqrt{t^2 + \rho^2}}\left(-A\dot{t}^2 + \dot{\rho}^2 \right)+R \left(\dot{\theta}^2+ \dot{\phi}^2 \sin^2\theta \right) = \epsilon,
\]
where $\epsilon$ is zero for lightlike geodesics and negative for timelike geodesics, and a dot denotes differentiation with respect to an affine parameter~$\lambda$.

Owing to spherical symmetry, the conserved quantities associated with $\xi^{\mu}_{1}$ and $\xi^{\mu}_{2}$ fix the orbital plane, which we may take as $\theta = \pi/2$, while $\xi^\mu_3$ yields the conserved angular momentum
\[
L = \xi_\mu^3 \frac{dx^\mu}{d \lambda} = R\dot{\phi} = \mathrm{const}.
\]
Substituting these conserved quantities, the reduced Lagrangian becomes
\[
\mathcal{L} = \frac{1}{2 \sqrt{t^2 + \rho^2}}\left(-A\dot{t}^2 + \dot{\rho}^2 \right)+\frac{L^2}{R} = \epsilon.
\]

Taking variational derivatives of $\mathcal{L}$ and employing the constraint $\mathcal{L}=\epsilon$, one obtains the geodesic equations
\begin{align*}
    \frac{t\left(\epsilon -L^2/R +\dot{t}^2 \right)}{t^2+\rho^2}+\frac{L^2 \partial_t R}{R^2} &= -\frac{\d}{\d \lambda}\left(\frac{\dot{t} A}{\sqrt{t^2+\rho^2}}\right), \\
    \frac{\rho\left(\epsilon -L^2/R +\dot{t}^2\right)}{t^2+\rho^2}+\frac{L^2 \partial_\rho R}{R^2} &= -\frac{\d}{\d \lambda}\left(\frac{\dot{\rho}}{\sqrt{t^2+\rho^2}}\right).
\end{align*}

We now examine radial geodesics, for which $L=0$, that traverse the wormhole after its formation ($t>0$). The geodesic equations then simplify to
\begin{align*}
    \frac{t(\epsilon +\dot{t}^2)}{t^2+\rho^2} &= -\frac{\d}{\d \lambda}\left(\frac{\dot{t} A}{\sqrt{t^2+\rho^2}}\right), \\
    \frac{\rho(\epsilon +\dot{t}^2)}{t^2+\rho^2} &= -\frac{\d}{\d \lambda}\left(\frac{\dot{\rho}}{\sqrt{t^2+\rho^2}}\right),
\end{align*}
and, for $t^2 + \rho^2 \ne 0$, reduce further to
\begin{subequations}
\begin{align}
    \ddot{t}(t^2 + \rho^2)A-\dot{t}(t\dot{t}+\rho \dot{\rho}) + t(\epsilon +\dot{t}^2) \sqrt{t^2+\rho^2}=0, \label{eq:geodesic1}\\
    \ddot{\rho}(t^2 + \rho^2) -\dot{\rho}(t\dot{t}+\rho \dot{\rho}) + \rho(\epsilon + \dot{t}^2) \sqrt{t^2+\rho^2}=0. \label{eq:geodesic2}
\end{align}
\end{subequations}

For timelike geodesics, we set $\epsilon = -1$ and solve numerically. Choosing the initial conditions $\rho(0) = 5$ and $t(0) = 1$, the initial velocity must satisfy
\[
-(1+2\sqrt{26})~\dot{t}^2_0 + \dot{\rho}^2_0 = -2\sqrt{26}.
\]
Taking the positive time direction, $\dot{t}_{0}=1$, gives $\dot{\rho}_0 = \pm 1$, representing geodesics either entering the wormhole or escaping to infinity. Because we are interested in the incoming case, we choose $\dot{\rho}_0 = -1$; the numerical result is plotted in Fig.~\ref{fig:figure7}.

\begin{figure}
    \centering
    \includegraphics[width=1\linewidth]{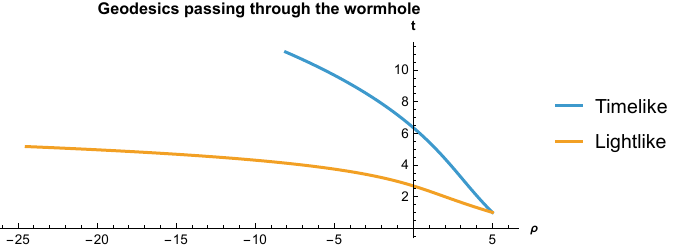}
    \caption{Numerical solutions of Eqs.~\eqref{eq:geodesic1} and \eqref{eq:geodesic2} for timelike and lightlike geodesics passing through the wormhole. The integration is performed over the affine interval $\lambda\in[0,10]$ with $t(0)=1$ and $\rho(0)=5$.\newline
    \textit{Blue curve:} Timelike geodesic with $\dot{t}_{0}=1$, $\dot{\rho}_{0}=-1$.\newline
    \textit{Orange curve:} Lightlike geodesic with $\dot{t}_{0}=1/2$, $\dot{\rho}_{0}=-\sqrt{1+2\sqrt{26}}/2$.}
    \label{fig:figure7}
\end{figure}

A similar procedure applies to lightlike geodesics, with $\epsilon = 0$. With the same initial positions and $\dot{t}_{0}=1/2$, the null condition
\[
    -(1+2\sqrt{26})~\dot{t}^2_0 + \dot{\rho}^2_0 = 0
\]
selects $\dot{\rho}_0 = - (\sqrt{1+2\sqrt{26}})/2$, and the resulting trajectory is also plotted in Fig.~\ref{fig:figure7}.

Before briefly discussing violations of the energy conditions for this dynamical wormhole, we want to identify, in a more rigorous way, the fundamental properties of the wormhole. These properties should be intrinsic and insensitive to the global structure of the spacetime. A general approach to this problem is given by the flare-out conditions presented in Ref.~\cite{FlareOut}. Indeed, a tetrad can be read off from the metric~\eqref{eq:diagonalmorsemetric}:
\[
\begin{aligned}
    \textbf{e}_t = \frac{\sqrt{2A}}{2(t^2 + \rho^2)^{\frac{1}{4}}}\d t, \quad \textbf{e}_\rho = \frac{\sqrt{2}}{2(t^2 + \rho^2)^{\frac{1}{4}}}\d \rho \\
    \textbf{e}_\theta = \sqrt{R}\,\d \theta, \quad \textbf{e}_\phi = \sqrt{R}\sin\theta\,\d \phi,
\end{aligned}
\]
from which a null tetrad, used in the Newman-Penrose formalism, can be constructed as
\[
\begin{aligned}
    \textbf{l}_+ &= \frac{1}{\sqrt{2}}(\textbf{e}_t + \textbf{e}_\rho), \quad \textbf{l}_{-} = \frac{1}{\sqrt{2}}(\textbf{e}_t - \textbf{e}_\rho),\\
    \textbf{m}&=\frac{1}{\sqrt{2}}(\textbf{e}_\theta + \i\textbf{e}_\phi), \quad \bar{\textbf{m}}=\frac{1}{\sqrt{2}}(\textbf{e}_\theta - \i\textbf{e}_\phi),
\end{aligned}
\]
where $l^\mu_+$ and $l^\mu_{-}$ are real null vectors and $m^\mu$ and $\bar{m}^\mu$ are complex-conjugate null vectors. Using this null tetrad, one can deduce that the Morse spacetime is of Petrov type D.

This implies that there exist two principal null directions:
\[
    k^{\mu}_{1,2} = (t^2+\rho^2)^{\frac{1}{4}} \left(-\frac{\partial_{t}}{\sqrt{A}} \pm \partial_\rho \right),
\]
which correspond to $k_1^{\mu} =l_{+}^{\mu}$ and $k_{2}^{\mu} = l_{-}^{\mu}$ and satisfy
\[
   l^{\pm}_{[\mu} C_{\nu]\rho\sigma[\alpha}l^{\pm}_{\beta]}\, l_{\pm}^\rho l_{\pm}^\sigma  = 0,
\]
where $C_{\mu\nu\rho \sigma}$ is the Weyl tensor. The principal null directions describe null-geodesic congruences:
\[
   l_{\pm}^\nu\nabla_\nu l_{\pm}^\mu = f_{\pm}(t,\rho) l^\mu_{\pm},
\]
where
\[
    f_{\pm}(t,\rho) =\frac{t\sqrt{A} \pm (t^2+\rho^2) \partial_\rho A -\rho A}{2A(t^2+\rho^2)^\frac{3}{4}}.
\]

Defining the spatial projector
\[
    \gamma_{\alpha \beta}\,\d x^\alpha \d x^\beta = R(t,\rho)(\d \theta^2 + \sin(\theta)^2 \d \phi^2),
\]
the following relations hold:
\begin{align*}
    &l_{+}^\mu l^{+}_\mu = l_{-}^\mu l^{-}_\mu = 0,\quad
    l_{+}^\mu l^{-}_\mu = l_{-}^\mu l^{+}_\mu = -1,\\
    &l_{+}^\alpha \gamma_{\alpha \beta} = l_{-}^\alpha \gamma_{\alpha \beta} = 0,\quad
    \gamma^{\alpha}_{\delta}\gamma^{\delta \sigma} = \gamma^{\alpha \sigma}.
\end{align*}
Thus the Lorentzian metric~\eqref{eq:diagonalmorsemetric} can be decomposed as
\[
    \d s^{2}_{\M} = (\gamma_{\mu\nu} -l^{+}_{\mu}l^{-}_{\nu} - l^{-}_{\mu}l^{+}_{\nu})~\d x^{\mu} \d x^\nu.
\]

The wormhole throats are defined, for each congruence, as closed minimal $2$-surfaces $\Sigma_{\pm}$ (with $+$ corresponding to $l_{+}^\mu$ and $-$ to $l_{-}^\mu$). The condition for $\Sigma_{\pm}$ to be minimal is obtained by extremizing the area
\[
    A\left(\Sigma_{\pm}\right)=\int_{\Sigma_{\pm}} \sqrt{\gamma}~\d ^2x,
\]
so that $\delta (A) = 0$. This variation equals the expansion scalar for each null congruence \cite{FlareOut}:
\[
    \theta_{\pm} = \gamma^{\alpha \beta}\nabla_{\alpha}l_{\beta}^{\pm} = \frac{1}{2} \gamma^{a b} \frac{d \gamma_{a b}}{d u_{\pm}} = 0,
\]
where $u_\pm$ are affine parameters. Hence the throats are the set of points where the expansion scalar of both null congruences vanishes. In the dynamical situation two throats are generally expected that coincide in the static limit. We show that this is indeed the case for the metric~\eqref{eq:diagonalmorsemetric}; specifically, for $t<0$ the two minimal surfaces coalesce (zero area, $R=0$), whereas for $t>0$ they separate.

The vanishing of $\theta_{\pm}$ is not, however, sufficient to guarantee that $\Sigma_{\pm}$ are minimal; one must also have $\delta^{2}A\ge0$, which gives the flare-out condition
\[
    \frac{\d \theta_{\pm}}{\d u_{\pm}} \ge 0.
\]
For a generic wormhole the stronger averaged flare-out condition should hold:
\begin{equation}
    \int_{\Sigma_u} \sqrt{\gamma} \operatorname{sgn}\left(\frac{\d \theta}{\d u}\right) \d^2 x>0.
    \label{eq:strongflareout}
\end{equation}

We apply the conditions just stated to our case, using the null principal directions previously computed. The expansion scalar is then
\begin{equation}
    \theta_{\pm} = \frac{\left(\rho^{2}+t^{2}\right)^{1/4} \left(\pm \sqrt{A} \partial_{\rho} R - \partial_{t} R\right)}{R \sqrt{A}}.
    \label{eq:nullexpansion}
\end{equation}

Assuming $t^{2}+\rho^{2} \neq 0$, the throats are located by solving $\theta_{\pm}=0$; the solution is displayed in Fig.~\ref{fig:figure8}. As anticipated, for $t>0$ the two throats bifurcate with $\rho_{\pm} \neq 0$, while for $t<0$ they coalesce at $\rho_{\pm} = 0$.

\begin{figure}
    \centering
    \includegraphics[width=1\linewidth]{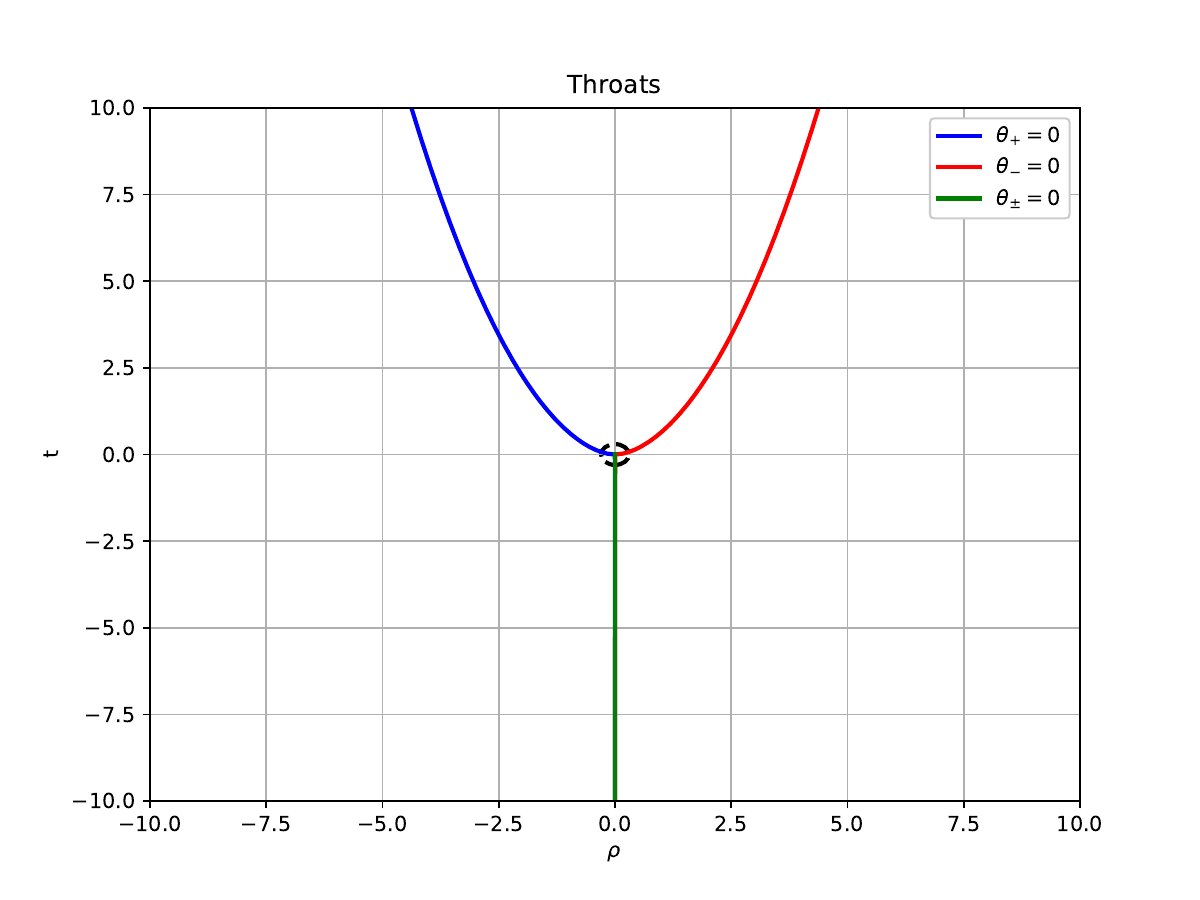}
    \caption{The region in which the expansion scalars $\theta_{\pm}$ vanish. The solid curve, produced numerically by the condition~\eqref{eq:nullexpansion}, shows that for $t>0$ the two wormhole throats bifurcate, whereas for $t<0$ they coalesce (with $R=0$ at $\rho=0$). Note that the bifurcation point, which coincides with the naked singularity, is not included in the spacetime.}
    \label{fig:figure8}
\end{figure}

At points where \eqref{eq:nullexpansion} holds, the null Raychaudhuri equation reduces to
\[
    \left.\frac{\d \theta_{\pm}}{\d u_{\pm}}\right|_{\theta_{\pm} = 0} = -R_{\mu\nu}l_{\pm}^{\mu}l_{\pm}^{\nu}.
\]
Here the term proportional to $\theta_{\pm}^2$ vanishes; because $l^\mu_{\pm}$ are surface-orthogonal, the vorticity tensor $\omega_{\mu\nu}$ is zero, and the shear tensor also vanishes at $\theta_{\pm}=0$. We consequently obtain that
\[
   \left.\frac{\d \theta_{\pm}}{\d u_{\pm}}\right|_{\theta_{\pm} = 0} = \frac{ At^2 \pm t\rho \sqrt{A}+ \rho^2 \sqrt{t^2+\rho^2}}{A^2R~(t^2+\rho^2)},
\]
which is strictly positive in an open neighborhood of the two throats at $t>0$, thereby satisfying the averaged flare-out condition \eqref{eq:strongflareout}. This expression is undefined for $t<0$, confirming that no wormhole is present at negative times.

We conclude this section with a brief analysis of the violations of the energy conditions. The analysis is conducted in the coordinates $\{ z,r,\theta, \phi \}$, in which the metric takes the form~\eqref{eq:morsemetricZR}. This yields a slight simplification in the expression that classifies the Hawking–Ellis types of the stress–energy tensor. By defining the function $F\equiv F(z,r) = z^2 +r^2$ a tetrad for the metric~\eqref{eq:morsemetricZR} is given by
\begin{subequations}
    \begin{align}
        \mathbf{e}_{0} &= \sqrt{\frac{F+1}{F}} (z \d z -r \d r),  \label{eq:tetrad_morse_1}\\[1ex]
        \mathbf{e}_{1} &= \frac{r \d z + z \d r}{\sqrt{F}}, \label{eq:tetrad_morse_2}\\[1ex]
        \mathbf{e}_{2} &= r\,\d \theta, \qquad
        \mathbf{e}_{3} = r \sin(\theta)\,\d \phi. \label{eq:tetrad_morse_4}
    \end{align}
\end{subequations}

To classify the stress-energy tensor into the different Hawking-Ellis types, one needs to investigate its eigenvalues, as shown in Refs.~\cite{Hawking_Ellis_1973,Martín–Moruno2017,Maeda_2022}. In many cases, however, as pointed out in Ref.~\cite{Maeda_2022}, the starting tetrad is not the canonical one in which the stress-energy tensor type is explicit, and a Lorentz transformation may be needed to bring it to canonical form.

When spherical symmetry is present, the Hawking-Ellis type can, however, be determined from any tetrad for which the non-zero components of $T^{ab}$ are $T^{00}$, $T^{10}$, $T^{11}$, and $T^{22} = T^{33}$ \cite{Maeda_2022}:
\begin{equation}
    \begin{array}{lll}
        \left(T^{00}+T^{11}\right)^2>4\left(T^{01}\right)^2 & \Rightarrow & \text { Type I, } \\
        \left(T^{00}+T^{11}\right)^2=4\left(T^{01}\right)^2 & \Rightarrow & \text { Type II, } \\
        \left(T^{00}+T^{11}\right)^2<4\left(T^{01}\right)^2 & \Rightarrow & \text { Type IV. }
    \end{array}
\end{equation}
In particular, a spherically symmetric spacetime cannot be of type III.

Using the tetrad~\eqref{eq:tetrad_morse_1}–\eqref{eq:tetrad_morse_4}, the test reduces to checking the sign of
\begin{align*}
    &r^{8} + 4 r^{6} - 2\left(z^{4}+6z^{2}-4\right) r^{4} \\
    &\quad - 4\left(5z^{4}+2z^{2}-2\right) r^{2} + \left(z^{4}-2z^{2}-2\right)^{2}.
\end{align*}

This expression is positive, zero, or negative, corresponding to types I, II, and IV, respectively. These regions are shown in Fig.~\ref{fig:figure9}. Thus, the spacetime~\eqref{eq:morsemetricZR}, having regions where the stress–energy tensor is type IV, violates all the standard energy conditions. We stress that this does not mean that the violations originate solely from the type IV region; energy-condition violations also occur in the type I and type II regions.

\begin{figure}
    \includegraphics[width=\linewidth]{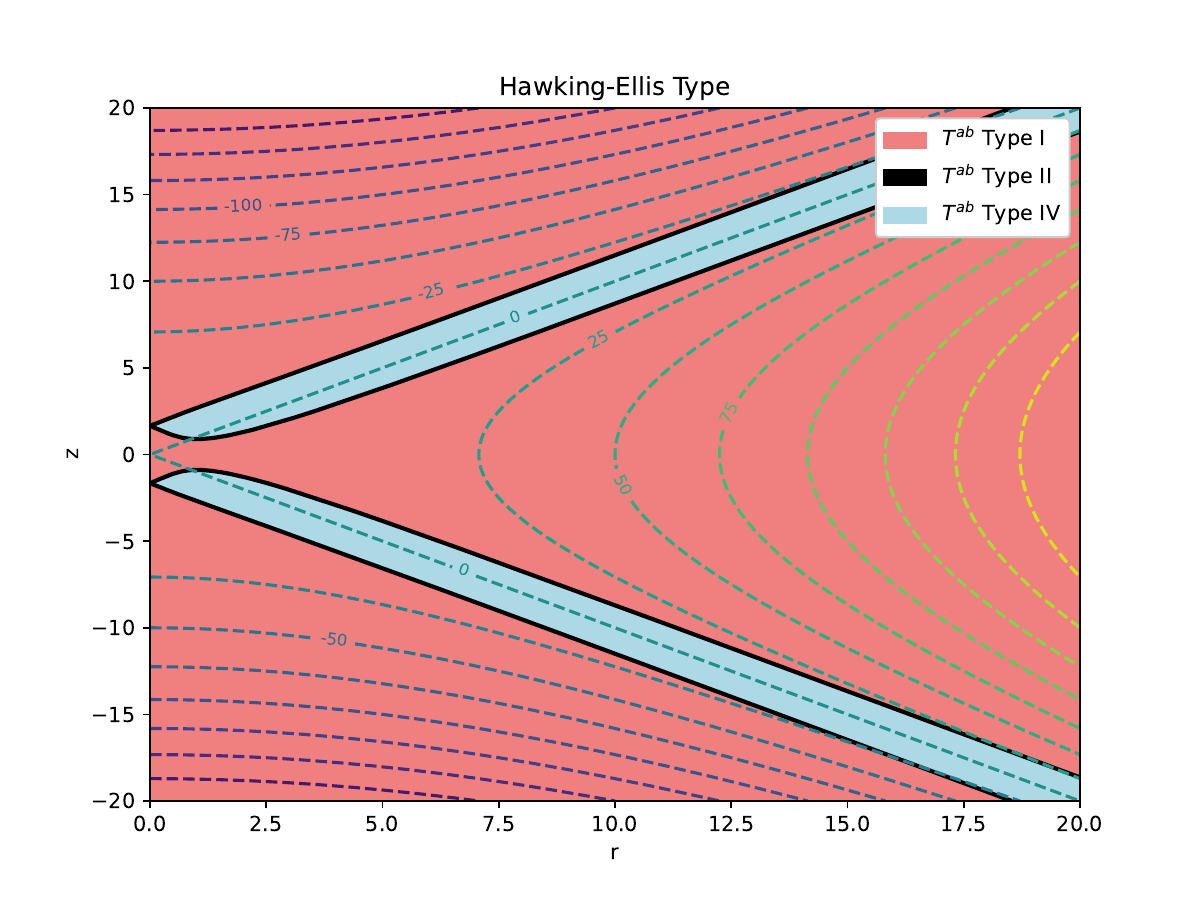}
    \caption{The stress–energy tensor changes type in different regions of spacetime. The pastel-red regions correspond to type I, while the light-blue regions correspond to type IV\@. At the boundary the tensor is type II\@. The graph also depicts the level sets of the Morse function in $(z,r)$ coordinates, revealing that the energy‑condition violations associated with type-IV regions occur on the critical level set.}
    \label{fig:figure9}
\end{figure}

Finally, the regions highlighted in Fig.~\ref{fig:figure9} are also those where the Segre type changes from $[Z\bar{Z}(11)]$ (type IV) to $[(1,1)11]$ (type I); on the boundary the Segre type is $[2,(11)]$.

\subsection{A singular Lorentzian metric on the complex projective plane}
\label{subsec:complexprojectiveplane}

The complex projective plane $\mathbb{CP}^{2}$ is a two-dimensional complex manifold constructed by identifying the triples $(Z_1, Z_2, Z_3) \in \mathbb{C}^3$ under the equivalence relation
\[ (Z_1, Z_2, Z_3 ) \sim  (\lambda Z_1, \lambda Z_2, \lambda Z_3 ), \quad \lambda \in \mathbb{C}\setminus\{ 0 \}. \]

This relation identifies all points (except the origin) lying on the same complex line through the origin of $\mathbb{C}^3$. From a topological point of view, $\mathbb{CP}^{2}$ can be obtained by attaching a 4-disk $D^4$ to a 2-sphere $S^2$ along its boundary via the Hopf projection $S^3 \rightarrow S^2$.

The coordinates $(Z_1, Z_2, Z_3)$, called homogeneous coordinates, are redundant and can be replaced with the following affine coordinates:
\[
    \begin{aligned}
        U_1: \quad (z_1, z_2) &= \left( \frac{Z_1}{Z_3}, \frac{Z_2}{Z_3}\right) \quad \mathrm{for} \quad Z_3 \ne 0, \\
        U_2: \quad (\zeta_1, \zeta_2) &= \left( \frac{Z_2}{Z_1}, \frac{Z_3}{Z_1}\right) \quad \mathrm{for} \quad Z_1 \ne 0, \\
        U_3: \quad (\xi_1, \xi_2) &= \left( \frac{Z_1}{Z_2}, \frac{Z_3}{Z_2}\right) \quad \mathrm{for} \quad Z_2 \ne 0,
    \end{aligned}
\]
which jointly cover $\mathbb{CP}^{2}$. Each open set is diffeomorphic to $\mathbb{C}^2$---hence $\mathbb{R}^4$---so that $\mathbb{CP}^{2}$ can be regarded as a compactification of $\mathbb{R}^4$ by the sphere $\mathbb{CP}^1 \cong S^2$ at infinity. More details on the structure of the complex projective plane and its use in physics as a gravitational instanton can be found in the excellent reference~\cite{Gibbons1978}.

Here, we are interested in constructing a vector field on $\mathbb{CP}^{2}$ that has only one critical point. Such a field always exists, as established by a theorem of Hopf\footnote{Translated from German. Theorem II, p.~550 in \cite{Topologie}.}:

\begin{theorem}
    In any differentiable manifold $M$ there exist vector fields with a finite number of singularities and there always exists a vector field with exactly one singularity.
\end{theorem}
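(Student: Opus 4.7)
The plan is to prove the two statements in sequence, since the first provides the starting data for the second, and to assume throughout that $M$ is connected (otherwise the statements should be interpreted componentwise). First I would establish that $M$ admits a smooth vector field with only isolated zeros. Starting from any smooth vector field on $M$---which exists via a partition-of-unity argument on local coordinate charts---I would invoke transversality to perturb it to a section of $TM$ transverse to the zero section. Since the zero section has codimension equal to $\dim M$, a transverse section meets it in isolated points; on a compact manifold this automatically yields finitely many zeros, while in the non-compact case one may use an exhaustion $M = \bigcup_{n} K_n$ by compacts together with a bump-function rescaling to guarantee isolation on each $K_n$.

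Second, given such a field $v$ with isolated zeros $p_1,\ldots,p_k$ of indices $i_1,\ldots,i_k$, I would iteratively reduce the number of zeros by merging two into one. Choose a smooth embedded arc $\gamma$ joining $p_j$ to $p_{j+1}$ whose interior avoids all other zeros---possible because $M$ is connected and $\dim M \ge 2$---and take a tubular neighborhood of $\gamma$ diffeomorphic to a closed $n$-ball $B$ containing exactly these two zeros. On $\partial B$ the field $v$ is nowhere-vanishing, and when normalized with respect to an auxiliary Riemannian metric it defines a map $\partial B \cong S^{n-1} \to S^{n-1}$ whose degree equals $i_j + i_{j+1}$ by additivity of the local index (the same Brouwer-degree argument used in the Poincaré-Hopf theorem discussed in Section~\ref{sec:topologychange}). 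Since any smooth map $\phi: S^{n-1}\to S^{n-1}$ of degree $d$ extends to a smooth vector field on $B$ with a single isolated zero of index $d$---for example by radial interpolation $V(x) = |x|\,\phi(x/|x|)$, suitably mollified near the origin---I can replace $v|_B$ by such an extension, leaving $v$ unchanged outside $B$. Iterating produces a vector field with a single zero, whose index equals $\sum_j i_j$, which by Poincaré-Hopf is $\chi(M)$ when $M$ is compact.

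The main obstacle is the smoothness of the merging step: one must ensure that the radial extension inside $B$ matches the exterior field in a $C^\infty$ fashion along $\partial B$ and that the resulting singularity is genuinely isolated, with no new zeros introduced in the transition annulus. Both points are handled by standard bump-function interpolation between the extended field and the original $v$, but care is needed to bound the interpolation away from zero in the collar. For non-compact $M$ the iteration is subtler: the merging must be organized along the exhaustion $\bigcup_n K_n$ via a diagonal argument that keeps only finitely many zeros in each $K_n$ and ultimately funnels them all into a single merging ball, after which one single zero (or none, if the limiting ``index at infinity'' cancels everything) remains.
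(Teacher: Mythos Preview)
The paper does not actually supply a proof of this theorem: it is quoted verbatim as ``a theorem of Hopf'' with a footnote pointing to Theorem~II, p.~550 of Alexandroff--Hopf, \emph{Topologie}, and is then used as a black box to justify the existence of a vector field on $\mathbb{CP}^{2}$ with a single singularity of index~$+3$. There is therefore no in-text argument to compare your proposal against.

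On its own merits, your sketch is the classical Hopf--Milnor proof and is essentially correct: transversality to the zero section yields isolated zeros, and the merging step---absorbing two zeros $p_j,p_{j+1}$ into a single ball $B$ containing an arc between them, then replacing $v|_B$ by a radial extension of the boundary map---is precisely the mechanism Hopf uses and which Milnor reproduces in \emph{Topology from the Differentiable Viewpoint}. Two small remarks: your arc-and-tube construction tacitly assumes $\dim M\ge 2$; the one-dimensional case is trivial but strictly speaking not covered by your argument. And the parenthetical ``or none, if the limiting `index at infinity' cancels everything'' slightly undersells the conclusion: the merging procedure always terminates with \emph{exactly} one zero (possibly of index~$0$), which is what the theorem asserts; obtaining a nowhere-vanishing field when $\chi(M)=0$ is a separate (stronger) statement.
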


By the Poincaré-Hopf theorem, this vector field must necessarily have index $+3$ at its isolated singularity. To obtain it explicitly, in the absence of any differential equations, we exploit the transition functions between the local charts $U_i$. Roughly speaking, by specifying a non-singular vector field on one chart---say $U_1$---we force the transition functions to introduce the required singularity on the other charts, in accordance with the Poincaré-Hopf theorem.

This procedure is particularly straightforward when there are only two charts, as in the case of $\mathbb{CP}^1$; nevertheless, it can be extended to $\mathbb{CP}^{2}$ by considering how rational vector fields transform between the charts. As a first step, we note that if one considers a polynomial vector field on $U_1$,
\[
    a_{ij}z_1^{i} z_2^{j} \partial_{z_1} + b_{ij}z_1^{i}z_2^{j} \partial_{z_2},
\]
then, in order to avoid poles or indeterminate points in the other charts---so that the resulting singularity is isolated---the degrees of the polynomial coefficients must be at most $2$. Under a change of coordinates to $U_2$, one obtains
\[
    -\left[a_{ij} \zeta_1^{2-i-j} \zeta_2^j\right] \partial_{\zeta_1} -\left[\zeta_1^{1-i-j} \zeta_2^j \left(a_{i j} \zeta_2+b_{i j}\right)\right] \partial_{\zeta_2}.
\]
This observation greatly simplifies the search for the desired vector field. For example, one may choose
\begin{equation}
    \begin{aligned}
        & U_1: \quad \left(z_1^2+z_2\right) \partial_{z_1}+z_1 z_2 \partial_{z_2}, \\
        & U_2: \quad -\zeta_1^2 \partial_{\zeta_1}-(1+\zeta_1 \zeta_2) \partial_{\zeta_2}, \\
        & U_3: \quad \partial_{\xi_1}-\xi_1\partial_{\xi_2},
    \end{aligned}
\end{equation}
so that the only singularity is the isolated one in $U_1$ at $(z_1, z_2) = (0,0)$. We stress that the vector field obtained in this way is not unique; other choices are possible.

We remark that this vector field defines a singular holomorphic foliation of $\mathbb{CP}^{2}$. In particular, the degree-$2$ part of the vector field in $U_1$,
\[
    z_1 \left( z_1 \partial_{z_1} + z_2 \partial_{z_2}\right),
\]
enables us to deduce, by Proposition 7.1 in Ref.~\cite{Lopez2000}, that the line at infinity $\mathbb{CP}^{1}_{\infty}$ is not a leaf of the singular holomorphic foliation.

Finally, since we treat $\mathbb{CP}^{2}$ as a real manifold, it is useful to extract the real part of the vector field:
\begin{widetext}
    \begin{equation}
       w^\mu =
       \begin{cases}
           {\begin{aligned} U_1:  (x^2 - y^2 +z)\, \partial_{x} + (2xy + t) \, \partial_y  +(xz-ty) \, \partial_z + (tx +yz) \, \partial_t, \end{aligned}} \\
           {\begin{aligned}U_2 :  -(a^2-b^2)\, \partial_{a} -2ab \, \partial_b  -(ac-bd+1) \, \partial_c -(ad+bc) \, \partial_d, \end{aligned}} \\
            U_3 : {\partial_\alpha - \alpha \, \partial_\gamma -\beta \, \partial_\delta,}
        \end{cases}
        \label{eq:vectorfieldprojective}
    \end{equation}
\end{widetext}
which has the same property as the complex vector field, namely a single isolated singularity of index $+3$ in the first chart $U_1$. We have chosen Cartesian coordinates on $\mathbb{CP}^{2}$ following Ref.~\cite{PhysRevLett.37.1251}: $\{ x,y,z,t\}$ on $U_1$, $\{ a,b,c,d \}$ on $U_2$, and $\{ \alpha, \beta, \gamma, \delta \}$ on $U_3$. Several two-dimensional projections of the integral curves of the vector field~\eqref{eq:vectorfieldprojective} $U_1$ are shown in Fig.~\ref{fig:figure10}.

\begin{figure*}
    \begin{subfigure}{0.3\linewidth}
        \centering
        \includegraphics[width=\linewidth]{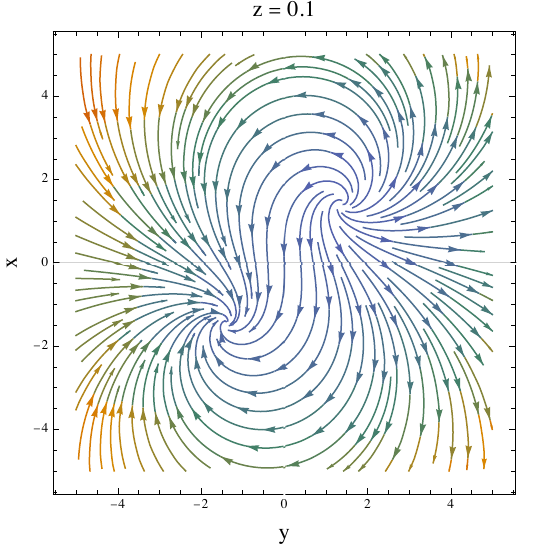}
        \caption{$t = -4$}
    \end{subfigure}
    \hfill
    \begin{subfigure}{0.3\linewidth}
        \centering
        \includegraphics[width=\linewidth]{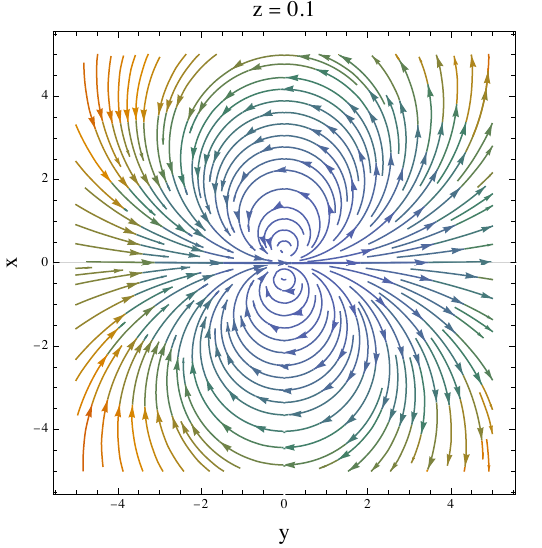}
        \caption{$t = 0$}
    \end{subfigure}
    \hfill
    \begin{subfigure}{0.3\linewidth}
        \centering
        \includegraphics[width=\linewidth]{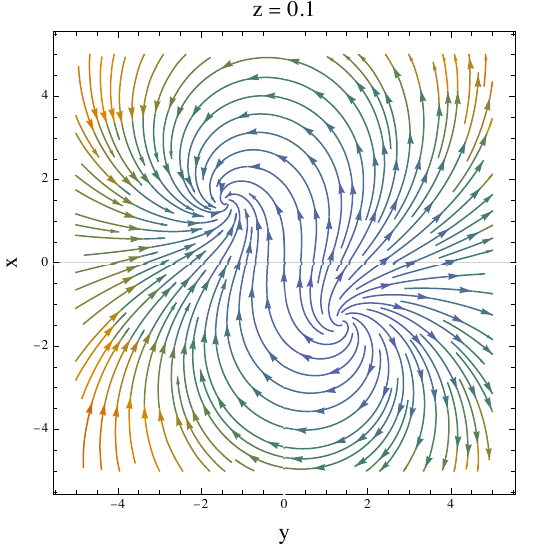}
        \caption{$t = 4$}
    \end{subfigure}
    \begin{subfigure}{0.3\linewidth}
        \centering
        \includegraphics[width=\linewidth]{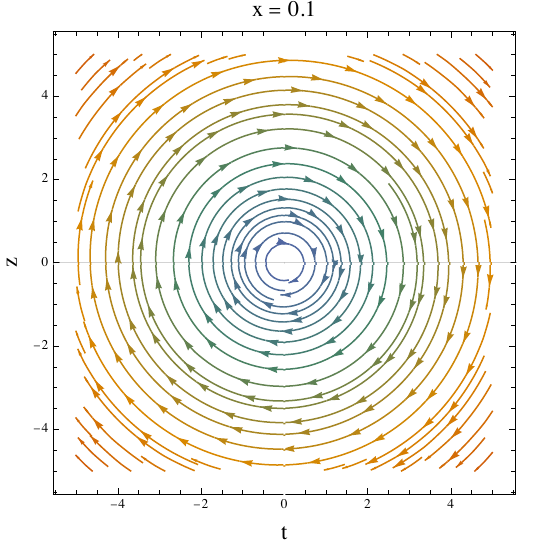}
        \caption{$y = -4$}
    \end{subfigure}
    \hfill
    \begin{subfigure}{0.3\linewidth}
        \centering
        \includegraphics[width=\linewidth]{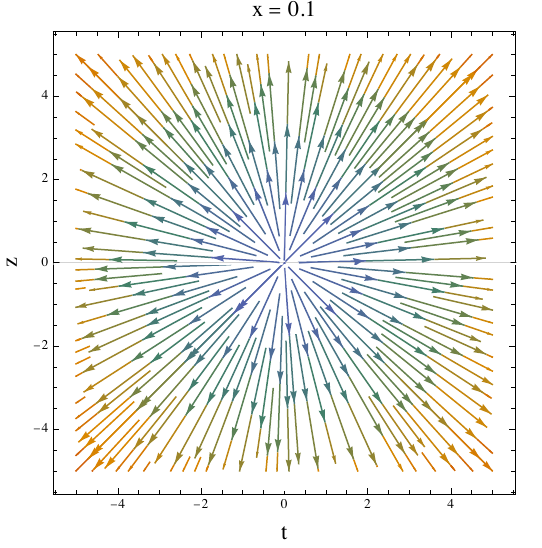}
        \caption{$y = 0$}
    \end{subfigure}
    \hfill
    \begin{subfigure}{0.3\linewidth}
        \centering
        \includegraphics[width=\linewidth]{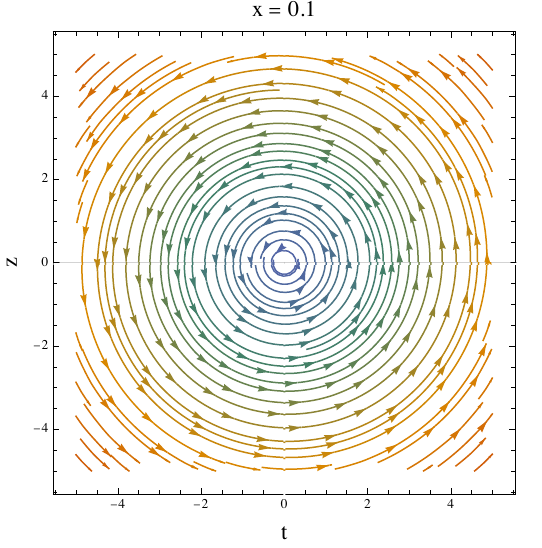}
        \caption{$y = 4$}
    \end{subfigure}
    \caption{Two-dimensional streamline plots of the vector field $w^\mu$ in the $U_1$ chart. The color scale indicates the magnitude of the vector field, with warmer colors representing higher values. (a)–(c): Projections onto the $xy$-plane for different values of the $t$ coordinate. In these plots $z$ is fixed at $0.1$ to obtain a two-dimensional view. The number of singularities changes from two (each of index $+1$ at $t = -4$ and $t = 4$) to a single singularity at the origin (of index $+2$ at $t = 0$). (d)–(f): Projections onto the $zt$-plane for different values of $y$, with $x$ fixed at $0.1$. In all these cases the singularity is located at the origin and has index $+1$.}
    \label{fig:figure10}
\end{figure*}

The singular Lorentzian metric on $\mathbb{CP}^{2}$ is therefore constructed from the Riemannian Fubini-Study metric and the vector field~\eqref{eq:vectorfieldprojective}:
\begin{align}
    g^L_{\mu\nu} = g^{FS}_{\mu\nu} - 2 \frac{w_\mu w_\nu}{g^{FS}_{\alpha\beta} w^\alpha w^\beta},
    \label{eq:LorentzianmetricCP2}\\
    g^{FS}_{\mu\nu} \equiv \frac{6/\Lambda}{1+r^2} \left[\delta_{\mu\nu} - \frac{x^\mu x^\nu + \tilde{x}^\mu \tilde{x}^\nu}{1+r^2} \right],
    \label{eq:FubiniStudy}
\end{align}
where $\Lambda>0$ is a constant and, in the chart $U_1$,
\begin{align*}
    r^2 = x^2 + y^2 + z^2 +t^2, \\
    (\tilde{x}^1, \tilde{x}^2, \tilde{x}^3, \tilde{x}^4) = (y, -x, t, -z).
\end{align*}
Here $\Lambda$ plays the role of a cosmological constant, since the Fubini-Study metric is a Riemannian solution of the Einstein equations with positive $\Lambda$. A coordinate system that simplifies this metric is given later in Eq.~\eqref{eq:hopfcoordinates}, via Euler angles; however, such a change complicates the expression of the vector field, so we adhere to Cartesian coordinates in this section.

We now wish to characterize the metric~\eqref{eq:LorentzianmetricCP2}---for example, to determine whether it admits Killing vectors. Explicit computations can be challenging, so we first prove the following theorem:

\begin{theorem}
    Let
    \[
        g^L_{\mu\nu} = g^R_{\mu\nu} - 2 V_\mu V_\nu,
    \]
    where $g^R_{\mu\nu}$ is a Riemannian metric and $V_{\mu}$ is a nowhere-vanishing line field obeying $g_R^{\mu\nu} V_\mu V_\nu = 1$. Let $\xi^\mu$ be a vector field such that $\mathcal{L}_{\xi}V_\mu = \alpha V_\mu$, for some smooth function $\alpha$. Then $\mathcal{L}_{\xi}g^{L}_{\mu\nu} = 0$ iff
    \[
        \mathcal{L}_{\xi}g^{R}_{\mu\nu} = 0 \quad \text{and} \quad \mathcal{L}_{\xi}V_{\mu} = 0.
    \]
    \label{th:killingvectors}
\end{theorem}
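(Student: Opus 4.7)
The $\Leftarrow$ direction is dispatched in one line: linearity of the Lie derivative, together with the Leibniz rule applied to $V_\mu V_\nu$, gives
\[
\mathcal{L}_\xi g^L_{\mu\nu} = \mathcal{L}_\xi g^R_{\mu\nu} - 2(\mathcal{L}_\xi V_\mu)\, V_\nu - 2 V_\mu (\mathcal{L}_\xi V_\nu),
\]
which vanishes under the two right-hand-side hypotheses, so no further work is needed.

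For the $\Rightarrow$ direction I would introduce the shorthands $A_\mu \equiv \mathcal{L}_\xi V_\mu$ and $B_{\mu\nu} \equiv \mathcal{L}_\xi g^R_{\mu\nu}$. The assumption $\mathcal{L}_\xi g^L_{\mu\nu} = 0$ then reduces to the purely algebraic identity
\[
    B_{\mu\nu} = 2\bigl(A_\mu V_\nu + V_\mu A_\nu\bigr).
\]
The first scalar constraint comes from the normalization $g_R^{\mu\nu} V_\mu V_\nu = 1$: as this quantity is a constant, it is annihilated by $\mathcal{L}_\xi$. Using $\mathcal{L}_\xi g_R^{\mu\nu} = -g_R^{\mu\alpha} g_R^{\nu\beta} B_{\alpha\beta}$, Lie-differentiating the normalization yields $-B^{\mu\nu} V_\mu V_\nu + 2 V^\mu A_\mu = 0$; contracting the displayed identity with $V^\mu V^\nu$ instead gives $B^{\mu\nu} V_\mu V_\nu = 4 V^\mu A_\mu$, so the two relations combine to force the orthogonality $V^\mu A_\mu = 0$.

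The main obstacle---and where I expect the argument to require most care---is promoting this orthogonality to the full vanishing $A \equiv 0$ (whence $B \equiv 0$ follows instantly from the algebraic identity). Contracting that identity with $V^\mu$ already yields $B_{\mu\nu} V^\mu = 2 A_\nu$, so $A$ is completely determined by the $V$-trace of $B$; decomposing both tensors with the projector $P^\mu{}_\nu = \delta^\mu_\nu - V^\mu V_\nu$ onto the $g^R$-orthogonal complement of $V$ shows that the spatial--spatial block and the $V$--$V$ block of $B$ must vanish, so only a ``mixed'' time--space block can be nontrivial. To close the proof I would substitute $A_\mu = \xi^\alpha \nabla^R_\alpha V_\mu + V_\alpha \nabla^R_\mu \xi^\alpha$ and $B_{\mu\nu} = \nabla^R_\mu \xi_\nu + \nabla^R_\nu \xi_\mu$ back into the algebraic identity, converting it into a first-order PDE on $\xi$; exploiting the fact that $V$ is the unique $g^R$-unit timelike eigendirection of $g^L$---so that the eigenspace decomposition is an invariant of the pair $(g^R,g^L)$ and is therefore preserved by any $g^L$-isometry---should eliminate the remaining mixed-block freedom and yield $A \equiv 0$.
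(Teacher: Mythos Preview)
Your $\Leftarrow$ direction is fine and matches the paper. For $\Rightarrow$, your algebra is correct through the orthogonality $V^\mu A_\mu = 0$ and the observation that only the mixed block of $B$ can survive; but the closing step---invoking that the eigenspace decomposition of $g^L$ relative to $g^R$ is preserved by any $g^L$-isometry---cannot work, because that decomposition depends on $g^R$ as well as on $g^L$, and $\mathcal{L}_\xi g^R = 0$ is exactly what you are trying to establish.

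In fact the $\Rightarrow$ implication is false as stated. On $\mathbb{R}^4$ take $g^R_{\mu\nu} = \delta_{\mu\nu}$ and $V_\mu = (\d t)_\mu$; then $g^L$ is the Minkowski metric, and the boost generator $\xi = x\,\partial_t + t\,\partial_x$ satisfies $\mathcal{L}_\xi g^L = 0$ while $\mathcal{L}_\xi V = \d x \neq 0$ and $\mathcal{L}_\xi g^R = 2(\d t\otimes \d x + \d x\otimes \d t) \neq 0$. This lands precisely in the ``mixed block'' you isolated: $A = \d x$ is $g^R$-orthogonal to $V$ and $B = 2(A\otimes V + V\otimes A)$. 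The paper's own proof disposes of this step by asserting that Lie-differentiating the normalization yields the tensor equation $(\mathcal{L}_\xi V_\mu)V_\nu = 0$; as your derivation correctly shows, one actually obtains only the scalar constraint $V^\mu \mathcal{L}_\xi V_\mu = 0$, so the gap you flagged is genuine and cannot be closed without additional hypotheses on the pair $(g^R,V)$.
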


\begin{proof}
    A vector field $\xi^\mu$ is Killing for $g^L_{\mu\nu}$ iff
    \[
       \mathcal{L}_{\xi}g^{L}_{\mu\nu} =  \mathcal{L}_{\xi}g^{R}_{\mu\nu} - 2 \mathcal{L}_{\xi}(V_\mu V_\nu).
    \]
    If $\mathcal{L}_{\xi}g^{R}_{\mu\nu} = 0$ and $\mathcal{L}_{\xi}V_{\mu} = 0$, the Leibniz rule gives $\mathcal{L}_{\xi}g^{L}_{\mu\nu}= 0$.

    Conversely, assume $\mathcal{L}_{\xi}g^{L}_{\mu\nu} = 0$. Then
    \[
        \mathcal{L}_{\xi}g^{R}_{\mu\nu} = 2 \left[(\mathcal{L}_{\xi}V_{\mu})V_\nu + V_\mu(\mathcal{L}_{\xi}V_\nu) \right] = 4\alpha V_\mu V_\nu.
    \]

Taking the Lie derivative of the normalization condition:
\begin{align*}
0=\mathcal L_\xi(g_R^{\mu\nu}V_\mu V_\nu)
&=(\mathcal L_\xi g_R^{\mu\nu})V_\mu V_\nu
+2g_R^{\mu\nu}(\mathcal L_\xi V_\mu)V_\nu\\
&=-4\alpha + 2\alpha = -2\alpha
\end{align*}

This implies $\alpha = 0$ and so $\mathcal{L}_{\xi}V_{\mu} = 0$ and $\mathcal{L}_{\xi}g^{R}_{\mu\nu} = 0$.
\end{proof}

By Theorem~\ref{th:killingvectors}, the Lorentzian metric~\eqref{eq:LorentzianmetricCP2} has no obvious Killing vectors, since the Lie bracket of $w^{\mu}$ with each Killing vector of the Fubini–Study metric is non-zero. Indeed, it is possible to prove that \eqref{eq:LorentzianmetricCP2} has no Killing vectors\footnote{In the $U_3$ chart, if there exists $\xi^\mu$ Killing, then $\mathcal{L}_{\xi}R_{\mu\nu\rho\sigma} = 0$ implies, at the origin $\bm{0}$, $\left.\xi^\mu\right|_{\bm{0}} = 0$ and $\left.(\nabla_\mu\xi_\nu)\right|_{\bm{0}} = 0$. It follows then that $\xi^\mu$ is trivial \cite[p. 442]{wald1984general}}.

The absence of symmetries makes the analysis of this spacetime challenging; accordingly, we focus solely on the behavior of timelike geodesic congruences. Specifically, we choose a particular family of geodesic congruences, which we integrate numerically. The initial conditions for the congruence are 11 random points on the unit $3$-sphere centered at the origin of the $U_1$ chart, so that
\[
    x(0)^2+y(0)^2+z(0)^2+t(0)^2 = 1.
\]
The initial derivatives with respect to the affine parameter~$\lambda$ are taken to be the values of the vector field~\eqref{eq:vectorfieldprojective}, normalized with respect to the Fubini-Study metric (with $\Lambda = 6$) at those points. This normalization ensures that the resulting geodesics are timelike; their trajectories are displayed in Fig.~\ref{fig:figure11}.

\begin{figure}
    \begin{subfigure}[b]{0.45\linewidth}
        \centering
        \includegraphics[width=\linewidth]{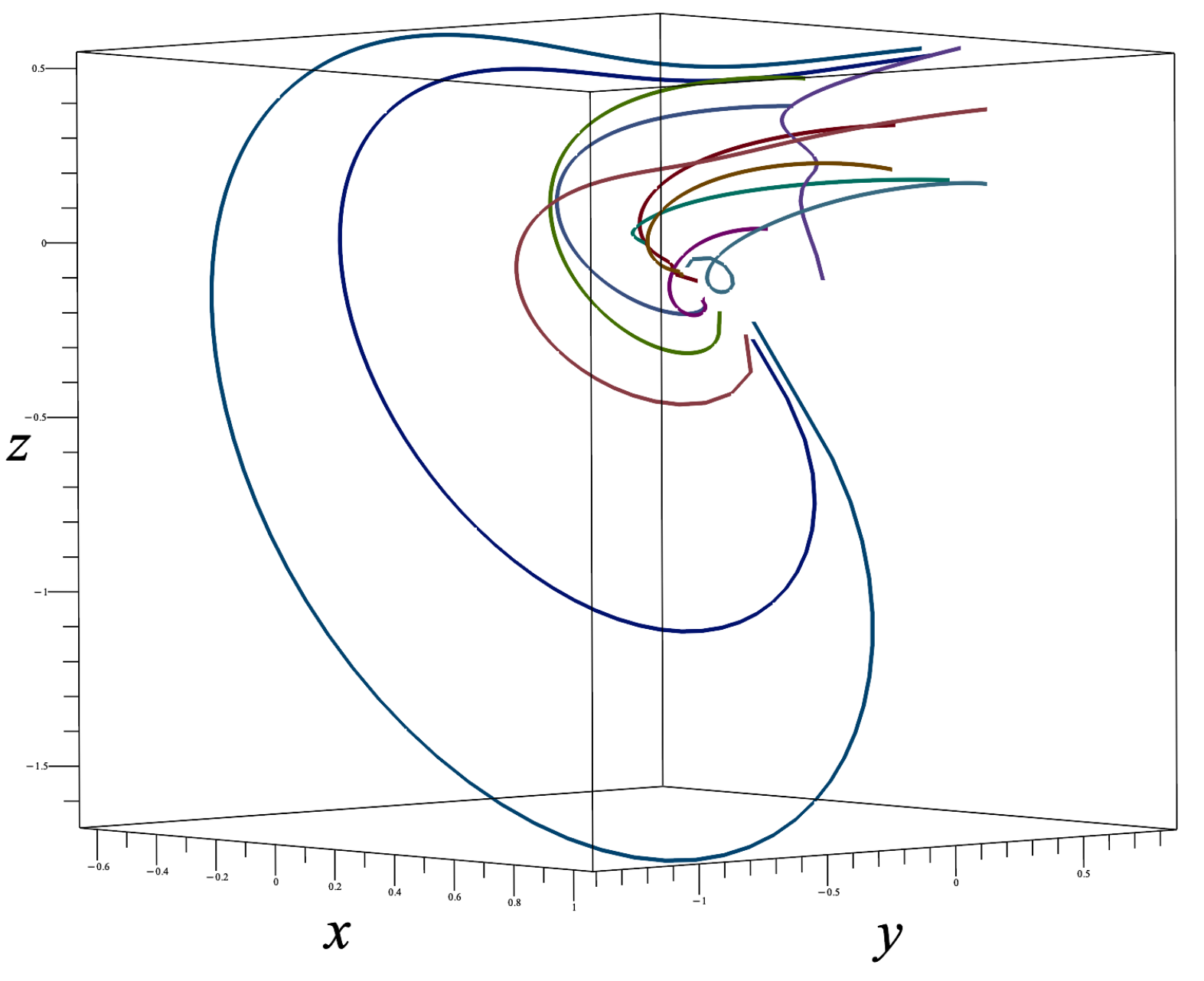}
        \caption{$xyz$ projection}
    \end{subfigure}
    \hfill
    \begin{subfigure}{0.45\linewidth}
        \centering
        \includegraphics[width=\linewidth]{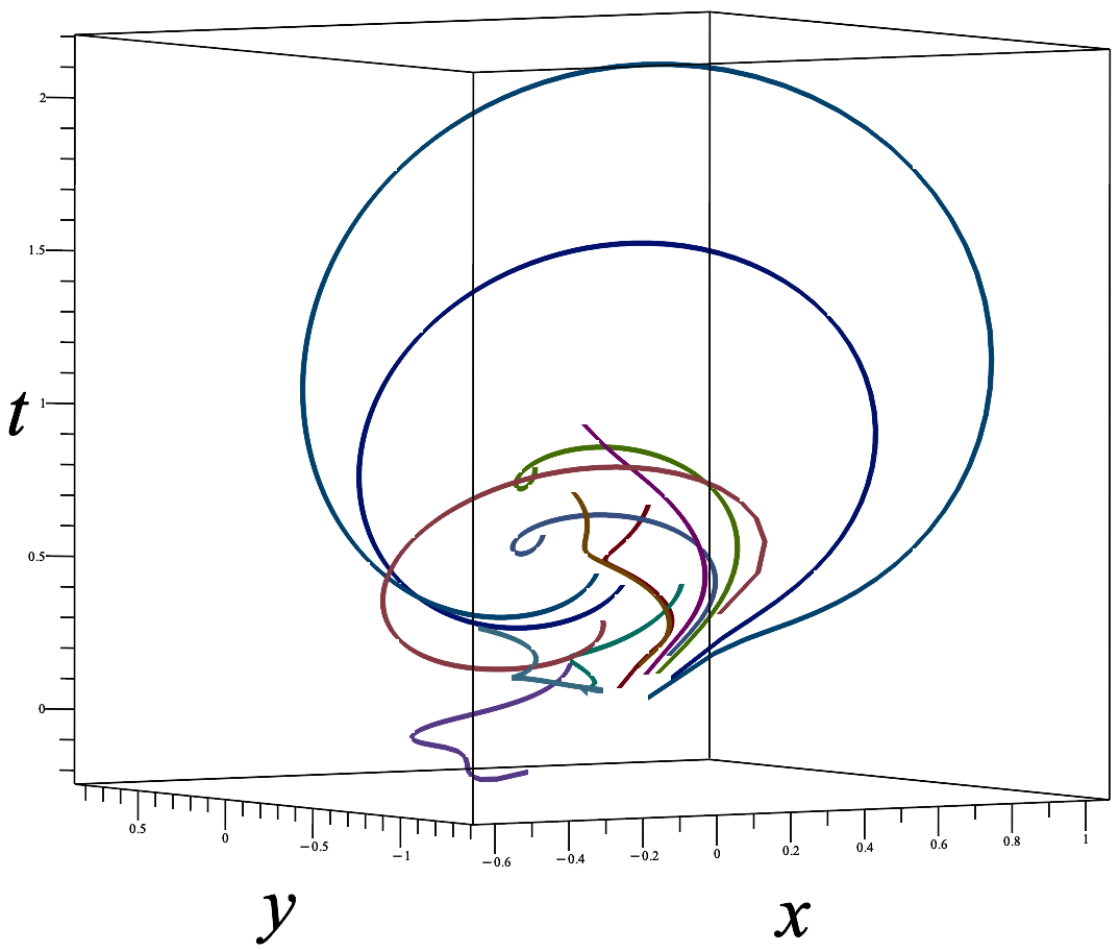}
        \caption{$xyt$ projection}
    \end{subfigure}
        \caption{Numerical solution of the geodesic equations for the metric~\eqref{eq:LorentzianmetricCP2} obtained via a Fehlberg fourth-fifth-order Runge-Kutta scheme. Initial conditions were chosen as 11 random points on the unit $3$-sphere, with the initial velocity given by the (normalized) vector field~\eqref{eq:vectorfieldprojective}. (a): Projection of the trajectories onto the $xyz$ hyperplane. (b): Projection of the trajectories onto the $xyt$ hyperplane.}
    \label{fig:figure11}
\end{figure}

The singular point of the vector field attracts these trajectories, so that the geodesics form circular arcs that terminate at a finite value of the affine parameter at $(0,0,0,0) \in U_1$. However, there exist initial conditions---for instance $x(0) = 1$ with $y(0) = z(0) = t(0) = 0$---for which the timelike geodesic appears to escape to infinity. In such cases it is necessary to switch to other local charts to verify that, within a finite value of the affine parameter, the geodesic returns to the singular point of the vector field.

The behavior of these timelike geodesics in our construction is not surprising. Since the vector field on the closed manifold $\mathbb{CP}^{2}$ possesses a unique singular point, every maximal integral curve originates and terminates at this singularity\footnote{This is not generally true, since closed orbits can exist. However, for the vector field $w^\mu$, there are no closed orbits; they all begin and end at the field’s zero point in the $U_1$ chart.}. These integral curves determine the future orientation of the light-cones for the metric~\eqref{eq:LorentzianmetricCP2}, which explains why the geodesics are drawn toward the singularity. Moreover, apart from the singular point where the aperture vanishes, the light-cones maintain a non-zero opening angle. This permits the formation of CTCs\footnote{An explicit future-directed CTC is given for example by the curve \[\gamma^\mu(s) = \tfrac{1}{100}\Big(2\sin(s), -2 \cos(s),\cos(s),\sin(s)\Big), \ s\in [0,2\pi].\]} and, accordingly, the emergence of chronological horizons. By the argument in Ref.~\cite{chronologyprotection}, these horizons must be compactly generated Cauchy horizons.

We also emphasize that the Morse spacetime $\mathbb{M}$ constructed in the previous section possesses the Morse function as a global time function, implying that $\mathbb{M}$ is stably causal and, by definition, cannot contain any CTCs \cite{Hawking_Ellis_1973}. Therefore, for the change of topology to occur without introducing singularities, any CTCs must be confined to $\mathbb{CP}^{2}$.

We conclude this section with a few remarks on the violation of the standard energy conditions. The WEC requires
\[
    T_{\mu\nu}v^\mu v^\nu \ge 0, \quad \forall\,v^\mu: g^{L}_{\mu\nu}v^\mu v^\nu < 0.
\]
However, considering the vector field $w^\mu$ in~\eqref{eq:vectorfieldprojective}, which is timelike by construction, we find regions where
\[
    T_{\mu\nu}w^\mu w^\nu < 0.
\]
A two-dimensional projection of the region where this inequality holds (evaluated at $x=t=0$) is shown in Fig.~\ref{fig:figure12}.

\begin{figure}
    \centering
    \includegraphics[width=0.8\linewidth]{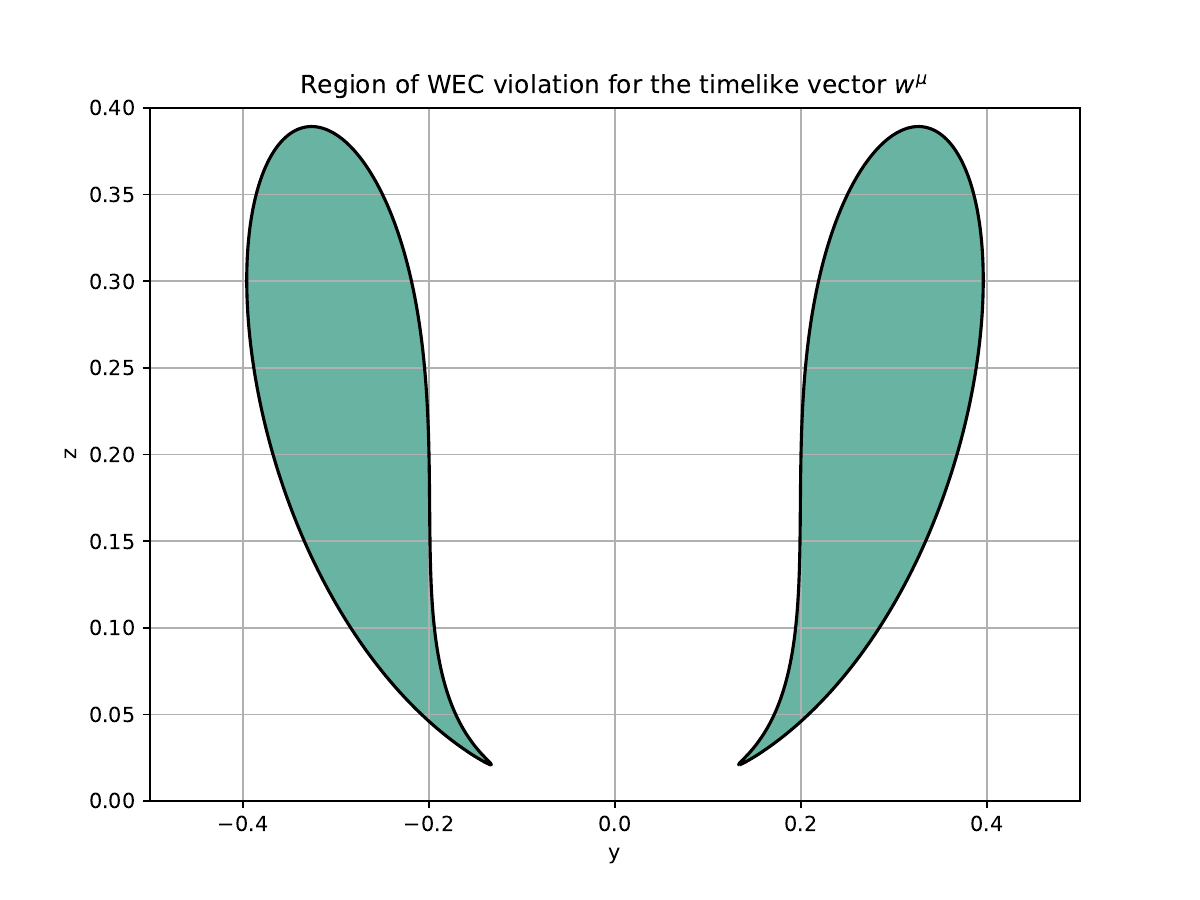}
    \caption{Two-dimensional projection (with $x=t=0$) of the region in which the weak energy condition is violated by the vector field $w^\mu$ in the $U_1$ chart.}
    \label{fig:figure12}
\end{figure}

A more rigorous analysis of the violation of the energy conditions would require a numerical point-wise classification of the Hawking-Ellis types of the stress-energy tensor. This would involve diagonalizing the tensor on a spacetime grid and analyzing its eigenvalues and eigenvectors at each grid point. Here, however, we merely argue for the existence of type-IV regions.

For example, after numerically computing a tetrad $e_{a}^{\mu}$ for the metric~\eqref{eq:LorentzianmetricCP2}, we consider the event $p=(1,0,0,0)$ in the $U_1$ chart. With our specific numerical choice of tetrads, we found that the stress-energy tensor $T^{ab}$ at $p$ has complex eigenvalues, with corresponding complex eigenvectors. So $T^{ab}$ should be of Hawking-Ellis type IV \cite{Hawking_Ellis_1973,Martín–Moruno2017,Maeda_2022}. Because the classification varies continuously, an extended region must exist in which the tensor remains of type IV; in that region all the standard energy conditions are violated.

In the construction~\eqref{eq:lorentzianfromriemannian} the numerical factor $2$ is arbitrary and can be replaced by any real number $\zeta > 1$:
\begin{equation}
   g^{L}_{\mu\nu} = g^{R}_{\mu\nu} - \zeta \, \frac{g^{R}_{\mu\alpha}V^{\alpha}\, g^{R}_{\nu\beta}V^{\beta}}{g^{R}_{\rho\sigma}V^{\rho}V^{\sigma}}.
\end{equation}
The freedom in the choice of the $\zeta$ parameter means that the aperture of the light-cones can be chosen arbitrarily. Consequently, in regions where $T_{\mu\nu}V^{\mu}V^{\nu} > 0$ one can enforce the WEC---and hence, by continuity, the NEC---by taking $\zeta$ sufficiently close to $1$. This is the approach adopted in Ref.~\cite{Yodzis1973}.

In our case, however, changing $\zeta$ merely shifts, thins, or expands the regions where the energy conditions are violated; it does not eliminate them altogether. Nonetheless, we expect that, with an appropriate choice of $\zeta$ and a suitable starting Riemannian metric, one might follow the procedure of Ref.~\cite{Yodzis1973} to remove such regions completely---at least for the Morse spacetime $\mathbb{M}$.

\section{The gluing process}
\label{sec:gluing}

\subsection{The setting}

In this section, we address the gluing process that joins the two singular spacetimes $\mathbb{M}$ and $\mathbb{CP}^{2}$, for which we have so far provided only a topological description at the beginning of Section~\ref{sec:wormholenucleation}, in order to produce a spacetime without singularities. The gluing process in our context requires a technique different from the standard Darmois–Israel formalism because the submanifold along which the spacetimes are joined changes causal character---from timelike to spacelike---passing through a region where it is null.

Fortunately, a suitable formalism exists for this situation and has been developed in Refs.~\cite{MMars1993, PhysRevD.76.044029}. In what follows, we present only the essential formulas, omitting further details contained in the cited references. The symbols introduced here are independent of those in the preceding sections.

We consider two spacetimes $(M^{+}, g^{+}_{\mu\nu})$ and $(M^{-}, g^{-}_{\mu\nu})$ that we wish to glue along the submanifolds $\Sigma^{+} \subset M^{+}$ and $\Sigma^{-} \subset M^{-}$. These submanifolds may change causal character. After the identification $\Sigma^{+} = \Sigma^{-} \equiv \Sigma$, one can introduce intrinsic coordinates $\xi^{a}$ on $\Sigma$. The embeddings of $\Sigma$ into $M^{+}$ and $M^{-}$ are given by
\[
    x^{\mu}_{+} = x^{\mu}_{+}(\xi), \qquad  x^{\mu}_{-} = x^{\mu}_{-}(\xi),
\]
and a basis of tangent vectors on $\Sigma$ is defined as
\[
    e^{\pm \mu}_{a} \equiv \frac{\partial x^{\mu}_{\pm}(\xi)}{\partial \xi^{a}}.
\]

The preliminary junction conditions require that the induced metrics on $\Sigma^{+}$ and $\Sigma^{-}$ match:
\begin{equation}
    h^{+}_{ab}(\xi) = h^{-}_{ab}(\xi),
    \label{eq:firstpreliminaryjunction}
\end{equation}
where
\[
    h^{\pm}_{ab}(\xi) \equiv  g^{\pm}_{\mu\nu} \left( x^{\pm}(\xi) \right) \frac{\partial x^{\mu}_{\pm}(\xi)}{\partial \xi^{a}} \frac{\partial x^{\nu}_{\pm}(\xi)}{\partial \xi^{b}}.
\]

In addition, one must introduce \emph{riggings}, i.e., vector fields $l_{+}^{\mu}$ and $l_{-}^{\mu}$ defined on $\Sigma^{+}$ and $\Sigma^{-}$, respectively, that are everywhere transverse to these submanifolds. Given the associated normal forms $N^{+}_{\mu}$ and $N^{-}_{\mu}$, the transversality condition is
\[
    l_{\pm}^{\mu}N^{\pm}_{\mu} \neq 0.
\]

The gluing can occur only if $l_{+}^{\mu}$ is directed outward from $M^{+}$ and $l_{-}^{\mu}$ is directed inward from $M^{-}$ (or vice versa), and if the following additional conditions are satisfied on $\Sigma$:
\begin{subequations}
    \begin{align}
        \left.g^{+}_{\mu \nu} \, l_{+}^\mu \, l_{+}^\nu\right|_{\Sigma} &= \left.g^{-}_{\mu \nu} \, l_{-}^\mu \, l_{-}^\nu\right|_{\Sigma},\\
        \left.g^{+}_{\mu \nu} \, l_{+}^\mu \, e_{a}^{+\nu}\right|_{\Sigma} &= \left.g^{-}_{\mu \nu} \, l_{-}^\mu \, e_{a}^{-\nu}\right|_{\Sigma}.
    \end{align}
    \label{eq:secondpreliminaryjunction}
\end{subequations}

If~\eqref{eq:firstpreliminaryjunction} and~\eqref{eq:secondpreliminaryjunction} are satisfied\footnote{Even if the induced metrics match on $\Sigma$, the requirements~\eqref{eq:secondpreliminaryjunction} may fail, as happens for two identical copies of the same spacetime with boundary \cite{PhysRevD.76.044029}.}, after the identifications $e^{+\mu}_a = e^{-\mu}_a = e^{\mu}_a$ and $l_{+}^{\mu} = l_{-}^{\mu} = l^{\mu}$, one can avoid the appearance of thin shells or distributional terms in the equations of motion if
\begin{equation}
    \left[\mathcal{H}_{\mu\nu} \right] = 0,
    \label{eq:jumpintrinsiccurvature}
\end{equation}
where the jump $[f]$ is defined as
\[
    [f](p) \equiv \lim_{x\rightarrow p^+} f^{+}(x) - \lim_{x\rightarrow p^-} f^{-}(x) \qquad \forall\, p \in \Sigma,
\]
with the two limits taken in $M^+$ and $M^-$, respectively, and
\[
    \mathcal{H}^{\pm}_{\mu\nu} \equiv \Pi^{\alpha}{}_{\mu}\Pi^{\beta}{}_{\nu} \nabla^{\pm}_{\alpha} l_{\beta}\Big|_{\Sigma}.
\]
Here $\Pi^{\alpha}{}_{\mu}$ is a generalized projector,
\[
    \Pi^{\alpha}{}_{\mu} = \delta^{\alpha}_{\mu} - \frac{1}{n}\, N_{\mu}\, l^{\alpha},
\]
where the identification $N^{+}_{\mu} = N^{-}_{\mu}$ has been made on $\Sigma$, and $n \equiv l_{+}^{\mu}N^{+}_{\mu} = l_{-}^{\mu}N^{-}_{\mu}$.

In the following subsections, we construct the connected sum $\mathbb{M} \# {\mathbb{CP}^{2}}$ by first gluing $\mathbb{CP}^{2}$ to an annulus $S^3_{I} \times I$, where $S^3_{I}$ is a one-parameter family of squashed 3-spheres whose shape varies along $I$, and then gluing a second annulus $S^3_{I} \times I$ to $\mathbb{M}$, thereby demonstrating that no thin shells appear. The two annuli are subsequently attached via an orientation-reversing diffeomorphism, forming the \emph{neck} $S^3 \times I$---not to be confused with the wormhole throat---in which $S^3$ denotes the round 3-sphere. This construction, sketched in Fig.~\ref{fig:figure13}, allows us to express explicitly the homotopy between the vector fields on $\mathbb{M}$ and $\mathbb{CP}^{2}$.

\begin{figure}
    \centering
    \includegraphics[width=0.9\linewidth]{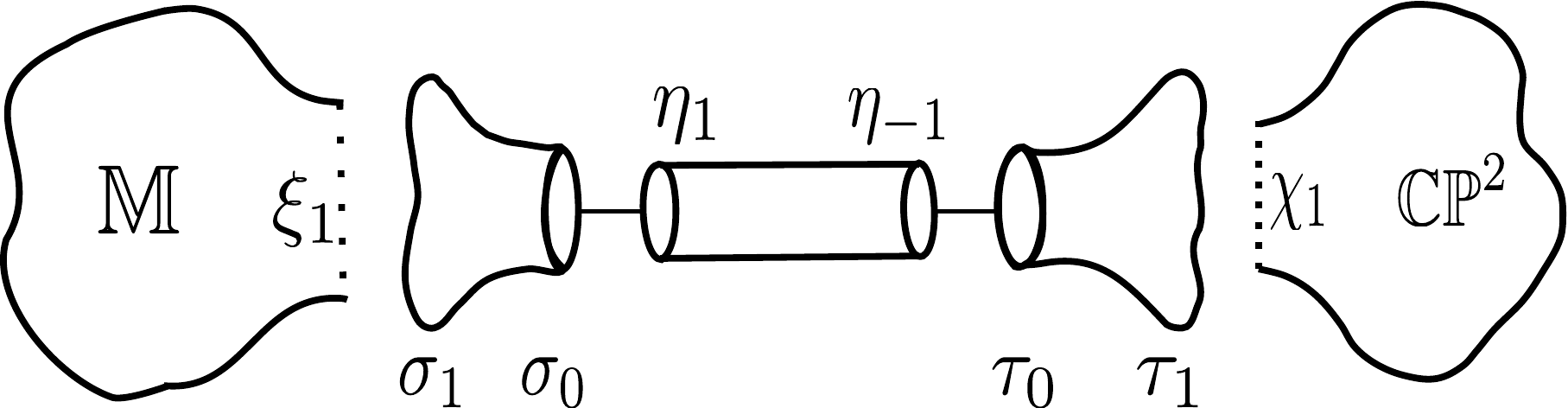}
    \caption{The two annuli are glued, respectively, to $\mathbb{M}$ and $\mathbb{CP}^{2}$ along the surfaces $\xi=\xi_{1}$ and $\chi=\chi_{1}$, each homeomorphic to $S^{3}$. Their purpose is to deform these $3$-spheres continuously into round $3$-spheres, thereby permitting an explicit homotopy between the two vector fields---one with its radial component reversed because of the orientation-reversing diffeomorphism.}
    \label{fig:figure13}
\end{figure}

\subsection{Interface between the complex projective plane and the annulus}
\label{subsec:firstannulus}

In $\mathbb{CP}^{2}$ we consider an embedding of $S^3$ that partitions the manifold into two regions, $\mathbb{CP}^{2}_{+}$ (the exterior) and $\mathbb{CP}^{2}_{-}$ (the interior). We work in the $U_1$ chart, within which we introduce the following coordinate transformation, where $\Lambda > 0$ is a constant:
\begin{equation}
    \begin{aligned}
        x &= \sqrt{\tfrac{6}{\Lambda}} \tan\!\left( \chi \sqrt{\tfrac{\Lambda}{6}}\right) \cos\!\left(\tfrac{\theta}{2}\right) \cos\!\left(\tfrac{\psi}{2}+\tfrac{\phi}{2}\right),\\[1ex]
        y &= \sqrt{\tfrac{6}{\Lambda}} \tan\!\left( \chi \sqrt{\tfrac{\Lambda}{6}}\right) \cos\!\left(\tfrac{\theta}{2}\right) \sin\!\left(\tfrac{\psi}{2}+\tfrac{\phi}{2}\right),\\[1ex]
        z &= \sqrt{\tfrac{6}{\Lambda}} \tan\!\left( \chi \sqrt{\tfrac{\Lambda}{6}}\right) \sin\!\left(\tfrac{\theta}{2}\right) \cos\!\left(\tfrac{\psi}{2}-\tfrac{\phi}{2}\right),\\[1ex]
        t &= \sqrt{\tfrac{6}{\Lambda}} \tan\!\left( \chi \sqrt{\tfrac{\Lambda}{6}}\right) \sin\!\left(\tfrac{\theta}{2}\right) \sin\!\left(\tfrac{\psi}{2}-\tfrac{\phi}{2}\right),
    \end{aligned}
    \label{eq:hopfcoordinates}
\end{equation}
given in terms of the Euler angles and a radial coordinate, with the ranges
\begin{equation}
    \begin{aligned}
        &0 \le \theta \le \pi,\qquad
        0 \le \phi \le 2\pi,\\[1ex]
        &0 \le \psi \le 4\pi,\qquad
        0 \le \chi < \tfrac{\pi}{2}\sqrt{\tfrac{6}{\Lambda}}.
    \end{aligned}
    \label{eq:hopfcoordinatesrange}
\end{equation}
These coordinates cover the entire $U_1$ chart, which is diffeomorphic to $\mathbb{R}^4$, except at the coordinate singularities $\chi=0$ and $\theta \in \{0, \pi\}$. The Fubini-Study metric~\eqref{eq:FubiniStudy} then becomes
\begin{equation}
    \d s^2_{\text{FS}} = \d \chi^2 +A^2(\chi) \left( \bm{\sigma}_1^2 + \bm{\sigma}_2^2+  B^2(\chi)~\bm{\sigma}_3^2 \right),
    \label{eq:fubinistudysu2forms}
\end{equation}
where
\[
    \begin{aligned}
        A^2(\chi) &= \frac{1}{4} \left(\frac{6}{\Lambda} \right)\sin^2\left(\chi\sqrt{\tfrac{\Lambda}{6}}\right), \\
        B^2(\chi) &= \cos^2\left(\chi\sqrt{\tfrac{\Lambda}{6}}\right),
    \end{aligned}
\]
and
\[
    \begin{aligned}
        \bm{\sigma}_1 &= \cos \psi~\d\theta + \sin \psi \sin \theta~\d \phi, \\
        \bm{\sigma}_2 &= -\sin \psi~\d \theta + \cos \psi \sin \theta~\d \phi, \\
        \bm{\sigma}_3 &= \d \psi + \cos \theta~\d \phi,
    \end{aligned}
\]
are the $SU(2)$ left-invariant 1-forms. The Lorentzian metric~\eqref{eq:LorentzianmetricCP2} can therefore be written in the $U_{1}$ chart as
\[
    \d s^2_{+} = \d s^2_{\mathrm{FS}} - 2\,(k^{+}_\mu \d x^\mu)^2,
\]
where $k^{+}_\mu = k^{+}_\mu(\chi,\psi,\theta,\phi)$ is the line field~\eqref{eq:vectorfieldprojective}, normalized with respect to~\eqref{eq:fubinistudysu2forms}. The surfaces $\chi = \const$ are homeomorphic to $S^3$, so the embedded hypersurface along which we perform the gluing can be selected as
\[
    \Sigma^+: \quad \chi = \chi_1, \qquad 0< \chi_1 < \tfrac{\pi}{2}\sqrt{\tfrac{6}{\Lambda}}.
\]
We then choose intrinsic coordinates on $\Sigma^+$, $x^a = \{\psi,\theta,\phi\}$, so that a basis for the tangent vectors to the hypersurface is
\[
    \bm{e}^+_{\psi} = \frac{\partial}{\partial \psi}, \quad \bm{e}^+_{\theta} = \frac{\partial}{\partial \theta}, \quad \bm{e}^+_{\phi} = \frac{\partial}{\partial \phi}.
\]

Similarly, we consider an embedding of $S^3$ in the annulus $S^3_{I} \times I$, that selects a hypersurface $\Sigma^{-}$ along which we glue $\Sigma^{+}$. To this end we define a coordinate $\tau_0\le \tau \le \tau_1$, with $\tau_0 <\tau_1$ positive constants, on the interval $I$ and adopt as an ansatz for the Riemannian metric a biaxial Bianchi-IX metric
\begin{equation}
    \d s^2_{\mathrm{N}} = \d \tau^2 + a(\tau)^2 \, \left(\bm{\sigma}_1^2 +\bm{\sigma}_2^2 + (1-b^2(\tau))~\bm{\sigma}_3^2 \right).
    \label{eq:riemannianmetricneck}
\end{equation}
The corresponding Lorentzian metric uses the same vector field~\eqref{eq:vectorfieldprojective}:
\[
    \d s^2_{-} = \d s^2_{\mathrm{N}} - 2\, (k^{-}_\mu \d x^\mu)^2,
\]
where the line field $k^{-}_\mu = k^{-}_\mu(\tau, \psi, \theta, \phi)$ is defined on $S^3_I \times I$, with the radial $\chi$-coordinate replaced by $\tau$, and normalized with respect to~\eqref{eq:riemannianmetricneck}. We choose the gluing hypersurface to be the boundary $S^3_{\tau_1} \times \{ \tau_1 \}$:
\[
    \Sigma^{-}: \quad \tau = \tau_1.
\]

The functions $a(\tau)$ and $b(\tau)$ describe the embeddings of the 3-spheres in $S^{3}_{I}\times I$ and are constrained by the gluing conditions. In particular, $b(\tau)$ can be chosen to pass smoothly and monotonically from a constant value $b(\tau_1) = \const$ to zero at  $\tau_0$, with vanishing derivative, $b(\tau_0)=\dot{b}(\tau_0) = 0$. This allows the embedded 3-sphere to be deformed continuously from a squashed 3-sphere at $\tau = \tau_1$ to the round 3-sphere at $\tau = \tau_0$.

For the intrinsic coordinates on $\Sigma^{-}$ we adopt the same set used for $\Sigma^{+}$; hence, from now on, we denote $\Sigma^{+} = \Sigma^{-} \equiv \Sigma$ and $e^{+\mu}_{a} = e^{-\mu}_{a} \equiv e^{\mu}_{a}$.

The preliminary junction conditions~\eqref{eq:firstpreliminaryjunction} imply---given that the line fields $k^{\pm}_{\mu}$ coincide on $\Sigma$---that $\chi_1 = \tau_1$ and, up to a sign,
\[
    a(\tau_1) = \frac{1}{2}\sqrt{\frac{6}{\Lambda}} \sin\left( \chi_1 \sqrt{\tfrac{\Lambda}{6}} \right), \qquad
    b(\tau_1) = \sin\left( \chi_1 \sqrt{\tfrac{\Lambda}{6}} \right).
\]
The normal 1-forms on $\Sigma^{+}$ and $\Sigma^{-}$ are chosen as
\[
    N^{+}_{\mu} = \d \chi, \quad N^{-}_{\mu} = \d \tau,
\]
and are identified on $\Sigma$. Their norms change causal character:
\[
    N^{+\mu}N^{+}_{\mu} = 1-2\left(k^{\chi}\big|_{\Sigma}\right)^2, \qquad N^{-\mu}N^{-}_{\mu} = 1-2\left(k^{\tau}\big|_{\Sigma}\right)^2,
\]
passing from timelike to spacelike through a subset $\Sigma_0 \subset \Sigma$ where they are null. Transverse riggings can be picked as
\[
    l^{+\mu} = \frac{\partial}{\partial \chi}, \quad l^{-\mu} = \frac{\partial}{\partial \tau},
\]
and are likewise identified on $\Sigma$. They satisfy $l^{\mu}N_{\mu} \ne 0$ everywhere on $\Sigma$ and, since $l^{+\mu}$ is directed into $\mathbb{CP}^{2}_{+}$ while $l^{-\mu}$ is directed out of $S^3_I \times I$, they also fulfill the second preliminary gluing condition~\eqref{eq:secondpreliminaryjunction}.

The last step is to impose a sufficient condition for the absence of thin shells or distributional terms in the equations of motion. We observe that if the derivatives of the metric are continuous across $\Sigma$,
\begin{equation}
    \left[g_{\mu\nu , \alpha} \right] = 0,
    \label{eq:jumpderivatives}
\end{equation}
then $\left[\Gamma^{\alpha}_{\mu\nu} \right] = 0$ and condition~\eqref{eq:jumpintrinsiccurvature} is satisfied. A sufficient set of matching conditions that ensure the validity of Equation~\eqref{eq:jumpderivatives} are two separate conditions on the jumps of the derivatives of both the Riemannian metric and the line field:
\[
    \left[ g^{R}_{\mu\nu , \alpha} \right] = 0, \qquad
    \left[ k_{\mu, \alpha} \right] = 0.
\]

Because any discontinuity can arise only in the derivatives with respect to $\chi$ and $\tau$, the requirement of no jumps in the derivatives of the Riemannian metric translates into
\[
    \dot{a}(\tau_1) =  \frac{1}{2}\cos\left(\chi_1\sqrt{\tfrac{\Lambda}{6}}\right), \qquad
    \dot{b}(\tau_1) = \sqrt{\frac{\Lambda}{6}} \cos\left(\chi_1\sqrt{\tfrac{\Lambda}{6}}\right)
\]
where a dot denotes differentiation with respect to $\tau$. These conditions remove any discontinuity in the derivatives of the line field as well, thanks to the preliminary equality $\chi_1=\tau_1$ and the fact that $k^{+}_\mu$ and $k^{-}_\mu$ agree to first order across $\Sigma$. Consequently, under this stronger matching conditions the metric is $C^1$ across $\Sigma$ and no distributional terms appear in the equations of motion.

\subsection{Interface between the annulus and the Morse spacetime}
\label{subsec:secondannulus}

The gluing of the other annulus $S^3_I \times I$ to the Morse spacetime $\mathbb{M}$ is carried out in an analogous way. The Riemannian metric on $\mathbb{M}$ is the one induced on the level sets of the Morse function from $\mathbb{R}^5$ in Eq.~\eqref{eq:riemannianstart}, which in the coordinates
\begin{equation}
    \begin{aligned}
        x_1 &= \xi \cos(\psi), \\
        x_2 &= \xi \sin(\psi)\cos(\theta), \\
        x_3 &= \xi \sin(\psi)\sin(\theta) \cos(\phi), \\
        x_4 &= \xi \sin(\psi)\sin(\theta) \sin(\phi),
    \end{aligned} \quad \quad
    \begin{aligned}
        &0 \le \psi \le \pi, \\
        &0 \le \theta \le \pi, \\
        &0 \le \phi \le 2\pi, \\
        &0 < \xi \le + \infty.
    \end{aligned}
    \label{eq:3-sphericalcoordinates}
\end{equation}
takes the form
\begin{equation}
    \d s^{2}_{\M} = \d s^2_{\mathrm{Flat}} + \xi^2 \left(\cos(2 \psi)\,\d \xi - \xi\sin(2 \psi)\,\d \psi \right)^2,
    \label{eq:riemannianmetricmorse}
\end{equation}
where $\d s^2_{\mathrm{Flat}}$ is the flat Euclidean metric
\[
    \d s^2_{\mathrm{Flat}} = \d \xi^2 + \xi^2 \left(\d \psi^2 + \sin^2(\psi)\,\d \Omega_2^2 \right),
\]
with $\d \Omega_2^2$ the line element on the unit 2-sphere. The Lorentzian metric~\eqref{eq:firstmorselorentzian} is then given in compact form as
\begin{equation}
    \d s^2_{+} = \d s^2_{\M} -2 (v^{+}_{\mu} \d x^{\mu})^2,
\end{equation}
where $v^{+}_{\mu} = v^{+}_{\mu}(\xi, \psi, \theta, \phi)$ is the Morse line field, already normalized with respect to~\eqref{eq:riemannianmetricmorse}. The hypersurfaces $\xi = \const$ are inhomogeneous deformations of $S^3$, and we choose
\[
    \Sigma^{+}: \quad \xi = \xi_1, \qquad \xi_1 > 0,
\]
with a basis of tangent vectors $\{\bm{e}^+_{\psi}, \bm{e}^+_{\theta}, \bm{e}^+_{\phi} \}$. As in the previous gluing, on $S^3_I \times I$, we choose the Riemannian metric ansatz
\begin{align*}
    \d s^2_{\mathrm{N}} = \d \sigma^2 + \alpha^2(\sigma) \big[\d \psi^2 + \sin^2(\psi)\,\d \Omega_2^2\big] \\
    +\beta^2(\sigma) \left(\cos(2 \psi)\,\d \sigma - \sigma\sin(2 \psi)\,\d \psi \right)^2,
\end{align*}
where the coordinate on $I$ is such that $\sigma_0 \le \sigma \le \sigma_1$, with positive constants $\sigma_0 < \sigma_1$; the corresponding Lorentzian metric can be written as
\begin{equation}
    \d s^2_{-} = \d s^2_{\mathrm{N}} -2 (v^{-}_{\mu} \d x^{\mu})^2.
\end{equation}

The hypersurface on which we perform the gluing is the boundary $S^3_{\sigma_1} \times \{ \sigma_1 \}$:
\[
    \Sigma^{-}: \quad \sigma = \sigma_1
\]
on which we choose the basis $\{\bm{e}^-_{\psi}, \bm{e}^-_{\theta}, \bm{e}^-_{\phi} \}$. Here the functions $\alpha(\sigma)$ and $\beta(\sigma)$ serve the same role as $a(\tau)$ and $b(\tau)$ in the previous subsection; in particular, $\beta(\sigma)$ can be chosen to vary smoothly and monotonically from a constant value $\beta(\sigma_1) = \const$ to zero at $\sigma_0$, with vanishing derivative, $\beta(\sigma_0) = \dot{\beta}(\sigma_0) = 0$.

The preliminary junction conditions~\eqref{eq:firstpreliminaryjunction} then require that $\xi_1 = \sigma_1$ and, up to a sign,
\begin{align}
    \alpha(\sigma_1) = \xi_1, \qquad \beta(\sigma_1) = \xi_1.
\end{align}
Similarly, the normal 1-forms on $\Sigma^{+}$ and $\Sigma^{-}$ are
\[
    N^{+}_{\mu} = \d \xi, \qquad N^{-}_{\mu} = \d \sigma,
\]
and are identified on $\Sigma$. Transverse riggings are
\[
    l^{+\mu} = \frac{\partial}{\partial \xi}, \qquad l^{-\mu} = \frac{\partial}{\partial \sigma},
\]
again identified on $\Sigma$. They satisfy $l^{\mu}N_{\mu} \ne 0$ everywhere on $\Sigma$ and have the property that $l^{+\mu}$ is directed into $\mathbb{M}_{+}$, while $l^{-\mu}$ is directed out of $S^3_I \times I$, thereby satisfying the second preliminary gluing condition~\eqref{eq:secondpreliminaryjunction}. Applying the same sufficient $C^{1}$-matching conditions as in the previous subsection leads to
\begin{align}
    \dot{\alpha}(\sigma_1) = 1, \qquad \dot{\beta}(\sigma_1) = 1.
\end{align}

\subsection{Homotopy between the vector fields along the neck}
\label{subsec:homotopy}

After gluing the two annuli $S^{3}_{I}\!\times\! I$ to the spacetimes $\mathbb{M}$ and $\mathbb{CP}^{2}$, respectively, we focus on finding a homotopy between the two line fields $v^{-}_{\mu}$ and $k^{-}_{\mu}$ along the neck $S^{3}\!\times\! I$, where $S^{3}$ is the constant-radius round $3$-sphere. In this section, $\eta_{-1}$, $\eta_{-1/2}$, $\eta_{0}$, $\eta_{1/2}$, and $\eta_{1}$ will be arbitrary points such that
\[
    \eta_{-1} < \eta_{-1/2} < \eta_0 < \eta_{1/2} < \eta_{1}.
\]
We assume that the interval $I$ has coordinate $\eta\in[\eta_{-1},\eta_{1}]$ with $\eta_{-1}<0<\eta_{1}$, and take the Riemannian metric to be
\begin{equation}
    \d s^{2}_{\mathrm N}
    =\d\eta^{2}+R^{2}\left(\d\psi^{2}
    +\sin^{2}\!\psi\,\d\Omega_{2}^{2}\right),
    \label{eq:constantRiemannianmetric}
\end{equation}
where $R>0$ is the constant radius of $S^{3}$ and
\[
    0\le\psi\le\pi,\qquad
    0\le\theta\le\pi,\qquad
    0\le\phi\le2\pi.
\]

This metric can be glued without thin shells to the two annuli provided that the radial profile functions $a(\tau)$ and $\alpha(\sigma)$, which determine the radii of $S^{3}_{\tau_{1}}$ and $S^{3}_{\sigma_{1}}$, have smooth transitions
\begin{equation}
    2a(\tau_{0})=\alpha(\sigma_{0})=R, \qquad
    \dot a(\tau_{0})=\dot\alpha(\sigma_{0})=0.
    \label{eq:a_conditions}
\end{equation}

This assumption simplifies our analysis by eliminating potential complications in the deformation of the vector fields; in particular, because the underlying Riemannian metric is fixed, the conditions \eqref{eq:firstpreliminaryjunction} and \eqref{eq:secondpreliminaryjunction} and $\bigl[g^{R}_{\mu\nu,\alpha}\bigr]=0$ are automatically satisfied. Consequently, the only remaining task is to continuously deform the line field $\bigl.k^{-}_{\mu}\bigr|_{S^{3}}$—with its radial component reversed, since the gluing diffeomorphism is orientation-reversing—into the Morse line field $\bigl.v^{-}_{\mu}\bigr|_{S^{3}}$ while keeping both the fields and their derivatives smooth.

To clarify the construction we consider the problem in $\mathbb{R}^{4}$, equipped with its standard Euclidean metric.  In Cartesian coordinates $\{x,y,z,t\}$ the vector field
\begin{equation}
    \label{eq:vectorfieldprojectiveradial}
    \begin{aligned}
        w_{r}^{\mu}=\, &
         \bigl(-x^{2}-y^{2}+z-\tfrac{2x(ty+xz)}{x^{2}+y^{2}+z^{2}+t^{2}}\bigr)\,\partial_{x}\\
        &+\bigl(t-\tfrac{2y(ty+xz)}{x^{2}+y^{2}+z^{2}+t^{2}}\bigr)\,\partial_{y}\\
        &-\bigl(ty+xz+\tfrac{2z(ty+xz)}{x^{2}+y^{2}+z^{2}+t^{2}}\bigr)\,\partial_{z}\\
        &-\bigl(tx-yz+\tfrac{2t(ty+xz)}{x^{2}+y^{2}+z^{2}+t^{2}}\bigr)\,\partial_{t}
    \end{aligned}
\end{equation}
is obtained by radially reflecting the field~\eqref{eq:vectorfieldprojective}---the subscript $r$ stands for ``radial reflection''. Note that the constraint $x^{2}+y^{2}+z^{2}+t^{2}=R^{2}$ is understood from now on, and $w_{r}^{\mu}$ is not normalized.
By contrast, the Morse vector field is
\[
    v^{\mu}=-x\,\partial_{x}+y\,\partial_{y}+z\,\partial_{z}+t\,\partial_{t}.
\]

Both fields vanish only at the origin, so neither has singularities on $S^{3}$. Normalized with respect to the Euclidean metric they define the maps
\[
  \frac{w_{r}^{\mu}}{\lVert w_{r}\rVert},
  \frac{v^{\mu}}{\lVert v\rVert}: S^{3}\longrightarrow S^{3},
\]
and a degree calculation confirms that each map has degree~$-1$. Therefore, by the Hopf theorem~\cite{Milnor1965} there exists a homotopy
\[
    H(\eta,x^{\mu}): [\eta_{-1},\eta_1]\times S^{3}\longrightarrow S^{3},
\]
with
\[
    H(\eta_{-1},x^{\mu})=\frac{w_{r}^{\mu}}{\lVert w_{r}\rVert},\qquad
    H(\eta_1,x^{\mu})=\frac{v^{\mu}}{\lVert v\rVert},
\]
that is nonsingular for all $\eta\in[\eta_{-1},\eta_1]$ and $x^{\mu}\in S^{3}$. Our goal is to build such a homotopy explicitly in $\mathbb{R}^{4}$ and then interpret it on the neck $S^{3}\times I$.

We first decompose $w_{r}^{\mu}$ as
\begin{equation}
    \label{eq:decomposition1}
    \begin{aligned}
        &w_{r}^{\mu}=\\&\underbrace{-(x^{2}+y^{2}-z)\,\partial_{x}+t\,\partial_{y}-(ty+xz)\,\partial_{z} -(tx-yz)\,\partial_{t}}_{w_{1,r}^{\mu}}\\
                   &\underbrace{-\tfrac{2(ty+xz)}{x^{2}+y^{2}+z^{2}+t^{2}} \bigl(x\,\partial_{x}+y\,\partial_{y}+z\,\partial_{z}+t\,\partial_{t}\bigr)}_{w_{2,r}^{\mu}}.
    \end{aligned}
\end{equation}
Because $w_{1,r}^{\mu}$ shares the same zero and degree as $w_{r}^{\mu}$, we define the first homotopy as
\[
    H_{1}(\eta,x^{\mu})=
    \frac{w_{1,r}^{\mu}+\bigl(1-b(\eta)\bigr)w_{2,r}^{\mu}}
       {\bigl\lVert w_{1,r}^{\mu}+\bigl(1-b(\eta)\bigr)w_{2,r}^{\mu}\bigr\rVert}, \quad \eta \in [\eta_{-1},\eta_{-1/2}],
\]
where $b:[\eta_{-1},\eta_{-1/2}]\to[0,1]$ is a smooth, monotone function with $b(\eta_{-1})=0$ and $b(\eta_{-1/2})=1$. Thus $H_{1}$ deforms $w_{r}^{\mu}$ to $w_{1,r}^{\mu}$ while remaining nonsingular.

Next we introduce
\[
    s^{\mu}\equiv
    (-x^{2}-y^{2}+z)\,\partial_{x}+t\,\partial_{y}-(ty+x)\,\partial_{z}
    -(tx-y)\,\partial_{t},
\]
which again has the same zero and degree as $w_{1,r}^{\mu}$. A second homotopy is
{\small\[
    H_{2}(\eta,x^{\mu})=
    \tfrac
    {(z-x^{2}-y^{2})\partial_{x}+t\partial_{y}-t(y\partial_{z}+x\partial_{t})-(c(z-1)+1)(x\partial_{z}-y\partial_{t})}{\bigl\lVert(z-x^{2}-y^{2})\partial_{x}+t\partial_{y}-t(y\partial_{z}+x\partial_{t})-(c(z-1)+1)(x\partial_{z}-y\partial_{t})\bigr\rVert},
\]}
with $c: [\eta_{-1/2}, \eta_{0}] \to [0,1]$, $c(\eta_{-1/2}) = 1$, and $c(\eta_{0}) = 0$. This deforms $w_{1,r}^{\mu}$ to $s^{\mu}$. Decompose $s^{\mu}$ further as
\[
    s^{\mu}=s_{1}^{\mu}+s_{2}^{\mu},
    \qquad
    s_{1}^{\mu}=z\,\partial_{x}+t\,\partial_{y}-x\,\partial_{z}+y\,\partial_{t}.
\]
Because $s_{1}^{\mu}$ has the desired degree, define
\[
    H_{3}(\eta,x^{\mu})=
    \frac{s_{1}^{\mu}+\bigl(1-b(\eta)\bigr)s_{2}^{\mu}}
       {\bigl\lVert s_{1}^{\mu}+\bigl(1-b(\eta)\bigr)s_{2}^{\mu}\bigr\rVert}, \quad \eta \in [\eta_0,\eta_{1/2}],
\]
so that $H_{3}(\eta_{1/2},x^{\mu})=s_{1}^{\mu}/\lVert s_{1}^{\mu}\rVert$. Finally, $s_{1}^{\mu}$ is related to $v^{\mu}$ by successive rotations in the $xz$, $yt$, and $xy$ planes:
\[
  R = \begin{pmatrix}
        0 & 0 & 1 & 0 \\
        0 & 0 & 0 & 1 \\
        1 & 0 & 0 & 0 \\
        0 & 1 & 0 & 0
    \end{pmatrix}  \in SO(4), \quad Rs_1 = v.
\]
Since $SO(4)$ is connected, we choose a smooth path $R(\alpha(\eta))$ with $R(\alpha(\eta_{1/2})) = \mathbb{I}$ and $R(\alpha(\eta_1))= R$ and set
\[
    H_4(\eta, x^\mu) = R(\alpha(\eta))~\frac{s_1^\mu}
       {\bigl\lVert s_1^\mu \bigr\rVert}, \quad \eta \in [\eta_{1/2},\eta_1].
\]
The functions $b(\eta)$, $c(\eta)$, and $\alpha(\eta)$ are chosen smooth with $\dot b$, $\dot c$, and $\dot\alpha$ vanishing at the matching points between the homotopies. Because the entire deformation is smooth and nonsingular, the resulting Lorentzian metric on the neck remains free of singularities, and no further explicit expression is required for our purposes.

\section{Conclusions}
\label{sec:conclusions}

In this work, we have constructed an explicit Lorentzian cobordism for a topology‐changing spacetime that describes the nucleation of a wormhole. Although topological surgery requires the spacetime to form a singularity at the transition point, by employing the Misner trick and taking the connected sum with $\mathbb{CP}^{2}$ we obtained a non-singular cobordism in which the singularity is replaced by a region containing CTCs.

This connected sum can be thought of as a ``Lorentzian blow-up'', or more precisely an anti-complex blow-up~\cite{scorpan2022wild} of $\mathbb{M}$ at the point $p$ corresponding to the naked singularity. The event $p$ is then replaced by $\mathbb{CP}^1$ and the gradient of the Morse function is smoothly extended over the blow-up.

Indeed, $\mathbb{CP}^{2}\setminus\text{int}(D^4)$ is nothing more than the disk bundle $\mathcal{O}(1) \rightarrow \mathbb{CP}^1$ whose zero-section is a copy of $\mathbb{CP}^1$ on which the fibers of the Hopf projection collapse, and, as we saw in Section~\ref{subsec:complexprojectiveplane}, it is not part of the singular holomorphic foliation---nor could it be, since the no-hair theorem would force singularities if the field were tangent. This $2$-sphere can be viewed as a ``bubble of nothing'' that carries a non-trivial, topologically persistent gravitational charge that cannot be transformed away.

The interpretation of this desingularization can be framed through the expression ``physicist's topology'' in Ref.~\cite{PhysRevD.41.1116}: on large scales (compared to $\Lambda$) our cobordism admits a time function and is therefore stably causal---a property that, in the case of topological transitions, would normally require the formation of a singularity---whereas on small scales one can probe the topology of the spacetime and thus detect the compact $4$-manifold that contains the CTCs.

The resulting picture is that of a wormhole that exists only for times $t>0$ and whose would-be singular neighborhood is replaced by a pocket, namely the $\mathbb{CP}^{2}$: an observer or a light ray that would otherwise end up at the naked singularity enters this pocket and either exits after the wormhole has already formed or follows a closed causal curve.

The same picture applies to the scalar curvature of the initially separate regions of spacetime: rather than diverging until they ``touch'' and create a singularity, these regions are twisted inside $\mathbb{CP}^{2}$ and re-emerge as the wormhole throat. As anticipated, however, this spacetime violates all the standard energy conditions.

Nevertheless, the construction may shed light on the possible relationship between causality violations and the loss of predictability caused by singularities.

Already in Ref.~\cite{PhysRevD.58.023501} it was shown that closed causal curves can replace the Big Bang singularity in cosmological models. Moreover, in Refs.~\cite{IVANENKO1982341, Sardanashvily1985, Rosquist1983, Caustics} it was pointed out that singularities in general relativity can arise---when the Lorentzian metric is written in terms of a Riemannian metric and a vector field---either from the Riemannian metric (meaning that the manifold topology is incompatible with the metric topology) or from the vector field.

In the latter case it may be possible to interpret a class of singularities via their index and thus trace them back to topological transitions. The formation of such singularities would then signal a topology change rather than a failure of the theory, analogous to the singularities that appear in other geometric flows.

In future work, our aim is therefore to characterize all relevant topological surgeries, including the $1$-surgery, in the Lorentzian context and to develop an accompanying physical interpretation. The control provided by topological surgery over singularity formation should clarify whether---and under what circumstances---such metrics can be desingularized via CTCs.

\begin{acknowledgments}
    A. P., S. A., L. H. K., and S. L. thank Alexandros Kehagias for valuable discussions. A. P. also thanks José M.M. Senovilla, Tina Harriott, and Jeff G. Williams for their insightful suggestions. B. S. acknowledges the support of the Natural Sciences and Engineering Research Council of Canada (NSERC), RGPIN-2024-04063.
\end{acknowledgments}

\bibliography{WormholeNucleation}
\end{document}